\documentclass[12pt,a4paper,oneside]{extarticle}

\usepackage{cmap}	
\usepackage[T2A]{fontenc}
\usepackage[utf8]{inputenc}
\usepackage[english]{babel}
\usepackage{setspace}

\usepackage{amsfonts}
\usepackage{amsthm}
\usepackage{amstext}
\usepackage{amsmath}
\usepackage{amssymb}
\usepackage{booktabs}
\usepackage{indentfirst} 
\usepackage{enumerate}
\usepackage[font=small,skip=0pt]{caption}

\usepackage[numbers,square,sort]{natbib} 
\bibpunct{(}{)}{,}{a}{,}{,} 

\usepackage{graphics}
\usepackage[pdftex]{graphicx}
\usepackage{color}
\usepackage{lscape} 
\usepackage{bm} 
\usepackage{tikz}
\usepackage{titlesec} 
\usepackage{float} 
\usepackage{longtable}
\usepackage{booktabs}
\usepackage[center]{subfigure}
\setcounter{lofdepth}{2}
\usepackage{captcont}
\usepackage{tabularx}
\usepackage{diagbox} 
\usepackage{rotating}
\usepackage{geometry}

\geometry{top=2cm,bottom=2cm, right=2cm,left=2cm}
\geometry{includehead}
\geometry{includefoot}
\usepackage[super]{nth}
\usepackage{comment} 
\usepackage{textcomp}

\newtheorem{ass}{Assumption}[section]
\newtheorem{lem}{Lemma}[section]
\newtheorem{thm}[lem]{Theorem}
\newtheorem{cor}[lem]{Corollary}
\newtheorem{prop}[lem]{Proposition}

\theoremstyle{definition}

\newtheorem{example}{Example}[section]
\newtheorem{remark}{Remark}[section]
\renewcommand{\baselinestretch}{1.2}
\makeatletter
\makeatother

\frenchspacing

\renewcommand\appendix{\par
  \setcounter{section}{0}
  \setcounter{subsection}{0}
  \setcounter{figure}{0}
  \setcounter{table}{0}
  \renewcommand\thesection{Appendix \Alph{section}}
  \renewcommand\thefigure{\Alph{section}\arabic{figure}}
  \renewcommand\thetable{\Alph{section}\arabic{table}}
  
}

\newtheorem{proposition}{Proposition}

\newtheorem{lemma}{Lemma}

\begin{document}

\renewcommand{\baselinestretch}{1.24}
\title{New robust inference for predictive regressions\footnote{\footnotesize{We thank the Editor Peter C.B. Phillips, the Co-Editor Anna Mikusheva and three anonymous referees for many helpful comments and suggestions. We are also grateful to Walter Distaso, Jean-Marie Dufour, Siyun He, Nour Meddahi, Mikkel Plagborg-M\o{}ller, Aleksey Min, Ulrich K. M{\" u}ller, Rasmus S. Pedersen, Artem Prokhorov, Rogier Quaedvlieg, Robert Taylor, Alex Maynard, Neil Shephard, Yang Zu, and the participants at the Center for Econometrics and Business Analytics (CEBA, St. Petersburg University) and University of Nottingham seminar series, the session on Econometrics of Time Series at the 12th World Congress of the Econometric Society and iCEBA-2021, 2022 conferences for helpful discussions and comments. Rustam Ibragimov and Anton Skrobotov' research for this paper was supported in part by a grant from the Russian Science Foundation (Project No. 22-18-00588). 
Jihyun Kim is grateful to the French Government and the ANR for support under the Investissements d'Avenir program; Grant ANR-17-EURE-0010. Address correspondence to Jihyun Kim, School of Economics, Sungkyunkwan University, Seoul 03063, South Korea; e-mail: kim.jihyun@skku.edu.}}}
\author{
Rustam Ibragimov$^{a,d}$, Jihyun Kim$^b$, Anton Skrobotov$^{c,d}$ \\
{\small {$^{a}$ Imperial College Business School, Imperial College London}}\\
{\small {$^{b}$ Sungkyunkwan University and Toulouse School of Economics}}\\
{\small {$^{c}$ Russian Presidential Academy of National Economy and Public Administration}}\\
{\small {$^{d}$ Center for Econometrics and Business Analytics, St. Petersburg University}}
}
\date{}
\date{March 23, 2023}
\maketitle


\begin{abstract}
We propose a robust inference method for predictive regression models under heterogeneously persistent volatility as well as endogeneity, persistence, or heavy-tailedness of regressors. 
This approach relies on two methodologies, nonlinear instrumental variable estimation and volatility correction, which are used to deal with the aforementioned characteristics of regressors and volatility, respectively. Our method is simple to implement and is applicable both in the case of continuous and discrete time models. According to our simulation study, the proposed method performs well compared with widely used alternative inference procedures in terms of its finite sample properties in various dependence and persistence settings observed in real-world financial and economic markets. 
\end{abstract}

\noindent\emph{Keywords}: predictive regressions, robust inference, near nonstationarity, heavy tails, nonstationary volatility, endogeneity.
\medskip

\noindent\emph{JEL Codes}: C12, C22


\newpage
\renewcommand{\baselinestretch}{1.4}

\section{Introduction}
Many papers in the literature have focused on econometric analysis of predictive regressions for stock returns (see \citeauthor{Phillips2015}, \citeyear{Phillips2015}, for an up-to-date review). Predictive regression data is known to have several problematic characteristics, especially in statistical inference of stock return predictability. First, it is widely believed that the popular regressors, such as dividend-price and earnings-price ratios, used in the predictive regressions have near unit roots and their innovations are correlated with stock returns in the long run. The characteristics of the regressors, which are persistence and endogeneity jointly cause standard hypothesis tests to become substantially biased (see \cite{stambaugh1999predictive}). Second, there is some evidence that supporting volatility of stock returns is stochastic and highly persistent (see, e.g., \cite{jacquier2004bayesian} and \cite{hansen2014estimating}). \cite{cavaliere2004testing} shows that persistent stochastic volatility may cause substantial size distortions on standard tests developed mostly under the assumption that a volatility process is stationary with a constant unconditional mean, such as stationary GARCH-type models. Lastly, there are several other characteristics of predictive regression data that include heavy-tailedness of regressors as well as jumps, structural breaks and regime switching in volatility. These characteristics may also yield jointly or individually a significant distortion of standard hypothesis tests for predictive regressions.

In this paper, we propose a new method for robust inference on parameters of predictive regression models under the aforementioned characteristics of predictive regression data. Our approach relies on a simple nonlinear instrumental variable (IV) estimation and a nonparametric volatility correction. The nonlinear IV estimator in our approach is an IV estimator with the instrument being the sign transformation of the regressor. This particular IV estimator was first proposed by \cite{cauchy1836lxxviii}, and is called the Cauchy estimator. As is known in the literature, the use of the instrument can effectively eliminate the problems caused by the persistent endogeneity, heavy-tailedness and other problematic characteristics of the regressors (see \cite{SoShin}, \cite{CJP2016} and \cite{kim-meddahi-2019}). On the other hand, volatility correction is used to deal with the problems caused by the presence of heterogeneity and persistence in stock return's volatility. As for the volatility correction, we consider a standard kernel-based nonparametric estimator of volatility.

Many authors have studied the issue of persistent endogeneity of regressors in predictive regressions. Among many of them, \cite{CampbellYogo2006}, \cite{ChenDeo2009} and \cite{PhillipsMagdalinos2009} have proposed tests of return predictability, which are aimed at dealing with persistent and endogenous regressors. Though their tests perform well under the presence of persistent endogeneity of regressors, they are not expected to deal with other problematic characteristics of predictive regression data effectively. Our simulation study shows they have serious size distortions under the null of no predictability when volatility is persistent or incorporates structural breaks or regime switching. In contrast, the robustness of our approach is quite evident. Our approach always yields almost exact sizes in a variety of designs considered in our simulation study. Moreover, the robustness of our approach is obtained with no significant loss of power. The discriminatory powers of our test are comparable to the tests by \cite{CampbellYogo2006} and \cite{ChenDeo2009}, which are optimal for the basic Gaussian model.

Our work is closely related to \cite{CJP2016}, who propose an inference approach for predictive regressions. Similar to our method, their approach relies on the Cauchy estimator to eliminate the problems caused by the problematic characteristics of the regressors. They also use a nonparametric volatility correction. However, their approach for volatility correction is quite different from ours, and its applicability is limited to a predictive regression equipped with appropriate high frequency data since their method and theory are developed in a continuous time framework. More precisely, their volatility correction, called the time change, requires uniformly consistent estimation of a quadratic variation of a stock price for which high frequency observations of the stock price are necessary. Consequently, their approach requires the assumption that the sampling interval decreases to zero, and applications of their method on relatively low frequency data, i.e. monthly or quarterly data, are largely restricted. However, predictive regressions are often estimated using monthly or quarterly data. In contrast, our method can be applied to a discrete time model and a discrete sample collected from an underlying continuous time model as in \cite{CJP2016}. Our simulation study shows that both our method and the method by \cite{CJP2016} perform well and have good size and power performances under continuous time settings considered in this paper. However, unlike our method, the \cite{CJP2016} method is not applicable under discrete time settings. Therefore, we may say that our method is more flexible and widely applicable since it can be applied to both high and low frequency data.

The rest of the paper is organized as follows. Section 2 introduces the predictive regression models, persistent volatility, and the Cauchy estimator. Section 3 proposes the robust inference method and presents its asymptotic properties. Section 4 generalizes the baseline predictive regression models, which have one persistent volatility factor, to have a two-factor volatility, where one factor is persistent and the other is transient such as a stationary GARCH process. Section 5 provides numerical results on finite sample properties of the proposed robust inference approach. Section 6 makes some concluding remarks. 

The online supplementary appendix  provides a discussion of the Cauchy estimator and general nonlinear IV estimators with the relevant asymptotic results that, in particular, point to the importance and usefulness of the Cauchy estimator (Appendix A); useful auxiliary results (Appendix B); the proofs of the main results in the paper (Appendix C); and some additional simulation results on finite sample performance of inference approaches dealt with (Appendix D).

\section{Predictive Regressions}

\subsection{Research Problems and Models}

Throughout the paper, we consider $(\mathcal{F}_t)$-adapted processes defined on a filtered probability space $(\Omega,\mathcal{F}, (\mathcal{F}_t)_{t\geq0}, P)$ equipped with an increasing filtration $(\mathcal{F}_t)$ of sub-$\sigma$-fields of $\mathcal{F}$. We consider a test for no predictability of the process $(y_t)$ (e.g., the time series of excess stock returns) based on some covariate process $(x_t)$ (e.g., the time series of price-to-dividend ratios). We consider the linear predictive regression model
\begin{align}
y_t &= \alpha + \beta x_{t-1} + u_t, \;\; t=1, ..., T,\label{PredRegr1}
\end{align}
where $(u_t)$ is a martingale difference sequence (MDS) with respect to $(\mathcal{F}_t)$. In particular, $(u_t)$ is conditionally heteroskedastistic. Following the usual specification for a volatility model, we assume that
\begin{align*}
u_t = v_t \varepsilon_t,
\end{align*}
where $(v_t)$ is a volatility process and $(\varepsilon_t)$ is an MDS with respect to $(\mathcal{F}_t)$.

\begin{ass}\label{assumption-mds}
(a) $(v_t)$ is $(\mathcal{F}_{t-1})$-adapted and is defined on $[\underline{v},\bar{v}]$ for some $0<\underline{v}<\bar{v}<\infty$, (b) $E(\varepsilon_t^2|\mathcal{F}_{t-1})=1$, and (c) $\sup_{t\geq1}E(\varepsilon_t^4|\mathcal{F}_{t-1})<\infty$.
\end{ass}

The conditions (a)-(b) in Assumption \ref{assumption-mds} are not stringent, and are required for the identification of the conditional variance of $u_t$. In particular, the conditional variance of $u_t$ given $\mathcal{F}_{t-1}$ is well identified and we have $E(u_t^2 | \mathcal{F}_{t-1}) = v_t^2$. Our test relies on uniform convergence results for a nonparametric estimator of the volatility process $(v_t)$. The condition (c) is used to obtain a uniform convergence rate of the nonparametric estimator of the volatility process $(v_t)$. Note that Assumption \ref{assumption-mds} implies $\sup_{t\geq1}E(u_t^4)<\infty$, and hence, it rules out a predictive regression model having a heavy-tailed  regression error $(u_t)$. 

As for a nontrivial example, we let $v_t = f(z_{t-1})$ and $\varepsilon_t \sim iid \mathbb{N}(0,1)$, where $f$ is a positive function and $z_t$ is an $\mathcal{F}_t$-adapted process. Then $(v_t,\varepsilon_t)$ satisfies Assumption 2.1. If we assume that $f$ is bounded above, then $u_t = v_t \varepsilon_t$ satisfies $\sup_{t\leq 1} E(|u_t|^4 | \mathcal{F}_{t-1})<\infty$ a.s. for any $\mathcal{F}_t$-adapted process $(z_t)$ since $\varepsilon_t \sim iid\mathbb{N}(0,1)$. Moreover, $u_t$ is not uniformly bounded, i.e., there does not exist $M$ such that $|u_t|\leq M < \infty$ with probability one even if $f$ is bounded above, since the standard normal random variable $\varepsilon_t$ is not uniformly bounded. Further examples of martingales with bounded conditional moments of MDS summands are provided by more general martingale transforms and randomly stopped sums of independent r.v.'s (see Remark 3.3 in \cite{VPRI}).

The hypothesis of no predictability of $(y_t)$ corresponds to the hypothesis $\beta=0$ in predictive model \eqref{PredRegr1}. It is well-known that the standard OLS-based $t$-test is not robust with respect to a wide range of statistical problems in predictive regression data. For instance, the standard OLS estimator of $\beta$ is not asymptotically Gaussian under $H_0: \beta=0$ if $(x_t)$ is endogenous and (nearly) nonstationary (see \citeauthor{elliott1994inference}, \citeyear{elliott1994inference}, \citeauthor{PhillipsNear}, \citeyear{PhillipsNear}, \citeauthor{PM}, \citeyear{PM}) or is stationary with infinite second moment (e.g., \citeauthor{GrangerOrr}, \citeyear{GrangerOrr}, \citeauthor{EKM}, \citeyear{EKM}, \citeauthor{ibragimov2015heavy}, \citeyear{ibragimov2015heavy}, and references therein), even when there is no heteroskedasticity  and $v_t = \sigma$ is constant.\footnote{The endogeneity of the covariate $x_t$ refers to the existence of nonzero long run covariance between innovations of $u_t$ and $x_t.$} 

In the case of predictive regressions for stock returns, the returns process $(y_t)$ is widely believed to have time-varying stochastic volatility (see \cite{CJP2016} and references therein). Moreover, the volatility process is typically very persistent. For example, many authors have found that the autoregressive parameter for the dynamics of the volatility process is close to one under some appropriate functional transformations. In particular, \cite{jacquier2004bayesian} and \cite{hansen2014estimating}, provide convincing evidence that the logarithm of the volatility process follows a near unit root process for a wide range of equity and foreign exchange rate time series. It is well known that the presence of  persistent volatility may cause the distribution of the standard $t$-statistic to be far from standard normal, yielding a substantial distortion in testing relying on standard normal critical values (see, e.g., \cite{chung2007nonstationary}, \cite{CJP2016} and \cite{KimPark1}).

\subsection{The Cauchy Estimator}

Our inference method is based on the Cauchy estimator. To effectively explain the main idea, we consider model \eqref{PredRegr1} with no intercept term, i.e., $\alpha=0$, and introduce the Cauchy estimator $\check{\beta}$ for $\beta$, which is given by
\begin{align*}
\check{\beta} 
= \left(\sum_{t=1}^T  |x_{t-1}| \right)^{-1}\sum_{t=1}^T sign(x_{t-1}) y_{t},
\end{align*}
where $sign(\cdot)$ is the sign function defined as $sign(x)=1$ for $x\geq0$, and $sign(x)=-1$ for $x<0$. Thus, $\check{\beta}$ is an instrumental variable (IV) estimator with the instrument $sign(x_{t-1}).$  This particular IV estimator was first proposed by \cite{cauchy1836lxxviii}. See, among others, \cite{SoShin}, \cite{PPC}, \cite{CJP2016} and \cite{kim-meddahi-2019} for econometric applications of the Cauchy estimator.  

Under Assumption \ref{assumption-mds} (b), not only $\varepsilon_t$, but also $sign(x_{t-1})\varepsilon_t$, hereafter denoted by $\xi_t$, is an MDS with respect to the filtration $(\mathcal{F}_t)$ with $E(\xi_t^2|\mathcal{F}_{t-1})=1$. Let us define a continuous time partial sum process $(W_T(r), 0\leq r\leq 1)$ by
\begin{equation}\label{BMs}
W_T(r) = \frac{1}{T^{1/2}}\sum_{t=1}^{[Tr]} \xi_t.
\end{equation}
The stochastic process $(W_T(r))$ takes values in $\mathbf{D}_{\mathbb{R}}[0,1]$, where $\mathbf{D}_E[0,1]$ denotes the space of c\`{a}dl\`{a}g functions from $[0,1]$ to $E\subset \mathbb{R}^d$ for some positive integer $d.$ Under Assumption \ref{assumption-mds} (b)-(c), the partial sum process $(W_T(r))$ follows the usual functional central limit theorem (CLT) for martingales, see, e.g., Theorem 18.2 of \cite{billingsley1986convergence}, that is,
\[
W_T \to_d W
\]
in $\mathbf{D}_\mathbb{R}[0,1]$, where $W$ is a standard Brownian motion. The convergence $W_T \to_d W$ is to be interpreted as the weak convergence in the probability measures on $\mathbf{D}_\mathbb{R}[0,1]$. In our context, it is more convenient, and so is assumed, to endow $\mathbf{D}_E[0,1]$ with the uniform topology rather than the usual Skorohod topology (see \citet[pp. 150-152]{billingsley1986convergence}).

The use of the Cauchy estimator in our inference method is motivated by the above functional CLT for $(W_T(r))$. To convey the main idea,  assume that the volatility process $(v_t)$ is observable. Recall that the numerator of $\check{\beta}$ is $\sum_{t=1}^T sign(x_{t-1})y_t$, and it becomes $\sum_{t=1}^T v_t \xi_t$ under $\beta=0$, where $\xi_t = sign(x_{t-1})\varepsilon_t$. One then may construct a robust test for the null hypothesis $H_0: \beta=0$ against the alternative $H_1: \beta\neq 0$ using the following statistic
\begin{align}\label{test-infeasible}
\tau(v) = \frac{1}{T^{1/2}}\sum_{t=1}^T sign(x_{t-1})\frac{y_t}{v_t}.
\end{align}
In particular, for $\beta=0$, 
\[
\tau(v) = \frac{1}{T^{1/2}} \sum_{t=1}^T \xi_t = W_T(1)  \to_d W(1) = \mathbb{N}(0,1).
\]
In practice, however, the volatility process $(v_t)$ is not observable, and hence, the above inference procedure using $\tau(v)$ is not feasible. In Section 3, a feasible version of the Cauchy based inference method above will be fully addressed under our construction of the persistent volatility introduced in Section 2.3.

\subsection{Persistent Volatility}

This subsection presents a time-varying and persistent volatility model, which is a well-known stylized fact for many financial returns. We define a stochastic process $\sigma_T$ on $\mathbf{D}_{\mathbb{R}^+}[0,1]$ as $\sigma_T(r) = v_{[Tr]}$. We assume that $\sigma_T$ has a limiting process $\sigma$ defined over $0\leq r\leq 1$ such that $(W_T, \sigma_T)$ converges to $(W, \sigma)$ jointly, where $(W_T)$ is defined as in \eqref{BMs}. Specifically, we consider the following assumption.

\begin{ass}\label{assumption-limit}
There exists a positive process $\sigma$ on $\mathbf{D}_{\mathbb{R}^+}[0,1]$ such that
\[
(W_T, \sigma_T) \to_d (W, \sigma)
\]
in $\mathbf{D}_{\mathbb{R} \times \mathbb{R}^+}[0,1]$, where $W$ is a standard Brownian motion with respect to the filtration to which $W$ and $\sigma$ are adapted. 
\end{ass}

The above assumptions hold for wide classes of models, such as models with nonstationary volatility, regime switching, and structural breaks in volatility. It also holds for the processes with $v_t = \sigma(t/T),$ where $\sigma(s)$ is a deterministic function on $[0, 1],$ considered by \cite{cavaliere-taylor-2007}, \cite{xu-phillips-2008} and \cite{harvey-leybourne-zu}, among others.\footnote{Assumption 2.2 is a simplified version of the condition $v_{[Tr]}/a_T\to_d \sigma_r$ considered by Assumption 2 of \cite{cavaliere-taylor-2009}. We rule out the explosive volatility settings with $a_T\to\infty$, and consider the stable volatility processes with $a_T=1$ for simplicity. The results in the paper can be obtained under the explosive volatility assumption with $a_T\to\infty$ at the cost of a more involved analysis.} The assumptions also hold for processes with nonstationary volatilities considered by \cite{hansen1995regression} and  \cite{chung2007nonstationary}, who assume that $v_t^2$ is a smooth positive transformation of a (near) unit root process, i.e., $v_t^2 = \sigma^2(T^{-1/2}z_{t-1})$ for a unit root process $z_t$. One should note that Assumption \ref{assumption-limit} is more general than the volatility models considered in the aforementioned literature and, in particular, allows the volatility to be stochastically discontinuous, which are desirable properties for modelling  financial volatility having structural breaks or regime switching. 

Assumptions \ref{assumption-mds} and \ref{assumption-limit} rule out some cases of globally homoskedastic processes, such as stationary GARCH processes. In Section 4, we generalize the model to have a two-factor volatility, one for a nonstationary long run component and the other one for a stationary short run component, and show the validity of our robust method introduced in Section 3 for the generalized model with the two-factor volatility.

Under our construction of the persistent volatility, the asymptotic behavior of the Cauchy estimator can be obtained immediately. The asymptotics of the Cauchy estimator $\check{\beta}$ are mainly determined by $\sum_{t=1}^T v_t\xi_t$ since $\check{\beta} = \beta + \sum_{t=1}^T v_t\xi_t/\sum_{t=1}^T |x_{t-1}|$. Note that $T^{-1/2}\sum_{t=1}^{[Tr]} v_t\xi_t = \int_0^r \sigma_T(s)dW_T(s)$ for $r\in[0,1]$, and the weak convergence of the stochastic integral $\int \sigma_T(r)dW_T(r)$ is well documented in the literature (see, e.g., Theorem 2.1 of \cite{hansen-1992} and Theorem 4.6 of \cite{KP}), and we have $(\int \sigma_T(r) dW_T(r)) \to_d (\int \sigma(r) dW(r))$. 

\begin{lem}\label{lemma-cauchy}
Under Assumptions \ref{assumption-mds} and \ref{assumption-limit},
\[
\left(\sum_{t=1}^T  |x_{t-1}|/\sqrt{T} \right)\left(\check{\beta}-\beta\right) \to_d \int_0^1 \sigma(r)dW(r).
\]
\end{lem}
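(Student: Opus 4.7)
The key observation is that the algebraic identity for $\check\beta$ reduces the claim directly to the convergence of a stochastic-integral approximation, after which Lemma~\ref{lemma-limit} does all the work.

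The plan is as follows. First, I would substitute the regression equation $y_t = \beta x_{t-1} + u_t$ (with $\alpha=0$ for the no-intercept case used to motivate $\check\beta$) into the numerator of $\check\beta$. Since $sgn(x_{t-1})x_{t-1} = |x_{t-1}|$ and $u_t = v_t\varepsilon_t$, this gives
\[
\sum_{t=1}^T sgn(x_{t-1})y_t
= \beta\sum_{t=1}^T |x_{t-1}| + \sum_{t=1}^T v_t\xi_t,
\]
with $\xi_t = sgn(x_{t-1})\varepsilon_t$. Rearranging and multiplying by $T^{-1/2}$ yields the basic identity
\[
\left(\sum_{t=1}^T |x_{t-1}|/\sqrt T\right)\bigl(\check\beta-\beta\bigr)
= \frac{1}{\sqrt T}\sum_{t=1}^T v_t\xi_t.
\]

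Second, I would rewrite the right-hand side as a stochastic integral against $V_T$. Using that $v_t$ is $\mathcal F_{t-1}$-adapted, the step function $\sigma_T$ from Assumption~\ref{assumption-limit} can be evaluated at the left endpoint of each subinterval $[(t-1)/T,t/T)$ to match $v_t$ (equivalently, one adopts the convention $\sigma_T(r) = v_{[Tr]+1}$, which does not affect the weak limit since $\sigma_T$ and its one-step shift have the same limit $\sigma$ in $\mathbf D_{\mathbb R^+}[0,1]$ under the uniform topology). Then
\[
\frac{1}{\sqrt T}\sum_{t=1}^T v_t\xi_t
= \sum_{t=1}^T \sigma_T\!\left(\tfrac{t-1}{T}\right)\bigl(V_T(t/T)-V_T((t-1)/T)\bigr)
= \int_0^1 \sigma_T(r)\,dV_T(r).
\]

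Third, I would invoke Lemma~\ref{lemma-limit}, which gives
\[
\left(V_T,\sigma_T,\int \sigma_T(r)\,dV_T(r)\right)
\to_d
\left(V,\sigma,\int \sigma(r)\,dV(r)\right)
\]
in $\mathbf D_{\mathbb R\times\mathbb R^+\times\mathbb R}[0,1]$, and apply the continuous mapping theorem in the form of projection onto the third coordinate, evaluated at $r=1$. This yields exactly the claimed weak limit $\int_0^1 \sigma(r)\,dV(r)$.

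The only subtle point, which I would handle carefully, is the indexing convention linking $\sigma_T$ to $(v_t)$. Because Assumption~\ref{assumption-mds}(a) places $v_t$ in a compact interval $[\underline v,\bar v]$ bounded away from $0$ and $\infty$, the one-step shift between $\sigma_T(\cdot) = v_{[T\cdot]}$ and its shifted version $v_{[T\cdot]+1}$ is negligible in the uniform topology (its effect on the Riemann-sum representation is a single boundary term of order $T^{-1/2}\bar v|\xi_T| = o_p(1)$), so that the joint convergence transferred from Lemma~\ref{lemma-limit} still applies to the stochastic integral $T^{-1/2}\sum_{t=1}^T v_t\xi_t$ as written. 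Beyond this bookkeeping, no further obstacle arises.
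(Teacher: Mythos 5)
Your proof is correct and follows exactly the route the paper takes: the identity $\check\beta-\beta=\sum_t v_t\xi_t/\sum_t|x_{t-1}|$, the representation $T^{-1/2}\sum_t v_t\xi_t=\int_0^1\sigma_T(r)\,dV_T(r)$, and then Lemma~\ref{lemma-limit} with the continuous mapping theorem (the paper states all of this in the text preceding Lemma~\ref{lemma-limit} and simply says the result "follows immediately"). Your explicit handling of the indexing convention between $v_t$ and $\sigma_T(r)=v_{[Tr]}$ is a careful touch the paper glosses over, but it changes nothing of substance.
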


Two main implications of Lemma \ref{lemma-cauchy} are (i) the limit distribution of the Cauchy estimator is generally non-Gaussian and (ii) the rate of convergence of the Cauchy estimator is nonstandard and unknown. These asymptotic properties of the Cauchy estimator subsequently imply that the usual $t$-test relying on the standard normal table becomes an invalid testing procedure for the null hypothesis of $\beta = 0$. The limit $\int_0^1 \sigma(r)dW(r)$ is Gaussian if and only if the limiting volatility process $\sigma$ is independent of $W$. In this case, $\int \sigma(r)dW(r)$ has a mixed normal distribution and $\int_0^s \sigma(r)dW(r) =_d \mathbb{MN}(0,\int_0^s \sigma^2(r)dr)$. If the independence condition is violated, then $\int \sigma(r)dW(r)$ becomes a non-Gaussian martingale in general.

Clearly, $\check{\beta}$ requires an extremely mild condition for consistency, that is $\sum_{t=1}^T  |x_{t-1}|/\sqrt{T}\to_p\infty$. For example, if there exists a sequence $p_T$ of positive numbers such that
\[
\left(p_T^{-1}\sum_{t=1}^T|x_{t-1}|\right)^{-1} = O_p(1),
\]
then $\check{\beta} - \beta = O_p(T^{1/2}/p_T)$ by Lemma \ref{lemma-cauchy}. For a wide class of time series, the consistency condition $T^{1/2}/p_T\to0$ is satisfied since $p_T\geq T$ unless $x_t\approx 0$ for most $t=1,\cdots,T$. Though it is not necessary in our subsequent theory, one may explicitly obtain the sequence $p_T$ for some time series satisfying required regularity conditions.

\begin{example}\label{example-1-1}
(a) For weakly stationary processes $(x_t)$ with $E|x_t|<\infty$, $p_T=T$;

(b) For stationary $\alpha$-stable $(x_t)$ with $0<\alpha<1$, $p_T=T^{1/\alpha}\ell(T)$ for some slowly varying function $\ell$ (see \citeauthor{EKM}, \citeyear{EKM}, \citeauthor{PSolo}, \citeyear{PSolo}, and references therein);

(c) For the case of unit root and near unit root time series $(x_t)$, $p_T=T^{3/2}$ (see \citeauthor{PhillipsUR}, \citeyear{PhillipsUR}, \citeyear{PhillipsNear}, \citeauthor{PM}, \citeyear{PM},  \citeauthor{IP}, \citeyear{IP}, and references therein);

(d) For fractionally integrated $I(d)$ processes $(x_t)$ with $1/2<d<3/2,$ $p_T=T^{d+1/2}\ell(T)$ for some slowly varying function $\ell$ (see \citeauthor{Baillie}, \citeyear{Baillie}; Lemma 3.4 in \citeauthor{PhillipsFrac1}, \citeyear{PhillipsFrac1}; \citeauthor{PhillipsFrac2}, \citeyear{PhillipsFrac2}, \cite{wangetal} and  \cite{chanwang} and references therein);
\end{example}

\section{New Robust Inference Approach}

Now we introduce our test for no predictability in the regression (\ref{PredRegr1}). The test is motivated by $\tau(v)$ in \eqref{test-infeasible}. Since $(v_t)$ is not observable, we replace $v_t$ by its consistent estimator $\hat{\sigma}((t-1)/T)$, and we consider the test statistic $\tau(\hat{\sigma})$ defined as
\begin{align}\label{test}
\tau(\hat{\sigma}) 
= \frac{1}{T^{1/2}}\sum_{t=1}^T sign(x_{t-1})\frac{y_t}{\hat{\sigma}((t-1)/T)},
\end{align}
where
\begin{align}\label{volest}
\hat{\sigma}^2(r)=\frac{\sum_{t=1}^T\hat{u}_t^2 K_h(r - t/T) }{ \sum_{t=1}^T K_h(r-t/T)}, 
\quad h\leq r\leq 1;
\qquad
\hat{\sigma}^2(r) = \hat{\sigma}^2(h),  
\quad 0\leq r<h,
\end{align}
where $\hat{u}_t$ the OLS residuals given as $\hat{u}_t = y_t -  \hat{\beta} x_{t-1}$ with the OLS estimator $\hat{\beta}$. Here $K_h(t)=K(t/h)$ with a kernel function $K$ and bandwidth $h$.

The validity of our approach requires that $\hat{\sigma}(r)$ is close enough to $\sigma_T(r)=v_{[Tr]}$ for most $r\in[0,1]$. We first establish a uniform convergence result
\begin{align}\label{eq-uniform}
\sup_{r\in \mathcal{C}_h}\left|\hat{\sigma}^2(r) - \sigma_T^2(r)\right|=o_p(1)
\end{align}
for some $\mathcal{C}_h\subset [0,1]$. Invoking the convergence in Assumption \ref{assumption-limit}, $\sigma_T\to_d \sigma$ is interpreted as the weak convergence in the probability measures on $\mathbf{D}_{\mathbb{R}^+}[0,1]$ endowed with the uniform topology. By virtue of the so-called Skorohod representation theorem (e.g., Pollard (1984), pp. 71-72), it is indeed possible to construct $\sigma_T$ and $\sigma$ on a common probability space, up to the distributional equivalence, so that $\sigma_T\to_{a.s.}\sigma$ uniformly on $[0,1]$. For our development of the uniform convergence results \eqref{eq-uniform}, we assume that $\sigma_T$ is defined up to the distributional equivalence such that $\sigma_T\to_{a.s.}\sigma$ uniformly on $[0,1]$. This assumption is not restrictive since we are interested in the convergence of $\hat{\sigma}^2$ to $\sigma_T^2$ rather than $\sigma^2$.

For the nonparametric estimator $\hat{\sigma}$, we assume the kernel function $K$ satisfies the following assumption.

\begin{ass}\label{assumption-kernel}
(a) a nonnegative kernel $K$ has a compact support $[0,1]$ with $\int_0^1 K(r)dr=1$, (b) $|K(r) - K(s)|\leq \bar{K}|r-s|$ for all $r,s\in\mathbb{R}$, and $\sup_r K(r)<\bar{K}$ for some $0<\bar{K}<\infty$, 
\end{ass}

The condition (b) in Assumption \ref{assumption-kernel} is standard in the investigation of uniform consistency. In Assumption \ref{assumption-kernel} (a), we assume a nonstandard assumption that $K$ is a one-sided kernel, which is unnecessary in developing the uniform consistency \eqref{eq-uniform}. When we establish $\tau(\hat{\sigma})\to_d \mathbb{N}(0,1)$ under $\beta=0$, however, it is important to make $\hat{\sigma}(t/T)$ measurable with respect to $\mathcal{F}_{t+1}$ so that we can apply a martingale CLT to $\tau(\hat{\sigma})$.\footnote{For the same reason, \cite{hansen1995regression} considered a one-sided kernel.} For a more precise explanation, we write 
\begin{equation}\label{decomposition}
\hat{\sigma}^2(r) = \hat{\sigma}_1^2(r) + \hat{\sigma}_2^2(r) + \hat{\sigma}_3^2(r) + \hat{\sigma}_4^2(r),
\end{equation}
where
\begin{align*}
&\hat{\sigma}_1^2(r)=  \frac{\sum_{t=1}^T E(u_t^2 | \mathcal{F}_{t-1}) K_h(r-t/T)}{\sum_{t=1}^T K_h(r-t/T)},
&&\hat{\sigma}_2^2(r)=\frac{\sum_{t=1}^T (u_t^2 - E(u_t^2 | \mathcal{F}_{t-1})) K_h(r-t/T)}{\sum_{t=1}^T K_h(r-t/T)},\\
&\hat{\sigma}_3^2(r)=(\hat{\beta} - \beta)^2\frac{ \sum_{t=1}^T x_{t-1}^2 K_h(r-t/T)}{\sum_{t=1}^T K_h(r-t/T)},
&&\hat{\sigma}_4^2(r)=2(\hat{\beta} - \beta)\frac{ \sum_{t=1}^T x_{t-1} u_t K_h(r -t/T)}{\sum_{t=1}^T K_h(r-t/T)}.
\end{align*}

In the decomposition \eqref{decomposition}, $\hat{\sigma}_1^2(r) - \sigma_T^2(r)$ is a bias term since $E(u_t^2 | \mathcal{F}_{t-1}) = v_t^2 = \sigma_T^2(t/T)$, whereas $\hat{\sigma}_2^2(r)$ is a variance term involving a martingale.  On the other hand, $\hat{\sigma}_3^2$ and $\hat{\sigma}_4^2$ are error components induced by using $\hat{u}_t$, instead of $u_t$, in the kernel estimation of $\sigma_T^2$. Under Assumption \ref{assumption-kernel} (a), $\hat\sigma_1^2(t/T)$ is $\mathcal{F}_{t-1}$-adapted, whereas $\hat\sigma_2^2(t/T)$ is $\mathcal{F}_t$-adapted. Consequently, $\tilde{\sigma}^2(t/T)$, where $\tilde{\sigma}^2 = \hat\sigma_1^2 + \hat\sigma_2^2$, is $\mathcal{F}_t$-adapted, from which one may show that $\tau(\tilde{\sigma})\to_d \mathbb{N}(0,1)$, where $\tau(\tilde{\sigma})$ is defined as $\tau(\hat{\sigma})$ but with $\hat{\sigma}$ replaced $\tilde{\sigma}$, by a martingale CLT as long as $|\tilde{\sigma}^2(r)-\sigma_T^2(r)|=o_p(1)$ for most $r\in[0,1]$. However, $\hat{\sigma}_3^2(t/T)$ and $\hat{\sigma}_4^2(t/T)$ are not $\mathcal{F}_t$-measurable since $\hat{\beta}$ is not $\mathcal{F}_t$-measurable for any $t<T$. Consequently, we cannot directly apply a martingale CLT to show $\tau(\hat{\sigma})\to_d \mathbb{N}(0,1)$. Alternatively, for these two terms, it is shown that they have negligible effects in the test statistic $\tau(\hat{\sigma})$, and we have $\tau(\hat{\sigma}) = \tau(\tilde{\sigma})(1+o_p(1))$ as long as $\hat{\beta}\to_p\beta$ sufficiently quickly. For the asymptotic negligibilities of $\hat{\sigma}_3^2$ and $\hat{\sigma}_4^2$, we assume

\begin{ass}\label{assumption-ols}
For any deterministic sequence $(c_t)_{t=1}^T$ such that $0\leq c_t\leq 1$ for all $t$, $\sum_{t=1}^T c_t x_{t-1}u_t = O_p\left(T^p \left(\sum_{t=1}^T x_{t-1}^2\right)^{1/2}\right)$ for some $p\in[0,1/8)$.
\end{ass}

Assumption \ref{assumption-ols} is very general and many time series models satisfy the condition. In particular, it holds with $p=0$ if $(x_t)$ is either (near) unit root or stationary with finite variance. Moreover, if $(x_t)$ is stationary with unbounded variance, then the condition holds with $p=0$ under some additional conditions on $(x_t)$ and $(u_t)$ (see, e.g., \cite{Samorodnitsky2007}). 

\begin{lem}\label{lemma-ols}
If Assumptions \ref{assumption-mds} and \ref{assumption-ols} hold, then $|\hat{\beta} - \beta| = O_p\left(T^p \left(\sum_{t=1}^T x_{t-1}^2\right)^{-1/2}\right)$.
\end{lem}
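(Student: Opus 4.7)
The plan is very short: this is essentially a direct consequence of Assumption \ref{assumption-ols} applied with the constant weight sequence $c_t \equiv 1$. First I would write down the standard OLS decomposition, using the model $y_t = \beta x_{t-1} + u_t$ (matching the residual definition $\hat u_t = y_t - \hat\beta x_{t-1}$ used in \eqref{volest}):
\[
\hat{\beta} - \beta = \frac{\sum_{t=1}^T x_{t-1} u_t}{\sum_{t=1}^T x_{t-1}^2}.
\]

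Next I would invoke Assumption \ref{assumption-ols} with $c_t = 1$ for all $t$, which is admissible since the constant sequence $1$ lies in $[0,1]$. This immediately yields
\[
\left|\sum_{t=1}^T x_{t-1} u_t\right| = O_p\!\left(T^p \left(\sum_{t=1}^T x_{t-1}^2\right)^{1/2}\right).
\]
Dividing through by $\sum_{t=1}^T x_{t-1}^2$ then produces
\[
|\hat\beta - \beta| = O_p\!\left(T^p \left(\sum_{t=1}^T x_{t-1}^2\right)^{-1/2}\right),
\]
which is the stated bound.

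Assumption \ref{assumption-mds} enters only implicitly here: it guarantees that $(u_t)$ is a square-integrable MDS with conditional variance bounded uniformly by $\bar v^2$, which is the structural setting under which the high-level Assumption \ref{assumption-ols} is posited and justified (in finite-variance cases with $p=0$ by standard MDS/Burkholder arguments, and in heavy-tailed stationary cases by stable-domain results such as those cited from \cite{Samorodnitsky2007}). There is therefore no real obstacle in the proof itself: all the technical content has been offloaded into Assumption \ref{assumption-ols}, and the lemma is a one-line application of it. The only thing worth flagging is the mild discrepancy with \eqref{PredRegr1}, which includes an intercept $\alpha$ but which the residual formula in \eqref{volest} effectively drops; in the intercept case one would replace $x_{t-1}$ by its demeaned version and apply exactly the same argument, since the demeaning only changes constants and cannot worsen the order bound that Assumption \ref{assumption-ols} provides.
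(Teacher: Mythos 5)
Your proof is correct and matches the paper's intended argument: the paper's own proof of this lemma is just the one-line remark that the result ``follows immediately from Assumptions \ref{assumption-mds} and \ref{assumption-ols},'' and your decomposition $\hat{\beta}-\beta=\sum_{t=1}^T x_{t-1}u_t\big/\sum_{t=1}^T x_{t-1}^2$ followed by an application of Assumption \ref{assumption-ols} with $c_t\equiv 1$ is exactly what that remark is shorthand for. Your observation about the intercept being dropped in the residual definition is a fair aside but does not affect the argument.
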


We will show below that the rate of convergence of $\hat{\beta}$ in Lemma \ref{lemma-ols} is enough to obtain the required uniform convergences of $\hat{\sigma}_3^2$ and $\hat{\sigma}_4^2$ as well as their asymptotic negligibility in the test relying on the statistic \eqref{test}.


On the other hand, the convergence $|\hat{\sigma}_1^2(r)- \sigma_T^2(r)|\to_p0$ requires that $\sigma_T^2$ be left-continuous at $r$ due, in particular, to the fact that $K$ is a one-sided kernel having  support $[0,1]$. However, $\sigma_T$ may have countably many jumps since $\sigma_T\in\mathbf{D}[0,1]$. In particular, at a discontinuity point $r$ with $\sigma(r)\neq \sigma(r-)$, we have $|\hat{\sigma}_1^2(r)- \sigma_T^2(r-)|\to_p0$ instead of $|\hat{\sigma}_1^2(r)- \sigma_T^2(r)|\to_p0$. Therefore, the set $\mathcal{C}_h$ in \eqref{eq-uniform} should effectively exclude a set of discontinuity points as well as its neighborhoods so that the uniform convergence result holds. Under our convention of $\sigma_T\to_{a.s.}\sigma$ uniformly on $[0,1]$, we only need to consider $\sigma$'s discontinuity points, and  we define
\begin{align}\label{set}
\mathcal{C}_h= [h,1]\setminus \mathcal{J}_h,\quad \text{where} \quad \mathcal{J}_h = \{[r, r+h)\subset[0,1] | \sigma(r)\neq \sigma(r-)\}.
\end{align}
Clearly, $\mathcal{C}_h$ is a set of left-continuity points, and we establish the uniform convergence result \eqref{eq-uniform} over $\mathcal{C}_h$.

A martingale exponential inequality can be used to show the asymptotic negligibility of the variance component $\hat{\sigma}_2^2(r)$ uniformly in $r$ (see, e.g., \cite{victor1999general} and \citet{bercu2008exponential}). In this paper, we use the two-sided exponential inequality in \cite{bercu2008exponential} under which we can relax the moment condition for $(\varepsilon_t)$ at the cost of an assumption on the stochastic order of the extremal process of $(\varepsilon_t)$.
     
\begin{ass}\label{assumption-error}
For some $q\in[0,1/8)$, $\max_{1\le t\le T}|\varepsilon_t| = O_p(T^q)$.
\end{ass}

Assumption \ref{assumption-error} is not stringent, and a wide class of time series models for $\varepsilon_t$ satisfies the condition.\footnote{Instead of Assumption \ref{assumption-error}, one may obtain the subsequent results by assuming an additional moment condition, i.e., $E|\varepsilon_t|^{4r}<\infty$ for some $r\geq1$. For a relevant approach, the reader is referred to, e.g., Theorem 2.1 of \cite{wang2014uniform}.} For instance, if $\varepsilon_t$ is a Gaussian process with $cov(\varepsilon_1, \varepsilon_T)\log T\to0$, then $\max_{0\leq t\leq T}|\varepsilon_t| = O_p(\sqrt{\log T})$ and the condition (a) holds for any $q>0$ (see, e.g., Theorem 2.5.2 of \cite{leadbetter1988extremal}).  


\begin{ass}\label{assumption-volest}
As $h\to0$ and $T\to\infty$, (a) $h T^{1/2-2p}\to\infty$ where $p\in[0,1/8)$ is defined as in Assumption \ref{assumption-ols}, and  (b) $h T^{1-4q}\to\infty$ and $hT^{2q}\to0$, where $q\in[0,1/8)$ is defined as in Assumption \ref{assumption-error}.
\end{ass} 

Assumption \ref{assumption-volest} provides the connections among the stochastic orders in Assumptions \ref{assumption-ols}-\ref{assumption-error} and the bandwidth $h$. If we let $h=c T^{-\alpha}$ for $c,\alpha>0$ as in the typical situation, then Assumption \ref{assumption-volest} holds for $2q< \alpha <\min\{1/2 -2p, 1-4q\}$. Note that such $\alpha$ always exists for $p,q\in[0,1/8)$. In particular, if $p=q=0$, then $h=c T^{-\alpha}$ satisfies Assumption \ref{assumption-volest} for $0<\alpha<1/2$. We note that the condition (a) is used to guarantee $\hat{\sigma}_3^2$ and $\hat{\sigma}_4^2$ being asymptotically negligible in our inference method. In condition (b),  $h T^{1-4q}\to\infty$ is needed for the uniform convergence of $\hat{\sigma}_2^2$, whereas $hT^{2q}\to0$ is used to effectively handle discontinuity points of $\sigma^2$ at which $\hat{\sigma}^2$ becomes inconsistent.

\begin{prop}\label{proposition-vol}
Let Assumptions \ref{assumption-mds}-\ref{assumption-limit} and \ref{assumption-kernel}-\ref{assumption-volest} hold. As $h\to0$ and $T\to\infty$, we have
\begin{align*}
&(a)
\quad \sup_{r\in \mathcal{C}_h} |\hat{\sigma}_1^2(r)-\sigma_T^2(r)|=o_p(1),
&&(b)
\quad \sup_{h\leq r\leq 1} |\hat{\sigma}_2^2(r)|=O_p\left(T^{2q} \left(\log (hT)/ (hT)\right)^{1/2}\right),\\
&(c)
\quad \sup_{h\leq r\leq 1} |\hat{\sigma}_3^2(r)|=O_p\left(T^{2p} /(hT)\right),
&&(d)
\quad \sup_{h\leq r\leq 1} |\hat{\sigma}_4^2(r)|= O_p\left(T^{2p} /(hT)\right),
\end{align*}
and the uniform convergence result \eqref{eq-uniform} holds.
\end{prop}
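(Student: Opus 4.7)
The plan is to prove the four bounds separately using the decomposition \eqref{decomposition}, then combine them via the triangle inequality and check against Assumption \ref{assumption-volest} that each piece is $o_p(1)$ on $\mathcal{C}_h$. For part (a), since $E(u_t^2\mid\mathcal{F}_{t-1}) = v_t^2 = \sigma_T^2(t/T)$, the bias admits the representation
\[
\hat{\sigma}_1^2(r) - \sigma_T^2(r) = \frac{\sum_{t=1}^T \bigl(\sigma_T^2(t/T) - \sigma_T^2(r)\bigr)K_h(r - t/T)}{\sum_{t=1}^T K_h(r - t/T)},
\]
which, because $K_h$ is supported on $t/T\in[r-h,r]$, is bounded by $\sup_{s\in[r-h,r]}|\sigma_T^2(s) - \sigma_T^2(r)|$. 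Invoking the Skorohod representation so that $\sigma_T\to\sigma$ uniformly almost surely, and using that $\mathcal{C}_h = [h,1]\setminus\mathcal{J}_h$ removes precisely those $r$ for which $\sigma$ has a jump in $(r-h,r]$, a cadlag modulus of continuity argument on the compact $[0,1]$ gives that this sup tends to zero uniformly over $r\in\mathcal{C}_h$ as $h\to 0$.

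For part (b), I would exploit the martingale structure: for each fixed $r$, the numerator $M_T(r) = \sum_{t=1}^T (u_t^2 - v_t^2)K_h(r-t/T)$ is a sum of martingale differences with nonrandom (hence $\mathcal{F}_{t-1}$-measurable) weights. Truncating on the event $\max_{t\le T}|\varepsilon_t|\le CT^q$ (Assumption \ref{assumption-error}), which has probability tending to one, the per-term envelope becomes $|K_h(r-t/T)(u_t^2-v_t^2)|\le C\bar v^2 T^{2q}$ by Assumption \ref{assumption-mds}(a), while the per-term conditional variance is $\le CK_h(r-t/T)^2$ using Assumption \ref{assumption-mds}(c); since $\sum_t K_h(r-t/T)^2\asymp hT$, the Bercu--Touati two-sided exponential inequality yields a pointwise deviation bound for $M_T(r)$. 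To pass to a uniform bound, I would cover $[h,1]$ by a grid of polynomial size in $T$, exploit $|K(r)-K(s)|\le\bar K|r-s|$ from Assumption \ref{assumption-kernel}(b) to control the oscillation of $M_T$ between grid points, and apply a union bound. Dividing by $\sum_t K_h(r-t/T)\asymp Th$, valid uniformly on $[h,1]$, produces the stated rate $T^{2q}\sqrt{\log(hT)/(hT)}$.

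For parts (c) and (d), I would appeal directly to Lemma \ref{lemma-ols} and Assumption \ref{assumption-ols}. For (c), the crude bound $\sum_{t=1}^T x_{t-1}^2 K_h(r-t/T)\le\bar K\sum_{t=1}^T x_{t-1}^2$, combined with $(\hat\beta-\beta)^2 = O_p(T^{2p}/\sum_t x_{t-1}^2)$ from Lemma \ref{lemma-ols} and the denominator asymptotics, yields $\sup_r|\hat\sigma_3^2(r)|=O_p(T^{2p}/(hT))$ at once. For (d), the nonrandom weights $c_t = K_h(r-t/T)/\bar K\in[0,1]$ are admissible in Assumption \ref{assumption-ols}, giving $|\sum_t K_h(r-t/T)x_{t-1}u_t|=O_p(T^p(\sum_t x_{t-1}^2)^{1/2})$; multiplying by $|\hat\beta-\beta|$ from Lemma \ref{lemma-ols} and dividing by the denominator recovers the same rate, with uniformity in $r$ again obtained by a grid argument exploiting the Lipschitz $K$. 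The final statement \eqref{eq-uniform} follows by summing the four bounds via the triangle inequality: Assumption \ref{assumption-volest}(a) forces $T^{2p}/(hT)=o(1)$, and Assumption \ref{assumption-volest}(b) forces $T^{2q}\sqrt{\log(hT)/(hT)}=o(1)$ (the logarithmic factor is absorbed since $hT^{1-4q}\to\infty$). The most delicate step is (b), where simultaneously managing the random envelope, martingale concentration, and the chaining/union bound under the relatively weak moment hypotheses of Assumptions \ref{assumption-mds} and \ref{assumption-error} requires careful calibration of the grid and truncation levels.
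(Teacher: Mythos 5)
Your proposal is correct and follows essentially the same route as the paper: the bias term is handled via the cadlag modulus of continuity on $\mathcal{C}_h$ after the Skorohod representation, the variance term via the Bercu--Touati two-sided martingale exponential inequality combined with a grid/union-bound argument that exploits the Lipschitz kernel, and the two error terms via Lemma \ref{lemma-ols} and Assumption \ref{assumption-ols} with the kernel weights as admissible deterministic sequences. The only cosmetic difference is that you truncate on the event $\max_t|\varepsilon_t|\leq CT^q$ whereas the paper conditions on the event that the sum of the predictable and total quadratic variations is bounded by $M\,hT\,T^{4q}$; these are equivalent devices and lead to the same rates.
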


Under Assumption \ref{assumption-volest}, we have
\[
T^{2p} /(hT) = o\left( T^{2q} \left(\log (hT)/ (hT)\right)^{1/2}\right),
\]
from which we can see that the error components $\hat{\sigma}_3^2$ and $\hat{\sigma}_4^2$ have smaller orders than $\hat{\sigma}_2^2$. Indeed, it is shown in the proof of Theorem \ref{theorem-vol} that $\hat{\sigma}_3^2$ and $\hat{\sigma}_4^2$ have negligible effects in the test statistic $\tau(\hat{\sigma})$, and we have $\tau(\hat{\sigma})=\tau(\tilde{\sigma})(1+o_p(1))$, where $\tilde{\sigma}^2 = \hat{\sigma}_1^2 + \hat{\sigma}_2^2$. However, the orders of  $\hat{\sigma}_1^2$ and $\hat{\sigma}_2^2$ are not sufficiently small to show directly that $\tau(\tilde{\sigma}) = \tau(\sigma_T)(1+o_p(1))$ even though $\tilde{\sigma}^2$ converges uniformly to $\sigma_T^2$. That is because the convergence rate of $\hat{\sigma}_2^2(r)\to_p0$ is not fast enough for the direct approximation $\tau(\tilde{\sigma}) = \tau(\sigma_T)(1+o_p(1))$, and the convergence rate of $|\hat{\sigma}_1^2(r)-\sigma_T^2(r)|\to0$ depends on the degrees of left-continuity of $\sigma_T^2$ which are unknown in general. 

Alternatively, we use the weak convergence of the stochastic integral $\int_0^1 (\sigma_T(r)/\tilde{\sigma}(r))dW_T(r)$, as in Lemma \ref{lemma-limit}, jointly with the facts that $\tilde{\sigma}(t/T)$ is $\mathcal{F}_t$-adapted and $\sup_{r\in \mathcal{C}_h}|\tilde{\sigma}^2(r)-\sigma_T^2(r)|=o_p(1)$. Here, in particular, we require $\mathcal{C}_h\to_{a.s.}[0,1]$ which holds when $\sigma$ has finitely many jumps almost surely.

\begin{ass}\label{assumption-vol}
$\sigma$ has finitely many jumps almost surely.\footnote{In other words, $\sigma$ is of finite activity in the sense that the probability measure of any set $\{\omega: r\mapsto \sigma(r,\omega) \text{ has finitely many jumps in }r\in[0,1]\}$ is one.}
\end{ass}

\begin{thm}\label{theorem-vol}
Let Assumptions \ref{assumption-mds}-\ref{assumption-limit} and \ref{assumption-kernel}-\ref{assumption-vol} hold. As $h\to0$ and $T\to\infty$, we have the following.

(a) Under $\beta=0$, $\tau(\hat{\sigma})\to_d \mathbb{N}(0,1)$.

(b) If $\beta\neq0$ and $\sum_{t=1}^{T-1}|x_t|/\sqrt{T}\to_p \infty$, then \[
|\tau(\hat{\sigma})|\geq \frac{|\beta|}{\underline{v}} \frac{1}{\sqrt{T}}\sum_{t=1}^{T-1}|x_t| + O_p(1)\to_p \infty,  
\]
and hence, $P\left[|\tau(\hat{\sigma})|>c\right]\to1$ for any positive constant $c$.
\end{thm}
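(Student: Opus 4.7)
The plan is to replace the random denominator $\hat{\sigma}((t-1)/T)$ by its $\mathcal{F}_{t-1}$-adapted leading part $\tilde{\sigma}((t-1)/T)$ with $\tilde{\sigma}^2 = \hat{\sigma}_1^2 + \hat{\sigma}_2^2$, show that the remainder is asymptotically negligible, and then apply a martingale CLT for part (a) and a direct lower-bound argument for part (b).

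\emph{Step 1 (reduction to $\tilde{\sigma}$).} First, write
\[
\tau(\hat{\sigma}) - \tau(\tilde{\sigma}) = \frac{1}{\sqrt{T}}\sum_{t=1}^T sgn(x_{t-1})\, y_t \left(\frac{1}{\hat{\sigma}((t-1)/T)} - \frac{1}{\tilde{\sigma}((t-1)/T)}\right).
\]
The identity $1/a - 1/b = (b^2-a^2)/[ab(a+b)]$, combined with $|\hat{\sigma}^2 - \tilde{\sigma}^2| = |\hat{\sigma}_3^2 + \hat{\sigma}_4^2| = O_p(T^{2p}/(hT))$ from Proposition \ref{proposition-vol}(c)--(d) and the fact that both $\hat{\sigma}$ and $\tilde{\sigma}$ are bounded below by $\underline{v}/2$ eventually on $\mathcal{C}_h$ (since each converges to $\sigma_T \geq \underline{v}$), shows $|\hat{\sigma}^{-1}-\tilde{\sigma}^{-1}|$ has the same stochastic order uniformly. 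Since $\sum_{t=1}^T|y_t|=O_p(T)$ under $\beta=0$ (Assumption \ref{assumption-mds} gives $Eu_t^2<\infty$), the remainder is $O_p(T^{2p-1/2}/h) = o_p(1)$ by Assumption \ref{assumption-volest}(a).

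\emph{Step 2 (part (a), martingale CLT).} Under $\beta = 0$, $y_t = \sigma_T(t/T)\varepsilon_t$, so
\[
\tau(\tilde{\sigma}) = \frac{1}{\sqrt{T}}\sum_{t=1}^T \phi_t\, \xi_t,
\qquad
\phi_t := \frac{\sigma_T(t/T)}{\tilde{\sigma}((t-1)/T)},
\]
is a martingale sum with $\mathcal{F}_{t-1}$-measurable coefficients. The Hall--Heyde martingale CLT requires (i) conditional-variance convergence $T^{-1}\sum_{t=1}^T \phi_t^2 \to_p 1$, which follows from $\sup_{r\in \mathcal{C}_h}|\tilde{\sigma}^2(r)-\sigma_T^2(r)|=o_p(1)$ (Proposition \ref{proposition-vol}) together with Assumption \ref{assumption-vol} so that the Riemann sum over the complement $[0,1]\setminus \mathcal{C}_h$ is $O_p(h) \to 0$; and (ii) a conditional Lyapunov bound, which is immediate since $\phi_t$ is uniformly bounded by $2\bar{v}/\underline{v}$ and $E(\varepsilon_t^4|\mathcal{F}_{t-1})$ is uniformly bounded by Assumption \ref{assumption-mds}(c). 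Hence $\tau(\tilde{\sigma}) \to_d \mathbb{N}(0,1)$, and combining with Step 1 gives $\tau(\hat{\sigma})\to_d\mathbb{N}(0,1)$.

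\emph{Step 3 (part (b)).} Decompose
\[
\tau(\hat{\sigma}) = \underbrace{\frac{\beta}{\sqrt{T}}\sum_{t=1}^T \frac{|x_{t-1}|}{\hat{\sigma}((t-1)/T)}}_{A_T} + \underbrace{\frac{1}{\sqrt{T}}\sum_{t=1}^T \frac{sgn(x_{t-1})\, u_t}{\hat{\sigma}((t-1)/T)}}_{B_T}.
\]
The noise term $B_T$ has exactly the structure analyzed in part (a) (the only change is that $\hat{\beta}-\beta$ enters through $\hat u_t$ inside $\hat{\sigma}$, but this is already controlled through $\hat{\sigma}_3^2,\hat{\sigma}_4^2$), so $B_T = O_p(1)$. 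For $A_T$, the uniform bound $\hat{\sigma}(r) \leq \bar{v} + o_p(1)$ on $\mathcal{C}_h$ yields $1/\hat{\sigma} \geq 1/\bar{v} - o_p(1)$, so $|A_T|$ is bounded below by a positive constant times $|\beta|\, T^{-1/2}\sum_{t=1}^T |x_{t-1}|$ (up to negligible contributions from the shrinking set $[0,1]\setminus\mathcal{C}_h$). By hypothesis this diverges in probability, and $|\tau(\hat{\sigma})| \geq |A_T| - |B_T| \to_p \infty$.

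The principal difficulty is the interplay between the ``bad set" $[0,1]\setminus \mathcal{C}_h$, which is unavoidable because $\sigma$ may have jumps where $\hat{\sigma}$ is inconsistent, and the Riemann-sum approximations needed both for the conditional-variance convergence in Step 2 and for the lower bound on $A_T$ in Step 3. Assumption \ref{assumption-vol} is exactly what is needed here: finitely many jumps force the Lebesgue measure of the bad set to be $O_p(h) = o_p(1)$, and the uniform boundedness of $\phi_t$ by $2\bar{v}/\underline{v}$ converts this into a negligible contribution to the martingale sum. The secondary difficulty is keeping the rates from Assumptions \ref{assumption-ols}--\ref{assumption-volest} aligned so that the transition $\tau(\hat{\sigma})\leftrightarrow \tau(\tilde{\sigma})$ is $o_p(1)$ despite $\sum |y_t|$ being potentially as large as $O_p(T)$.
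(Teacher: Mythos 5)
Your proof is correct and follows the paper's overall skeleton---reduce $\tau(\hat{\sigma})$ to the $\mathcal{F}_{t-1}$-adapted version $\tau(\tilde{\sigma})$ via the uniform bounds on $\hat{\sigma}_3^2,\hat{\sigma}_4^2$ and Assumption \ref{assumption-volest}(a), then separate signal and noise under the alternative---but your normality step uses a genuinely different device. The paper does not apply a martingale CLT to $\tau(\tilde{\sigma})$ directly: it introduces a second auxiliary process $\bar{\sigma}$, equal to $\tilde{\sigma}$ away from the jump neighbourhoods of $\sigma$ and patched with the true $\sigma_T$ on them, so that $\bar{\sigma}\to\sigma$ uniformly on all of $[0,1]$; it then invokes the stochastic-integral convergence of Lemma \ref{lemma-limit} (Kurtz--Protter) to get $\tau(\bar{\sigma})\to_d \mathbb{N}(0,1)$, and controls $\tau(\tilde{\sigma})-\tau(\bar{\sigma})=O_p(T^q\sqrt{h})$ with the Bercu--Touati exponential inequality---this is where the condition $hT^{2q}\to0$ in Assumption \ref{assumption-volest}(b) is consumed. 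You instead absorb the jump neighbourhoods directly into the Hall--Heyde conditional-variance computation, using that $\phi_t$ is uniformly $O_p(1)$ (Lemma \ref{lemma-b8}) and that Assumption \ref{assumption-vol} makes the bad set of Lebesgue measure $O(h)$; this is cleaner, dispenses with $\bar{\sigma}$ and the exponential inequality, and does not appear to need $hT^{2q}\to0$ for part (a). Two details to tighten: (i) the uniform consistency of $\hat{\sigma}_1^2$ is needed at the shifted argument $(t-1)/T$ while $\sigma_T$ is evaluated at $t/T$, which is exactly why Lemma \ref{lemma-b4} carries the offset $c/T$---cite it in that form rather than Proposition \ref{proposition-vol}(a); and (ii) $\phi_t$ is bounded only by an $O_p(1)$ random quantity, not the deterministic $2\bar{v}/\underline{v}$ (since $\hat{\sigma}_2^2$ may be negative), so the variance and Lyapunov bounds should be run on the high-probability event $\{\inf_r \tilde{\sigma}^2(r)\geq \underline{v}^2/2\}$. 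Finally, in part (b) your lower bound produces the constant $|\beta|/\bar{v}$ (from $\hat{\sigma}\leq\bar{v}+o_p(1)$) rather than the $|\beta|/\underline{v}$ in the theorem statement; your direction of the inequality is the defensible one---replacing $\hat{\sigma}$ in the denominator by the \emph{lower} bound $\underline{v}$, as the paper's last display does, goes the wrong way---and either constant yields the divergence conclusion.
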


\begin{remark}
The asymptotic power result in Theorem \ref{theorem-vol} (b) implies that the testing power is mainly determined by the asymptotic behavior of $\sum_{t=1}^{T-1}|x_t|$. As an illustration, assume that there exists a diverging sequence $p_T$ such that $p_T/T^{1/2}\to\infty$ and
\begin{align}\label{eq-local}
\frac{1}{p_T}\sum_{t=1}^T |x_{t-1}|\to_d P
\end{align}
for a random variable $P>0$. Clearly, a wide class of time series satisfies \eqref{eq-local} (see, e.g., Example \ref{example-1-1}). It then follows immediately from Theorem \ref{theorem-vol} (b) that $|\tau(\hat{\sigma})|\to_p\infty$ and the speed of divergence is no slower than $p_T/\sqrt{T}$. 

Heuristically, we may consider the power of the proposed test by considering local alternatives in which $\beta\neq0$, but $\beta\to0$ at an appropriate rate. For our purpose, let $(P_T(r), 0\leq r\leq 1)$ be a continuous time process defined as
\[
P_T(r) = \frac{1}{p_T}\sum_{t=1}^{[Tr]} \frac{|x_{t-1}|}{\sigma_T(t/T)}
\]
for a diverging sequence $p_T$ such that $p_T/T^{1/2}\to\infty$. We then assume instead of \eqref{eq-local} that
\begin{align}\label{eq-local2}
P_T
\to_d P
\end{align}
for a stochastic process $(P(r),0\leq r\leq 1)$ having a positive support. If the convergence \eqref{eq-local2} holds jointly with the convergence in Assumption \ref{assumption-limit}, then we may develop the power of the proposed test by considering the following alternative hypothesis
\begin{align}\label{eq-localh}
\beta = \bar{\beta} \times \frac{\sqrt{T}}{p_T}
\end{align}
for a constant $\bar{\beta}\in\mathbb{R}\setminus \{0\}$. Clearly, the hypothesis \eqref{eq-localh} can be interpreted as a local alternative since $\beta\neq0$ and $\beta\to0$. Our construction of the local alternative \eqref{eq-localh} is useful to develop the asymptotic power result in a unified framework, especially, when the covariate $(x_t)$ is general, but satisfies \eqref{eq-local2}. Under the local alternative \eqref{eq-localh}, one can easily deduce from the proof of Theorem \ref{theorem-vol} with \eqref{eq-local2} that
\[
\tau(\hat{\beta})\to_d \bar{\beta}P(1) + \mathbb{N}(0,1).
\]
Clearly, $\tau(\hat{\sigma})$ under \eqref{eq-localh} is not Gaussian asymptotically unless $P(1)$ is either constant or Gaussian.

\begin{example}\label{example-1-2}
Let $\sigma_T(r)=\sigma$ for all $r\in[0,1]$.

(a) $(x_t)$ be a stationary process such that $T^{-1}\sum_{t=1}^T |x_t|\to_p E|x_1|<\infty$. Under \eqref{eq-localh} with $p_T=T$, $\tau(\hat{\sigma})\to_d \mathbb{N}\big((\bar{\beta}/\sigma)E|x_1|,\,\, 1\big)$. 

(b) $(x_t)$ be a unit root (or near unit root) process such that $T^{-3/2}\sum_{t=1}^T |x_t|\to_d \int_0^1 |X(r)|dr$, where $(X(r))$ is the limiting Brownian motion (or Ornstein-Uhlenbeck process) of $(x_t)$ such that $X_T\to_d X$ for $X_T(r) = T^{-1/2}x_{[Tr]}$.\footnote{The reader is referred to \cite{PhillipsNear} and \cite{park2003strong} for more discussion about the near unit root process and its limiting behaviors. Here, in particular, the Ornstein-Uhlenbeck process $X$ follows
\[
dX(r) = c X(r) dr + dV(r), \quad X(0)=0,
\]
where $V$ is Brownian motion.}  Under \eqref{eq-localh} with $p_T=T^{3/2}$, $\tau(\hat{\sigma})\to_d (\bar{\beta}/\sigma)\int_0^1 |X(r)|dr + \mathbb{N}(0, 1)$.   
\end{example}

When $\sigma_T(r)=\sigma$ for all $r\in[0,1]$ and $(x_t)$ is stationary with $E(x_t^2)<\infty$, the usual $t$-test procedure is a valid hypothesis testing procedure for the model \eqref{PredRegr1}. In this case, the asymptotic power property of the usual $t$-test is also well known under the local alternative hypothesis \eqref{eq-localh} with $p_T = T$, and is given by
\[
\text{$t$-statistic}\to_d
\mathbb{N}\big((\bar{\beta}/\sigma) (E(x_t^2))^{1/2},\,\, 1\big).
\]
The ratio of the asymptotic biases of the $t$-test and our test, obtained in Example \ref{example-1-2} (a), is given by $(E(x_t^2))^{1/2} / E|x_t|$. Importantly,  the ratio is always greater than one as long as $E(x_t^2)<\infty$ due to Jensen's inequality. This implies that the usual $t$-test is more powerful than our test under the ideal assumptions, even though the statistics in both tests diverge at the same rate $T^{1/2}$ under a fixed alternative hypothesis. However, when one or more of the ideal assumptions are violated, our test remains valid, whereas the usual $t$-test becomes invalid. This is another example of the traditional issue of trade-off between efficiency and robustness.
\end{remark}

\begin{remark} A number of works in statistics and econometrics have focused on robust inference using sign tests 
applied to different models, including time series regressions (see, among others, \cite{DH}, \cite{CD}, \cite{SoShin1}, \cite{delaPena}, \cite{IB}, \cite{kim-meddahi-2019}, and references therein). For instance, \cite{CD} propose sign tests for testing independence of a zero median time series $Y_t$ with $P(Y_t=0)=0,$ e.g., a time series with continuous distributions symmetric about zero, of past values of $Y_t$ and another time series $X_t.$ The tests in \cite{CD} are based on the observation that, under the above independence/orthogonality hypothesis, for any $T\ge 1,$ the sign statistic like $S_0=0.5(\sum_{t=1}^T sign(Y_tX_{t-1})+T)$ and its more general analogues follow a Binomial distribution with parameters $T$ and 0.5: $S_0\sim Bi(T, 0.5)$ (the results in \cite{IB} imply that sign tests for general zero median or symmetric processes $Y_t$ can be based on similar statistics with randomization over zero values of $Y_t$). \cite{Efron}, \cite{Edelman}, \cite{Pinelis}, \cite{DH}, and \cite{delaPena} consider related testing procedures based on bounds for tail probabilities of $t$-statistics of a parameter of interest (e.g., a location parameter or a regression/autoregression coefficient) under symmetry assumptions implied by (sharp) bounds on tail probabilities of weighted sums of i.i.d. symmetric Bernoulli r.v.'s.  

Naturally, in the time series regression context, the above sign-based inference approaches are more robust to moment assumptions and heavy tails than the inference procedures based on the Gaussian asymptotics for the full-sample OLS and Cauchy estimators. Typically, the sign-based tests can be used without any moment conditions on the time series considered, e.g., under infinite variances. However, they usually require symmetry or zero median assumptions on the processes. Such assumptions are often too restrictive in empirical applications, including the analysis of financial markets due to the stylized fact of gain-loss asymmetry in financial returns (see, among others, \cite{cont} and references therein). Further, sign-based tests are less efficient than those on the Gaussian asymptotics for the OLS estimator under the validity of the latter tests.   

\end{remark}

\begin{remark}
Our method can be applied to a discrete time model and a discrete sample collected from an underlying continuous time model as in  \citet{CJP2016}.  The main difference between our approach to \citet{CJP2016} is that we do not require the assumption $\delta\to0$, where $\delta$ is the sampling interval of the discrete samples. Clearly, the method of \citet{CJP2016} is applicable to high frequency data. Therefore, we may say that our method is more flexible since it can be applied to both high and low frequency data. The price we have to pay for the flexibility is the persistent volatility assumption $\sigma_T \to_d \sigma$ in Assumption \ref{assumption-limit}. Persistent volatility is a well-known stylized fact of financial time series and, in our view, is best considered within the model formulation.

Our method is also comparable to the IVX approach proposed by \cite{PhillipsMagdalinos2009}. The IVX approach is based on a self-generated instrument obtained by differencing the predictor $x_t$ and using an autoregressive filter to construct the instrument. As is shown in \cite{PhillipsMagdalinos2009}, the IVX approach is robust to a (near) unit root or mildly explosive predictor. The Cauchy estimator approaches to inference, including ours and \cite{CJP2016}, are restricted to a single regressor.\footnote{In general, a test relying on a single regressor exhibits size distortion when some relevant regressors are omitted. To overcome the issue induced by a single regressor in our approach, one may extend our approach to a multivariate setting based on the recent paper by \cite{shephard2020} in which a multivariate extension of the Cauchy estimator is proposed. An alternative extension is to use the parsimonious system approach (see \cite{Ghysels-Hill-Motegi-2020} and \cite{Xu-Guo-2022}) which is based on a set of misspecified regression models with only one group of regressors, allowing a single regressor for each regression. We leave these extensions for future research.} 
Unlike the Cauchy based inferences, the IVX approach is applicable to predictive regressions with multiple regressors. We also note that the IVX approach allows for conditional heteroskedasticity. However, to our knowledge, it is not known whether the IVX approach is valid when the volatility is persistent or the predictor is heavy-tailed with infinite second moments, and a continuous time extension of the IVX approach is not available in the literature. Therefore, our method and the IVX may be regarded as complementing each other. 

\cite{hansen1995regression} provides a nonparametric GLS method for regression models with nonstationary volatility using the estimator $\hat{\sigma}$ to correct the heteroskedasticity.  One should note that the assumptions on the limiting volatility $\sigma$ are more general than those in \cite{hansen1995regression} and other work in the literature on the topic. In particular, the assumptions in \cite{hansen1995regression} do not allow for structural changes or regime switching in the  volatility process as the limiting volatility is assumed to have continuous sample
paths almost surely. In contrast, the limiting volatility is allowed to have an arbitrary number of jumps in this paper, and hence, structural changes or regime switching are allowed. Moreover, we further extend our model to have a two-factor volatility in Section 4. 
\end{remark}

\section{An Extension to Two-Factor Volatility Models}
In this section, we generalize the model \eqref{PredRegr1} to have a two-factor volatility in the regression error $(u_t)$. More specifically, we assume that $(\varepsilon_t)$ is conditionally heteroskedastic, rather than conditional homoscedastic as is assumed in Assumption \ref{assumption-mds} (b).

\begin{ass}\label{two factor}
(a) $E(\varepsilon_t^2|\mathcal{F}_{t-1}) = w_t^2$ and $E(w_t^2)=1$, (b) $\max_{t\geq 1}E(|w_t|^{2\eta_1})<\infty$ for some $\eta_1>2$, (c) $(w_t)$ is $\alpha$-mixing such that the mixing coefficient $\alpha$ satisfies $\alpha(k)\leq Ak^{-\eta_2}$ for some $A<\infty$ and $\eta_2 > (2\eta_1  + 2)/(\eta_1 - 2)$, and (d) $hT^{\eta_3}\to\infty$ for some $\eta_3> (\eta_2(1-2/\eta_1) - 2/\eta_1 -2)/(\eta_2 +2)$.
\end{ass}

Under Assumptions \ref{assumption-mds} (a) and \ref{two factor}, the regression error $(u_t)$ in the model \eqref{PredRegr1} can be written as $u_t = v_t w_t e_t$, where $(e_t)$ is an MDS with respect to $(\mathcal{F}_t)$ such that $E(e_t^2 | \mathcal{F}_{t-1})=1$. Clearly, $(u_t)$ has two volatility factors, $(v_t)$ and $(w_t)$, where $(v_t)$ is the long run component by Assumption \ref{assumption-limit} and $(w_t)$ is the short run  component by Assumption \ref{two factor} (c).  Moreover, under Assumption \ref{assumption-limit} and the condition $E(w_t^2)=1$ in Assumption \ref{two factor}, we can identify and estimate the persistent volatility component $v_t$ by the nonparametric estimator \eqref{volest}.  In particular, under Assumption \ref{two factor}, we may show that
\[
\sup_{h\leq r\leq 1} \left|\frac{1}{hT}\sum_{t=1}^T (w_t^2 - 1) K_h(r-t/T)\right| = O_p\left((\log T/ (hT))^{1/2}\right)
\]
using an exponential inequality for a strongly mixing process (see, e.g., \citet[Theorem 2.1]{liebscher1996strong},  \citet[Theorem 4.1]{vogt2012nonparametric}) and \citet[Theorem 1]{kristensen2009uniform}). The above uniform convergence result for the mixing process is sufficient to develop the required uniform convergence of the volatility estimator as well as the validity of the inference procedure proposed in Section 3. We also note that the conditions for $h$ and $T$ in Assumption \ref{two factor} (d) and Assumption \ref{assumption-volest} hold simultaneously for any $p,q\in[0,1/8)$ as long as Assumption \ref{two factor} (d) holds for some $\eta_3>1/4$, which is not stringent. For instance, if $(w_t)$ is a stationary GARCH(1,1) process and $\beta$-mixing with exponential decay, which hold under some mild conditions (see, e.g., \cite{carrasco2002mixing} and \cite{francq2006mixing}), then both Assumption \ref{two factor} (d) and Assumption \ref{assumption-volest} hold for any $p,q\in[0,1/8)$ since Assumption \ref{two factor} (d) holds for any $\eta_3 >1$.

For our purpose, we again consider the decomposition \eqref{decomposition} of the nonparametric estimator \eqref{volest}, and write $\hat{\sigma}^2 = \sum_{k=1}^4 \hat{\sigma}_k^2$. We then can obtain the uniform convergence rate of each component $\hat{\sigma}_k^2$ for $k=1,2,3,4$ as in Proposition \ref{proposition-vol}, and establish the validity of the inference method relying on the test statistic \eqref{test} as in Theorem \ref{theorem-vol}. 

\begin{cor}\label{cor-vol}
Let Assumptions \ref{assumption-mds} (a) and (c), \ref{assumption-limit}, \ref{assumption-kernel}-\ref{assumption-vol} and \ref{two factor} hold. As $h\to0$ and $T\to\infty$, Proposition \ref{proposition-vol} and Theorem \ref{theorem-vol} remain valid.
\end{cor}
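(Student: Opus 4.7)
The plan is to revisit the four-term decomposition \eqref{decomposition} used to prove Proposition \ref{proposition-vol}, re-examining each piece under the new conditional variance $E(u_t^2|\mathcal{F}_{t-1})=v_t^2 w_t^2$ obtained from the factorization $u_t = v_t w_t e_t$. The terms $\hat\sigma_3^2$ and $\hat\sigma_4^2$ depend only on $\hat\beta-\beta$, $x_{t-1}$ and $u_t$, so their uniform bounds in Proposition \ref{proposition-vol}(c)-(d) carry over verbatim from Assumptions \ref{assumption-mds}(a), \ref{assumption-ols} and Lemma \ref{lemma-ols}. All genuinely new work concentrates on $\hat\sigma_1^2$ and $\hat\sigma_2^2$, and then on reassembling the pieces inside the martingale CLT for $\tau(\tilde\sigma)$.

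For the bias component, split
\[
\hat\sigma_1^2(r) = \frac{\sum_{t=1}^T v_t^2 K_h(r-t/T)}{\sum_{t=1}^T K_h(r-t/T)} + \frac{\sum_{t=1}^T v_t^2(w_t^2-1) K_h(r-t/T)}{\sum_{t=1}^T K_h(r-t/T)} =: A_T(r) + B_T(r).
\]
The term $A_T(r)$ coincides with the bias term analyzed in the original proof of Proposition \ref{proposition-vol}(a), so $\sup_{r\in\mathcal{C}_h}|A_T(r)-\sigma_T^2(r)|=o_p(1)$ carries over. For $B_T(r)$, since $v_t$ is uniformly bounded and $\mathcal{F}_{t-1}$-adapted, $(v_t^2(w_t^2-1))$ inherits $\alpha$-mixing at the same rate as $(w_t^2)$, and the exponential inequality for strongly mixing sequences cited in the paragraph preceding the corollary yields $\sup_{h\le r\le 1}|B_T(r)| = O_p((\log T/(hT))^{1/2})$, which is $o_p(1)$ under Assumption \ref{two factor}(d). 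For the variance component, the new summand
\[
u_t^2 - E(u_t^2|\mathcal{F}_{t-1}) = v_t^2 w_t^2 (e_t^2-1)
\]
is still an $(\mathcal{F}_t)$-MDS, so the two-sided exponential inequality of \cite{bercu2008exponential} applies as before; the summand bound uses $|v_t^2 w_t^2(e_t^2-1)|\le \bar v^2(\varepsilon_t^2 + w_t^2)$ together with Assumptions \ref{assumption-error} and \ref{two factor}(b), and the predictable quadratic variation is controlled by the finite fourth moment of $w_t$ implied by $\eta_1>2$, giving the same rate $O_p(T^{2q}(\log(hT)/(hT))^{1/2})$ as in Proposition \ref{proposition-vol}(b). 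Combining the four bounds yields the analogue of Proposition \ref{proposition-vol} in the two-factor setting.

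For Theorem \ref{theorem-vol}, the reduction $\tau(\hat\sigma)=\tau(\tilde\sigma)(1+o_p(1))$ with $\tilde\sigma^2 = \hat\sigma_1^2+\hat\sigma_2^2$ is unchanged. Under $\beta=0$, write $\tau(\tilde\sigma)=T^{-1/2}\sum_{t=1}^T \eta_t$ with $\eta_t = sgn(x_{t-1})\, v_t w_t e_t/\tilde\sigma(t/T)$; the predictable quadratic variation equals $T^{-1}\sum_t v_t^2 w_t^2/\tilde\sigma^2(t/T)$, and uniform convergence of $\tilde\sigma^2(t/T)$ to $v_t^2$ on $\mathcal{C}_h$ reduces it to $T^{-1}\sum_t w_t^2 + o_p(1)$, which converges in probability to $E(w_t^2)=1$ by the law of large numbers for $\alpha$-mixing sequences under Assumption \ref{two factor}(a)-(c). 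The Lindeberg condition follows from boundedness of $v_t$ and the moment-extremal bounds on $w_t$ and $\varepsilon_t$, so the martingale CLT gives $\tau(\tilde\sigma)\to_d \mathbb{N}(0,1)$, proving part (a). Part (b) is preserved because the divergence under $\beta\neq 0$ is driven by $T^{-1/2}\sum|x_t|$ combined with the uniform upper bound on $\hat\sigma$ furnished by the extended Proposition \ref{proposition-vol}. The main obstacle, and the only essentially new step, is the uniform control of $B_T(r)$, which is precisely where the strong mixing and moment conditions in Assumption \ref{two factor} enter the argument.
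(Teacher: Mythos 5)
Your overall architecture matches the paper's: carry over the bounds for $\hat\sigma_3^2$ and $\hat\sigma_4^2$ unchanged, isolate the new short-run factor inside $\hat\sigma_1^2$ and kill it with an exponential inequality for mixing sequences, treat $\hat\sigma_2^2$ with the Bercu--Touati inequality applied to the MDS $v_t^2w_t^2(e_t^2-1)$, and then rerun the martingale CLT using $T^{-1}\sum_t w_t^2\to_p 1$. However, there is one concrete gap: your claim that $(v_t^2(w_t^2-1))$ ``inherits $\alpha$-mixing at the same rate as $(w_t^2)$'' because $v_t$ is bounded and adapted is false in exactly the regime the corollary is meant to cover. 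The long-run factor $(v_t)$ is only assumed to satisfy Assumption \ref{assumption-limit}; it may be a bounded transformation of a near unit root process and hence is not strongly mixing, and the product of a bounded non-mixing process with a mixing process is not mixing in general (the magnitude $v_t^2$ carries long-range dependence even if the sign pattern of $w_t^2-1$ mixes). So the exponential inequality for strongly mixing processes cannot be applied to $\sum_t v_t^2(w_t^2-1)K_h(r-t/T)$ as you have written it.

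The paper sidesteps this by choosing the decomposition differently: it writes
\[
\hat\sigma_1^2(r)-\sigma_T^2(r)
=\frac{\sum_{t}\bigl(\sigma_T^2(t/T)-\sigma_T^2(r)\bigr)w_t^2K_h(r-t/T)}{\sum_{t}K_h(r-t/T)}
+\sigma_T^2(r)\,\frac{\sum_{t}(w_t^2-1)K_h(r-t/T)}{\sum_{t}K_h(r-t/T)},
\]
so that the mixing exponential inequality is applied to a sum involving only the genuinely mixing process $(w_t^2-1)$, with the bounded random factor $\sigma_T^2(r)$ pulled outside the sum; the first term is $o_p(1)$ uniformly on $\mathcal{C}_h$ because $\sigma_T^2(t/T)-\sigma_T^2(r)$ is uniformly small on the kernel window (Lemma \ref{lemma-b3}) while the kernel-weighted average of $w_t^2$ is $O_p(1)$. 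Your version can be repaired by inserting exactly this extra split, $v_t^2(w_t^2-1)=(v_t^2-\sigma_T^2(r))(w_t^2-1)+\sigma_T^2(r)(w_t^2-1)$, but as stated the mixing step does not go through. The remainder of your argument (the treatment of $\hat\sigma_2^2$, the reduction $\tau(\hat\sigma)=\tau(\tilde\sigma)(1+o_p(1))$, and the CLT with predictable quadratic variation $T^{-1}\sum_t w_t^2+o_p(1)\to_p 1$) is consistent with the paper's route via the patched process $\bar\sigma$ and Lemma \ref{lemma-limit}.
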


\section{Monte-Carlo Simulations}

This section provides the numerical results on finite sample performance of the proposed robust test based on $\tau(\hat{\sigma})$. We present the comparisons of the finite sample properties of the test with the test proposed by \citet{CJP2016} (denoted as Cauchy RT; RT for random time) and also two other tests considered in \citet{CJP2016}: the Bonferroni $Q$-test of \cite{CampbellYogo2006} (denoted as BQ) and the restricted likelihood ratio test of \cite{ChenDeo2009} (denoted as RLRT).


We consider two different settings for simulation models: continuous time and discrete time DGPs. As for the continuous time DGPs, we follow the simulation designs of \citet{CJP2016}. The data is generated in the continuous time setting using the following DGP:
\begin{eqnarray}
dY_t&=&\frac{\bar{\beta}}{T} X_tdt+dU_t,
\quad\quad dU_t=\sigma_t\left(dW_{1t}+\int_{\mathbb R}{x\Lambda(dt,dx)}\right),\label{MC1}\\ \label{Xt}
dX_t&=&-\frac{\bar{\kappa}}{T} X_tdt+\sigma_tdW_{2t},
\end{eqnarray}
where $W_{1t}$ and $W_{2t}$ are Brownian motions with $E(W_{1t}W_{2t})=-0.98t$. We set the constant term in the predictive regression to be zero and use recursive de-meaning. We assume that the continuous time models are observed at $\delta$-intervals over $T$ years with $\delta=1/252$, which corresponds to daily observations of size $252 T$.

The volatility process considered in the numerical results is assumed to follow one of the following models:
\begin{itemize}
\item Model CNST. \textit{Constant volatility}: $
\sigma_t^2=\sigma_0^2$, $\sigma_0=1$.
\item Model SB. \textit{Structural break in volatility}: $\sigma_0+(\sigma_1-\sigma_0) 1\{t/T\geq 4/5\}$ with $\sigma_0=1$ and $\sigma_1=4$. 
\item Model GBM. \textit{Geometric Brownian motion}: $d\sigma_t^2=\frac{1}{2}\frac{\bar{\omega}^2}{T}\sigma_t^2dt+\frac{\bar{\omega}^2}{\sqrt{T}}\sigma_t^2dZ_t$, where $Z_t$  is a Brownian motion with $E(W_{1t} Z_t) = -0.4 t$, and $\bar{\omega} = 9$.
\item Model RS. \textit{Regime switching}: $\sigma_t=\sigma_0(1-s_t)+\sigma_1s_t$, where $s_t$ is a homogeneous Markov process indicating the current state of the world which is independent of both $Y_t$ and $X_t$ with the state space $\{0,1\}$ and the  transition matrix 
\[P_t=\begin{pmatrix}
0.8 & 0.2\\
0.8 & 0.2
\end{pmatrix}
+
\begin{pmatrix}
0.2 & -0.2\\
-0.8 & 0.8
\end{pmatrix}\exp\left(-\frac{\bar{\lambda}}{T}t\right),
\]
where $\bar{\lambda}=60$, $\sigma_0=1$ and $\sigma_1=4$. The process $s_t$ is initialized by its invariant distribution.
\end{itemize}
We set the number of years $T\in\{5,20,50\}$ (which corresponds to 60, 240 and 600 monthly data) and consider the values $\bar{\kappa}\in\{0,5,10\}$ for the persistence parameter $\bar{\kappa}$  of $X_t$ in (\ref{Xt}).   

As indicated before, in contrast to the Cauchy RT test in \citet{CJP2016}, our test is applicable, not only in the continuous time models, but also in the discrete time framework. We consider the following discrete time models in the analysis of the finite sample performance of the tests:
\begin{eqnarray}
y_t=\frac{\bar{\beta}}{T}x_{t-1}+\sigma_{\varepsilon,t}\varepsilon_t,\qquad
x_t=\left(1-\frac{\bar{\kappa}}{T}\right) x_{t-1}+\sigma_{\eta,t}\eta_t,\label{MC2}
\end{eqnarray}
for $t=2, \dots,T$, where $T\in\{60, 240, 600\}$ (the same number of monthly observations as in continuous time simulations) and the same values of $\bar{\beta}$ and $\bar{\kappa}$. Here the innovations $(\varepsilon_t, \eta_t)$ are assumed to be multivariate normal with the correlation coefficient $-$0.98.

For the volatility processes in the discrete time setting, we consider three specifications: Model CNST and Model SB as in the continuous time setup, and GARCH volatility dynamics with
\begin{align*}
\sigma_{\varepsilon,t}^2
=1+\alpha\varepsilon_{t-1}^2+\theta\sigma_{\varepsilon,t-1}^2,
\qquad
\sigma_{\eta,t}^2
=1+\alpha \eta_{t-1}^2+\theta\sigma_{\eta,t-1}^2.
\end{align*}
In the numerical analysis, we consider the ARCH(1) processes with $\theta=0,$ $\alpha=0.5773$ (stationary with infinite fourth moment); $\theta=0,$ $\alpha=0.7325$ (stationary with infinite third moment); IGARCH(1,1) models with $\alpha=0.9$, $\theta=0.1$ and $\alpha=0.1$, $\theta=0.9$ (nonstationary). Note that the ARCH(1) processes in our simulations violate the moment conditions in Assumption \ref{two factor}. As shown in our simulation results below, our approach has reliable size and power properties even though the required moment conditions are violated.\footnote{See, among others, \cite{MS}, \cite{DM}, \cite*{IPS}, and references therein for the results on moment properties of GARCH processes and their importance in robust econometric inference.}


\subsection{Finite Sample Size Properties}

In this section, we analyze finite sample size properties of the no predictability tests by setting $\bar{\beta}=0$ in the regression models \eqref{MC1} and \eqref{MC2}. The numerical results on the finite sample size properties are presented in Tables \ref{tab1}-\ref{tab2}. 

Table \ref{tab1} provides the finite sample size results for models CNST, SB, GBM and RS in the continuous time setting. The finite sample size values for the OLS, BQ, RLRT and Cauchy RT tests are exactly the same as those reported in \citet{CJP2016}. These numerical results show that the size of the OLS, BQ and RLRT tests is highly distorted for most of the time-varying volatility models considered. In contrast, the rejection probabilities of the proposed test are very close to their nominal levels, such as the Cauchy RT test, regardless of the values of $\bar{\kappa}$ and $T$, and the volatility models we consider in our simulations. For the $5\%$ test, rejection probabilities stay between $4\%$ to $8\%$ without any exception. 

As mentioned before, the Cauchy RT test is inapplicable in the discrete time settings. Table \ref{tab2} provides the numerical results on finite sample size properties of all the tests except Cauchy RT under the discrete time settings. The quantitative and qualitative comparisons of the size properties of the tests are similar to the continuous time case. In summary, the finite sample size properties reported in Tables \ref{tab1}-\ref{tab2} show that the proposed test has a reliable size performance and is widely applicable for both discrete and continuous time settings.

\subsection{Finite Sample Power Properties}

Figures \ref{fig1}-\ref{fig6} present the results on finite sample power properties of the tests considered.\footnote{We report the power properties for Models SB, GBM and RS (continuous time) as well as Models SB and GARCH (discrete time) in Section 5.2. The power properties for the other models, Model CNST (continuous time) as well as Models CNST and ARCH (discrete time), are presented in the Supplementary Online Appendix.} In our simulations, we  consider the DGPs in \eqref{MC1} in continuous time and  \eqref{MC2} in discrete time with $\bar{\beta}$ ranging from 0 to 20. All the power curves presented in the figures are size-adjusted. Taking into account the results of finite sample size performance of the tests and their comparisons, we mainly focus on two tests: the Cauchy RT and our test, in the analysis of finite sample power properties. 
For comparison, the analysis also provides the numerical results on the finite sample power of the OLS, BQ and RLRT tests.

In Figure \ref{fig1} for the case of the structural break in volatility, one observes that the Cauchy RT test appears to be superior to other testing approaches (except in the cases with $\bar{\kappa}=0$ and large $\bar{\beta}$). At the same time, the proposed test $\tau(\hat{\sigma})$  also appears to have good finite sample power properties especially in the case of highly persistent predictors.

Figure \ref{fig2} provides the numerical results on finite sample properties of the tests in the geometric Brownian motion case.  For the case of a unit root regressor, the power properties of the proposed test based on $\tau(\hat{\sigma})$ appear to outperform those of the Cauchy RT test which in turn outperforms other tests considered. However, the finite sample power performance of the Cauchy RT test improves in the case of near unit root regressor with $\bar{\kappa}=5$ and $\bar{\kappa}=20$. 

The power curves for the regime switching case presented in Figure \ref{fig3} demonstrate that the test based on the proposed test $\tau(\hat{\sigma})$ has better power  properties than other tests in the case $\bar{\kappa}=0.$ For the case of the near unit root persistence in the regressor, the power properties of the Cauchy RT test appear to be better than those of the test based on $\tau(\hat{\sigma})$ for relatively small sample sizes (small values of $T$). However, as the sample size increases, the test based on $\tau(\hat{\sigma})$ becomes more powerful than the Cauchy RT test (see, e.g., Figure \ref{fig3} for the case $\bar{\kappa}=5$ and $T=50$). For large deviations from a unit root regressor ($\bar{\kappa}=20$), the Cauchy RT is more powerful than other tests, but the power curves appear to be very similar.

Figures \ref{fig4}-\ref{fig6} present the numerical results on power properties under discrete time settings for all the tests considered except Cauchy RT which is inapplicable in  discrete time settings. Results in the figures are provided for the cases of the structural break in volatility (Figure \ref{fig4}); the GARCH cases with $(\alpha,\theta)=(0.9,0.1)$ (Figure \ref{fig5}) and $(\alpha,\theta)=(0.1,0.9)$ (Figure \ref{fig6}). For all the cases, the conclusions on power properties of the tests and their comparisons are virtually the same as in the continuous time framework. 


Overall, the numerical results on finite sample properties of the tests indicate good performance of the test based on $\tau(\hat{\sigma})$ in comparison to the Cauchy RT. Again, the latter test is inapplicable in the discrete time settings. Their relative finite sample performances vary across different models. Which test should be used in practice depends on the availability of high frequency data as well as the size-power trade-off for a specific model. 
Therefore, the test proposed in this paper and the Cauchy RT complement rather than substitute one another.

\setlength{\extrarowheight}{0.3ex}
\newcolumntype{C}{>{\centering\arraybackslash}X}

\section{Conclusion}

Endogenously persistent regressors have been extensively analyzed in the predictive regression literature. A widely believed characteristic of stock returns is heteroskedastic and persistent volatility, which is often ignored in the predictive regression literature except for  \cite{CJP2016}. These two characteristics cause standard hypothesis tests to become substantially biased and often over-reject the null of no predictability. The main contribution of this paper is to provide an inference method that is designed to be robust to these problematic characteristics of predictive regression data. The proposed method relies on the Cauchy estimator and a kernel-based nonparametric correction of volatility. Its theoretical validity is provided by analyzing the asymptotic size and power properties. Moreover, it is shown through a simulation study that the proposed method has a reliable finite sample performance compared to the most advanced existing inference methods.  

Our inference method is comparable to the method proposed by \cite{CJP2016}. Similar to our method, their approach relies on the Cauchy estimator and a nonparametric volatility correction. However, their approach to the volatility correction is quite different from ours, and its applicability is limited to a predictive regression equipped with appropriate high frequency data. In terms of finite sample properties, our method and the method by \cite{CJP2016} perform well and have good size and power performances under continuous time settings. However, unlike our method, \cite{CJP2016} method is not applicable under discrete time settings. In contrast, our method can be applied to a discrete time model as well as a discrete sample collected from an underlying continuous time model. Therefore, our method is more flexible and widely applicable since it can be applied to both high and low frequency data.

A further approach to robust inference in predictive regressions under heterogeneous and persistent volatility as well as endogenous, persistent or heavy-tailed regressors is provided by the simple to implement robust $t$-statistic inference approach (see \citet{IbragimovMuller2010}) based on asymptotically normal group Cauchy estimators of a regression parameter of interest. This  approach will be explored in a companion paper now in preparation.

\section*{Online Supplementary Material}
Ibragimov, R., Kim, J, and Skrobotov, A. (2022): Supplement to ``New robust inference for predictive regressions,'' Econometric Theory Supplementary Material. To view, please visit:

\bibliographystyle{agsm} 
\bibliography{Predictive}

%
%
%
%


\newpage
\renewcommand{\baselinestretch}{1}
\renewcommand{\thesection}{\Alph{section}}
\renewcommand{\theequation}{\Alph{section}.\arabic{equation}}
\setcounter{section}{0}
\setcounter{equation}{0}

\begin{table}[h!]
\begin{center}
\footnotesize
\caption{Size for the continuous time models\label{tab1}}

\begin{tabularx}{0.95\textwidth}{CcCCCCCCcCC} \toprule
&&\multicolumn{3}{c}{$\bar{\kappa}=0$}&\multicolumn{3}{c}{$\bar{\kappa}=5$}&\multicolumn{3}{c}{$\bar{\kappa}=20$}\\
\cmidrule(r){3-5}\cmidrule(r){6-8}\cmidrule(r){9-11}
T&&5&20&50&5&20&50&5&20&50\\\hline
CNST&OLS&42.2&42.0&43.0&19.5&19.5&19.7&11.1&11.2&10.9\\
&BQ&8.6&4.9&4.3&7.5&4.5&4.2&8.6&4.1&3.2\\
&RLRT&8.5&7.7&8.1&5.4&5.9&5.6&4.8&5.2&5.3\\
&Cauchy RT&5.3&4.9&5.3&5.2&5.4&4.7&5.5&5.1&5.1\\
&$\tau(\hat{\sigma})$&5.6&5.0&5.3&5.4&5.0&5.1&5.4&5.0&4.8\\
\hline
SB&OLS&38.3&38.8&39.9&29.6&30.8&31.2&24.3&26.4&26.0\\
&BQ&18.1&12.9&11.9&17.0&15.1&14.1&17.4&14.8&14.3\\
&RLRT&23.8&22.8&23.6&21.0&21.9&21.8&22.4&24.5&23.6\\
&Cauchy RT&5.6&5.0&5.1&5.2&5.3&5.0&5.4&5.0&4.9\\
&$\tau(\hat{\sigma})$&8.0&6.7&6.3&7.8&6.5&6.0&7.9&6.4&6.0\\
\hline
RS&OLS&42.9&43.6&44.6&22.0&23.4&24.5&14.9&18.9&19.5\\
&BQ&8.8&6.3&6.0&9.8&7.2&6.8&12.6&8.9&8.4\\
&RLRT&9.3&10.0&10.7&7.5&9.4&9.6&9.6&13.0&14.2\\
&Cauchy RT&5.0&4.8&5.2&4.9&4.9&4.9&5.4&5.1&4.8\\
&$\tau(\hat{\sigma})$&5.2&5.4&6.1&5.2&5.1&5.8&5.6&5.8&5.8\\
\hline
GBM&OLS&52.2&53.7&53.1&28.6&30.2&30.9&23.2&26.0&27.0\\
&BQ&16.8&12.5&11.3&13.9&12.4&13.2&15.8&11.7&12.0\\
&RLRT&21.7&22.3&21.9&16.3&17.8&19.0&21.4&23.3&23.5\\
&Cauchy RT&4.4&4.7&4.4&4.3&4.5&4.4&4.6&4.5&4.5\\
&$\tau(\hat{\sigma})$&5.4&5.5&6.1&5.7&5.7&5.9&5.7&5.9&6.5\\
 \bottomrule\smallskip
\end{tabularx}
\end{center}
The parameter $\bar{\kappa}$ measures the degree of persistence in the predictor. The sample size corresponds to $T$ yearly observations (total $12T$ observations). CNST, SB, GBM, and RS denote respectively constant volatility, structural break, geometric Brownian motion, and regime switching in volatility. 
\end{table}

\newpage

\begin{table}[h!]
\begin{center}
\footnotesize
\caption{Size for the discrete time models with CNST and SB\label{tab2}}

\begin{tabularx}{0.95\textwidth}{ccCCCCCCcCC} \toprule
&&\multicolumn{3}{c}{$\bar{\kappa}=0$}&\multicolumn{3}{c}{$\bar{\kappa}=5$}&\multicolumn{3}{c}{$\bar{\kappa}=20$}\\
\cmidrule(r){3-5}\cmidrule(r){6-8}\cmidrule(r){9-11}
T&&5&20&50&5&20&50&5&20&50\\\hline
CNST&OLS&43.9&43.8&44.7&19.4&19.8&20.1&9.7&11.2&10.8\\
&BQ&8.4&5.2&4.8&7.8&4.9&4.5&9.2&4.1&3.4\\
&RLRT&8.3&8.0&8.1&5.2&5.4&5.3&4.1&5.4&5.3\\
&$\tau(\hat{\sigma})$&5.5&5.1&5.0&5.5&4.8&5.1&5.1&5.2&5.2\\
\hline
SB&OLS&38.0&39.6&40.0&29.1&31.1&31.4&22.1&26.1&26.8\\
&BQ&17.2&12.8&12.3&16.5&15.1&14.5&17.7&15.0&15.2\\
&RLRT&23.1&23.5&24.2&19.9&21.8&21.4&21.2&24.6&25.1\\
&$\tau(\hat{\sigma})$&8.0&6.7&6.3&7.9&6.2&5.8&7.5&6.5&6.2\\
\hline
ARCH(1)&OLS&45.0&44.1&43.5&23.5&22.5&21.2&17.2&17.0&15.2\\
$\alpha=0.5773$&BQ&9.7&5.4&4.8&9.5&5.8&4.6&13.1&6.3&4.6\\
$\xi=4$&RLRT&9.6&8.8&8.7&9.0&7.9&6.7&13.1&11.3&9.1\\
&$\tau(\hat{\sigma})$&6.1&5.4&6.0&6.1&5.2&5.4&6.0&5.9&6.1\\
\hline
ARCH(1)&OLS&45.8&44.0&43.6&24.4&24.1&22.6&19.7&19.8&18.1\\
$\alpha=0.7325$&BQ&10.2&6.2&5.2&10.4&7.1&5.8&14.7&8.3&6.8\\
$\xi=3$&RLRT&10.7&10.0&9.2&10.6&10.0&8.1&15.9&14.9&12.9\\
&$\tau(\hat{\sigma})$&5.9&5.8&6.5&6.2&5.6&6.0&6.4&6.1&6.1\\
\hline
IGARCH(1,1)&OLS&44.9&45.8&45.6&20.1&21.8&24.3&11.1&14.9&17.3\\
$\alpha=0.9$&BQ&8.9&5.8&6.0&7.8&6.0&6.9&9.3&5.3&6.4\\
$\beta=0.1$&RLRT&9.1&10.1&11.5&5.7&8.3&9.8&6.2&9.0&11.5\\
&$\tau(\hat{\sigma})$&6.2&5.5&5.5&5.8&5.6&6.0&5.9&5.8&5.7\\
\hline
IGARCH(1,1)&OLS&46.0&46.5&45.1&26.9&28.5&28.0&21.6&26.1&26.2\\
$\alpha=0.1$&BQ&11.7&8.3&8.1&12.7&10.4&10.9&16.4&12.2&13.4\\
$\beta=0.9$&RLRT&13.3&13.0&12.7&13.7&14.7&15.1&20.2&23.7&23.2\\
&$\tau(\hat{\sigma})$&6.3&6.4&7.4&6.9&6.7&7.2&6.6&6.9&6.9\\
 \bottomrule\smallskip
\end{tabularx}
\end{center}
The parameter $\bar{\kappa}$ measures the degree of persistence in the predictor. The sample size corresponds to $T$ yearly observations (total $12T$ observations). CNST, SB, GBM, and RS denote respectively constant volatility, structural break, geometric Brownian motion, and regime switching in volatility. 
\end{table}

\begin{landscape}
\begin{figure}[h]%
\begin{center}%
\subfigure[$\bar{\kappa}=0$, $T=5$]{\includegraphics[width=0.30\linewidth]{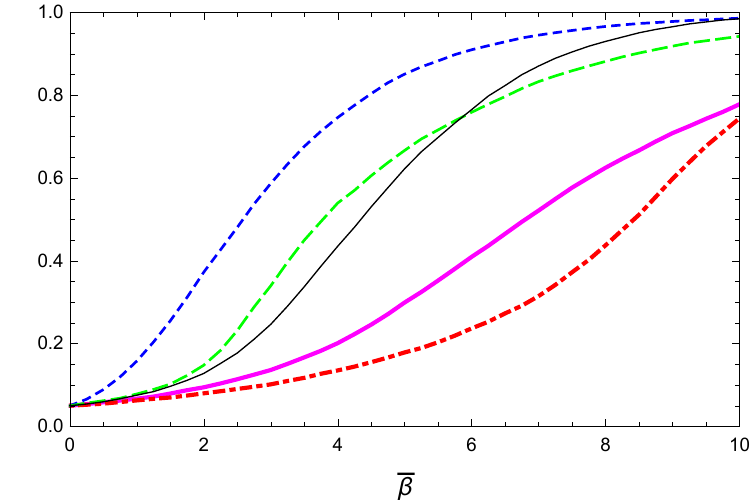}\label{fig:2:1}}
\subfigure[$\bar{\kappa}=0$, $T=20$]{\includegraphics[width=0.30\linewidth]{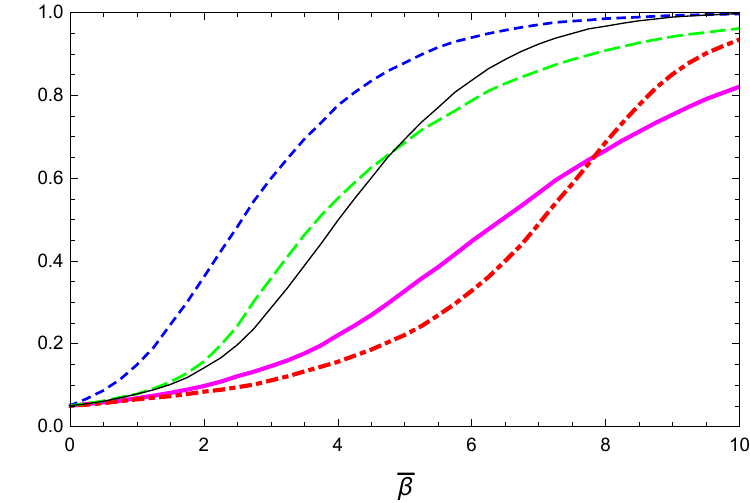}\label{fig:2:2}}
\subfigure[$\bar{\kappa}=0$, $T=50$]{\includegraphics[width=0.30\linewidth]{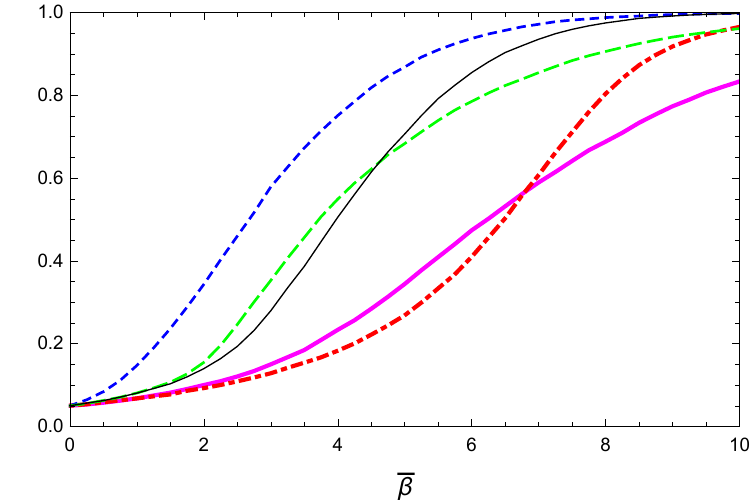}\label{fig:2:3}}\\
\subfigure[$\bar{\kappa}=5$, $T=5$]{\includegraphics[width=0.30\linewidth]{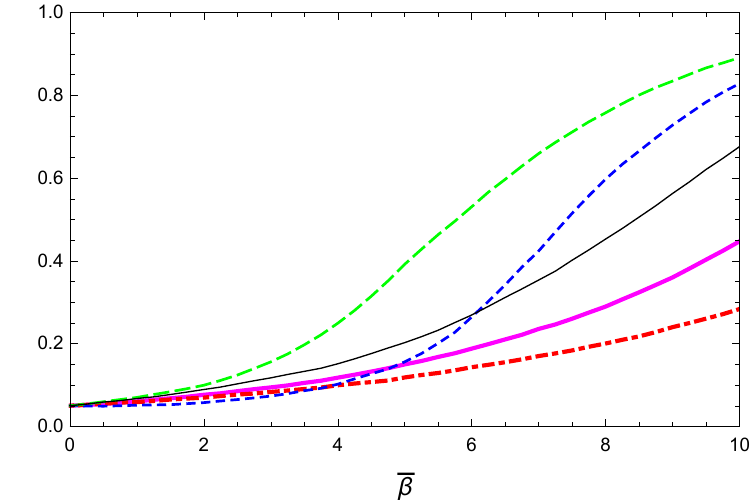}\label{fig:2:4}}
\subfigure[$\bar{\kappa}=5$, $T=20$]{\includegraphics[width=0.30\linewidth]{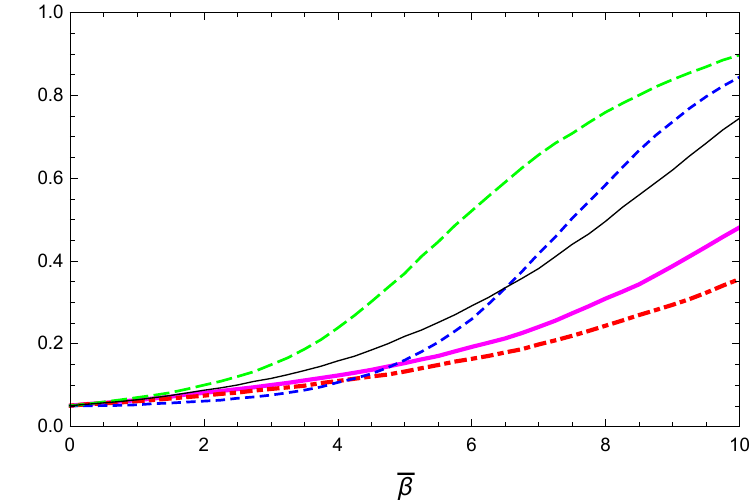}\label{fig:2:5}}
\subfigure[$\bar{\kappa}=5$, $T=50$]{\includegraphics[width=0.30\linewidth]{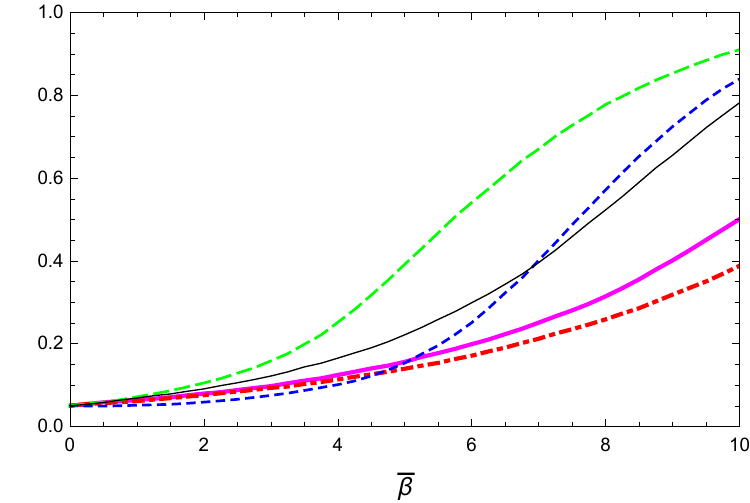}\label{fig:2:6}}\\
\subfigure[$\bar{\kappa}=20$, $T=5$]{\includegraphics[width=0.30\linewidth]{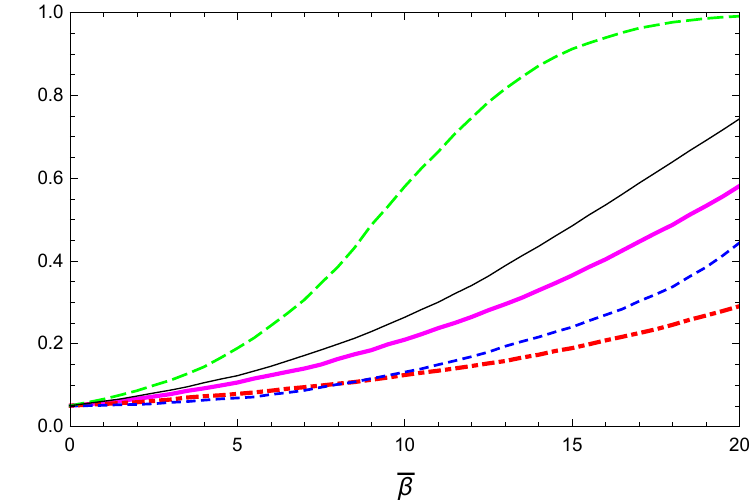}\label{fig:2:7}}
\subfigure[$\bar{\kappa}=20$, $T=20$]{\includegraphics[width=0.30\linewidth]{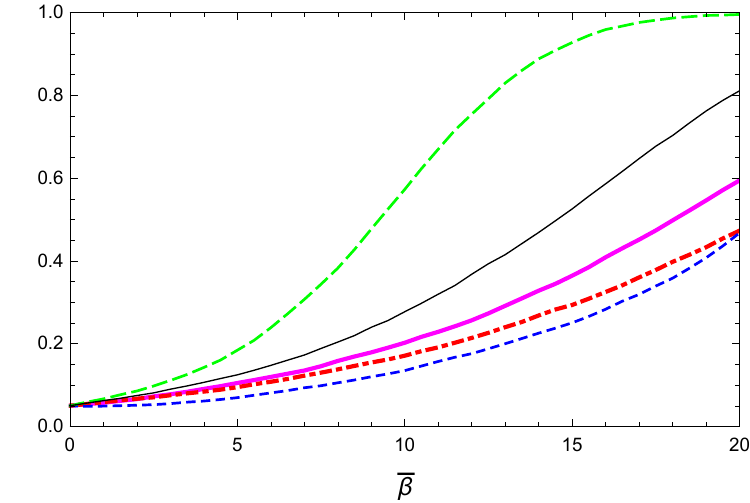}\label{fig:2:8}}
\subfigure[$\bar{\kappa}=20$, $T=50$]{\includegraphics[width=0.30\linewidth]{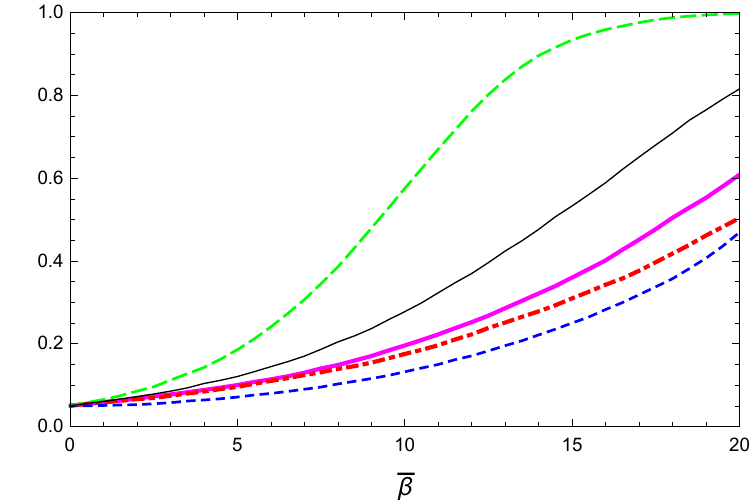}\label{fig:2:9}}
\end{center}%
\caption{Power for SB (continuous time)}
\label{fig1}
\centering
\footnotesize{OLS:$\textcolor{magenta}{\rule[0.25em]{2em}{1.6pt}\ }$,
Bonf. Q:$\textcolor{red}{\rule[0.25em]{0.6em}{1.7pt} \ \mathbf{\cdot} \ \rule[0.25em]{0.6em}{1.7pt} \ }$
RLRT:$\textcolor{blue}{\rule[0.25em]{0.4em}{1.6pt} \ \rule[0.25em]{0.4em}{1.6pt}\ }$, 
Cauchy RT:$\textcolor{green}{\rule[0.25em]{0.8em}{1.6pt} \ \rule[0.25em]{0.8em}{1.6pt}\ }$, 
NP:$\textcolor{black}{\rule[0.25em]{1.9em}{0.5pt}}$}
\end{figure}
\end{landscape}

\begin{landscape}
\begin{figure}[h]%
\begin{center}%
\subfigure[$\bar{\kappa}=0$, $T=5$]{\includegraphics[width=0.30\linewidth]{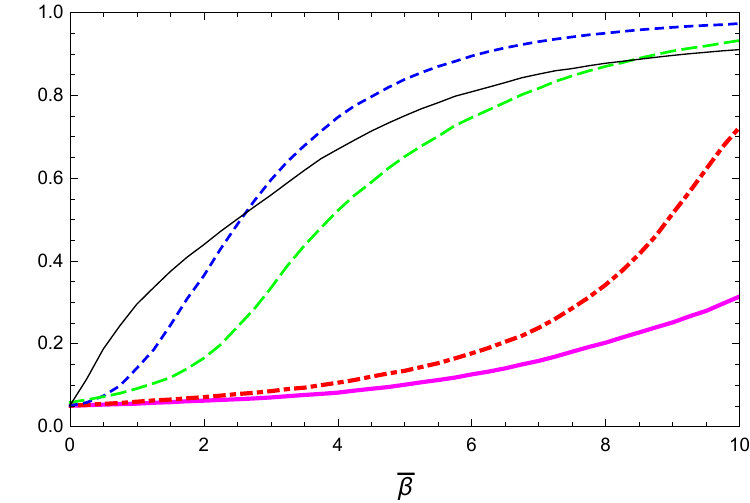}\label{fig:3:1}}
\subfigure[$\bar{\kappa}=0$, $T=20$]{\includegraphics[width=0.30\linewidth]{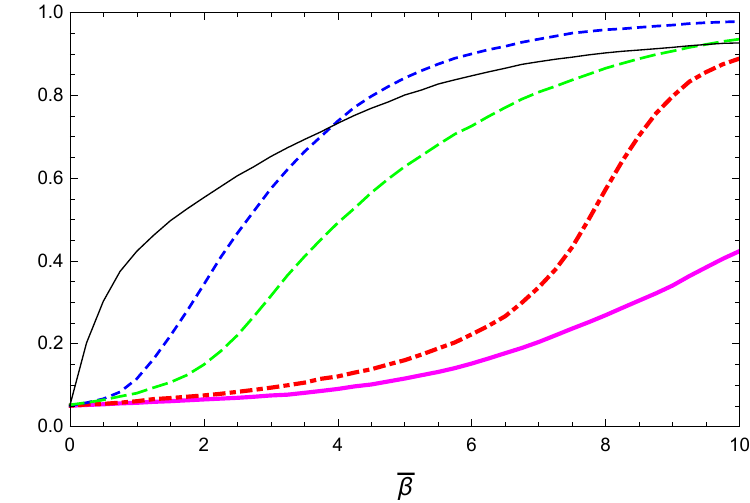}\label{fig:3:2}}
\subfigure[$\bar{\kappa}=0$, $T=50$]{\includegraphics[width=0.30\linewidth]{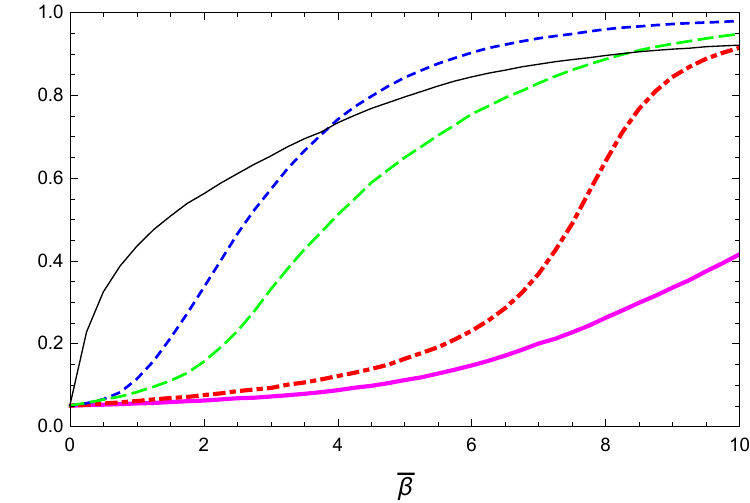}\label{fig:3:3}}\\
\subfigure[$\bar{\kappa}=5$, $T=5$]{\includegraphics[width=0.30\linewidth]{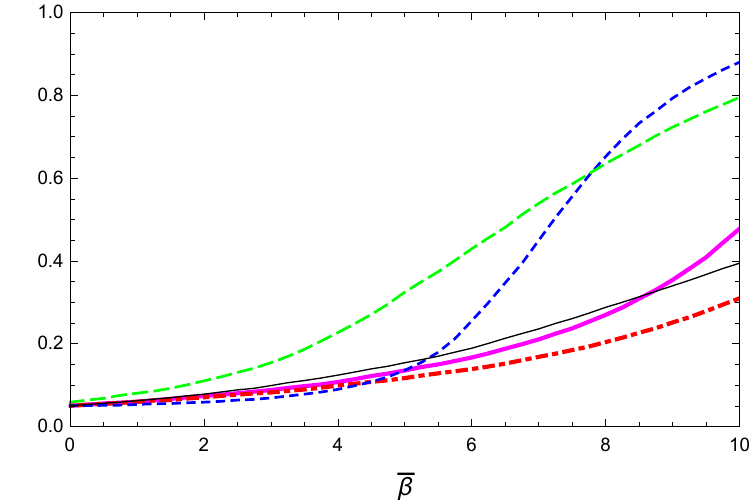}\label{fig:3:4}}
\subfigure[$\bar{\kappa}=5$, $T=20$]{\includegraphics[width=0.30\linewidth]{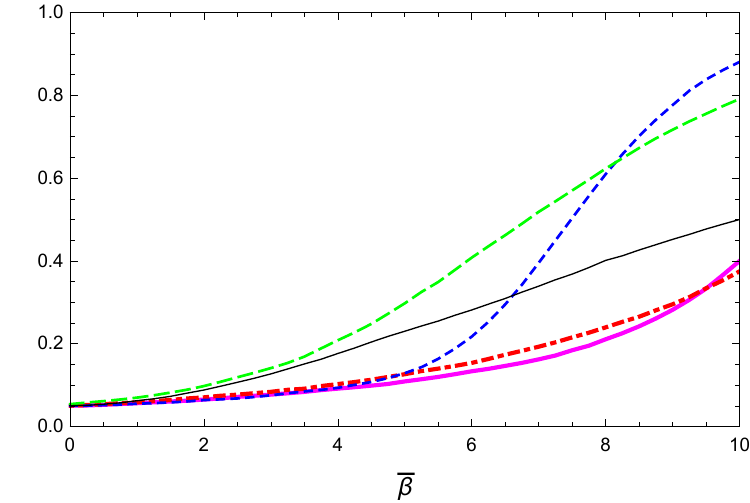}\label{fig:3:5}}
\subfigure[$\bar{\kappa}=5$, $T=50$]{\includegraphics[width=0.30\linewidth]{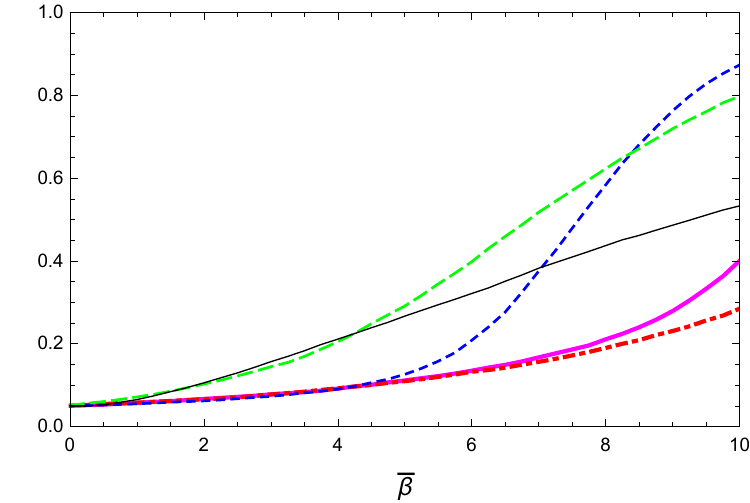}\label{fig:3:6}}\\
\subfigure[$\bar{\kappa}=20$, $T=5$]{\includegraphics[width=0.30\linewidth]{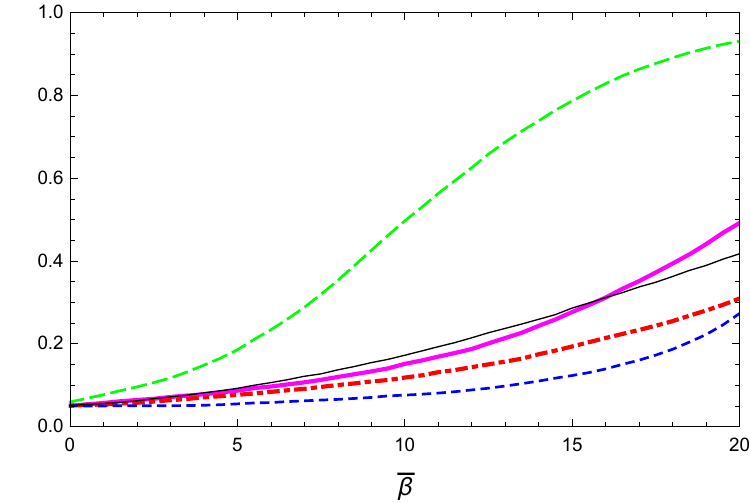}\label{fig:3:7}}
\subfigure[$\bar{\kappa}=20$, $T=20$]{\includegraphics[width=0.30\linewidth]{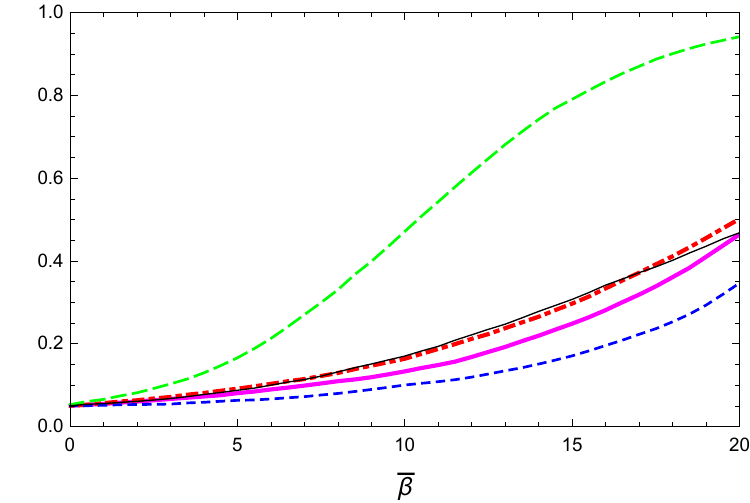}\label{fig:3:8}}
\subfigure[$\bar{\kappa}=20$, $T=50$]{\includegraphics[width=0.30\linewidth]{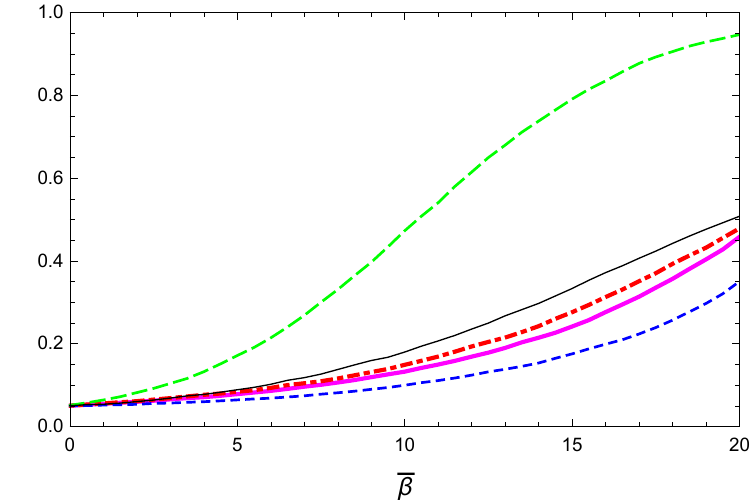}\label{fig:3:9}}
\end{center}%
\caption{Power for GBM (continuous time)}
\label{fig2}
\centering
\footnotesize{OLS:$\textcolor{magenta}{\rule[0.25em]{2em}{1.6pt}\ }$,
Bonf. Q:$\textcolor{red}{\rule[0.25em]{0.6em}{1.7pt} \ \mathbf{\cdot} \ \rule[0.25em]{0.6em}{1.7pt} \ }$
RLRT:$\textcolor{blue}{\rule[0.25em]{0.4em}{1.6pt} \ \rule[0.25em]{0.4em}{1.6pt}\ }$, 
Cauchy RT:$\textcolor{green}{\rule[0.25em]{0.8em}{1.6pt} \ \rule[0.25em]{0.8em}{1.6pt}\ }$, 
NP:$\textcolor{black}{\rule[0.25em]{1.9em}{0.5pt}}$}
\end{figure}
\end{landscape}

\begin{landscape}
\begin{figure}[h]%
\begin{center}%
\subfigure[$\bar{\kappa}=0$, $T=5$]{\includegraphics[width=0.30\linewidth]{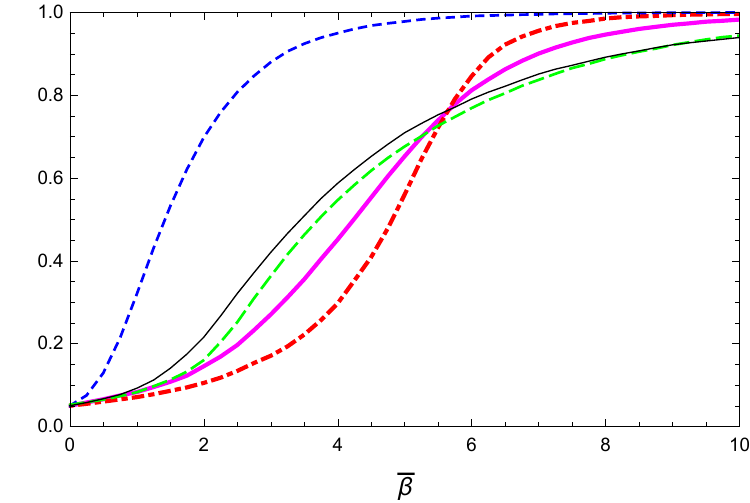}\label{fig:4:1}}
\subfigure[$\bar{\kappa}=0$, $T=20$]{\includegraphics[width=0.30\linewidth]{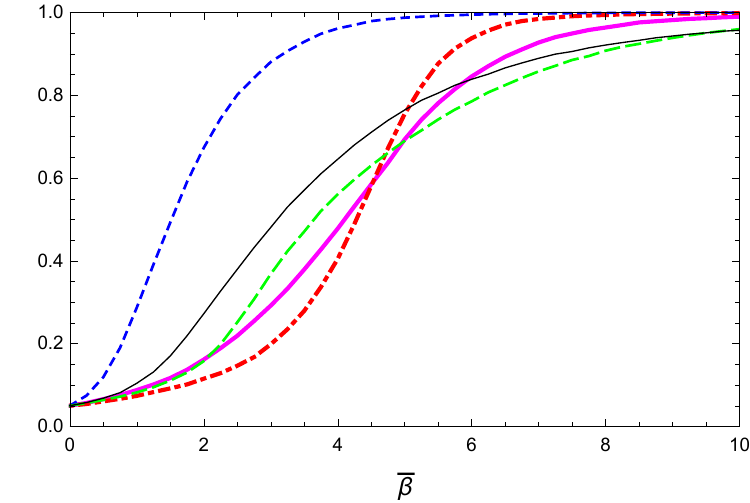}\label{fig:4:2}}
\subfigure[$\bar{\kappa}=0$, $T=50$]{\includegraphics[width=0.30\linewidth]{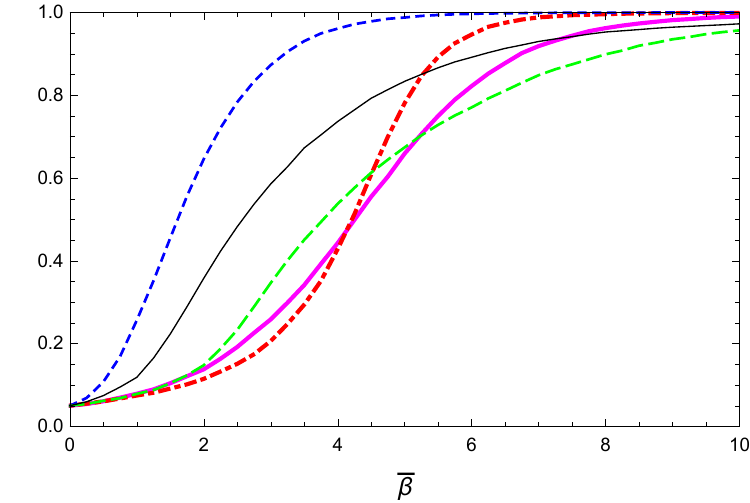}\label{fig:4:3}}\\
\subfigure[$\bar{\kappa}=5$, $T=5$]{\includegraphics[width=0.30\linewidth]{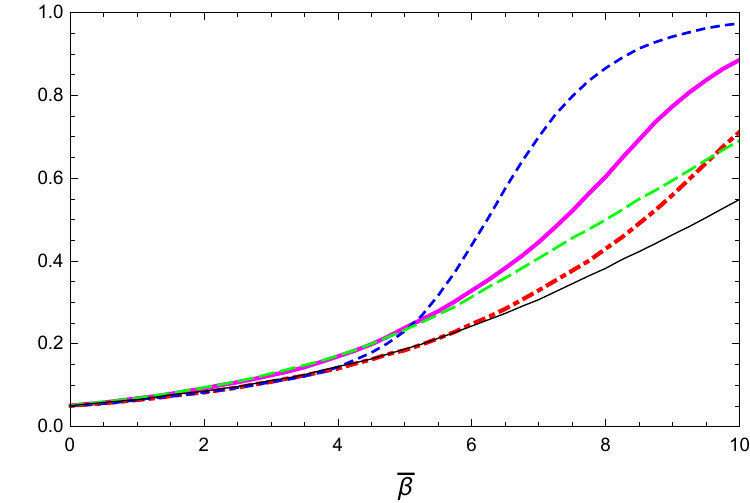}\label{fig:4:4}}
\subfigure[$\bar{\kappa}=5$, $T=20$]{\includegraphics[width=0.30\linewidth]{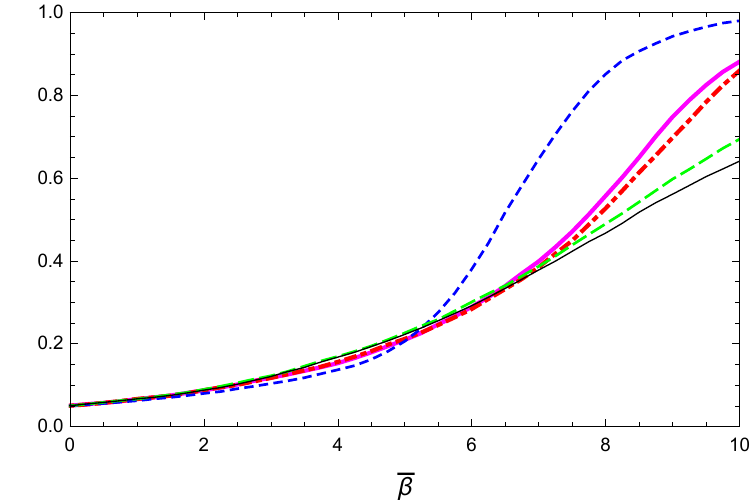}\label{fig:4:5}}
\subfigure[$\bar{\kappa}=5$, $T=50$]{\includegraphics[width=0.30\linewidth]{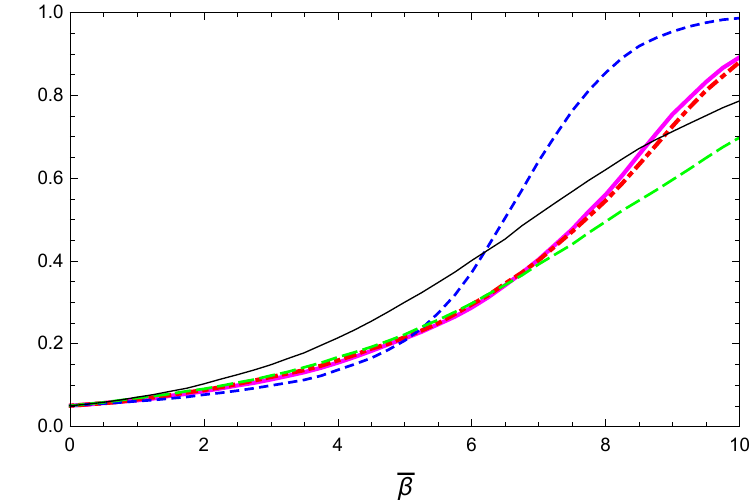}\label{fig:4:6}}\\
\subfigure[$\bar{\kappa}=20$, $T=5$]{\includegraphics[width=0.30\linewidth]{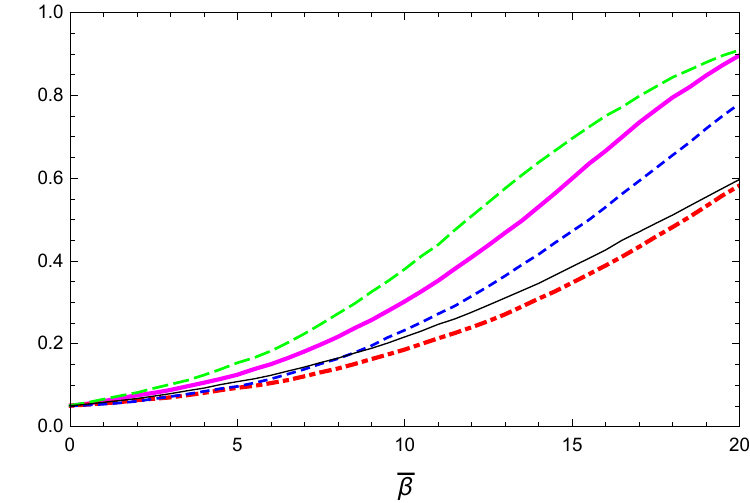}\label{fig:4:7}}
\subfigure[$\bar{\kappa}=20$, $T=20$]{\includegraphics[width=0.30\linewidth]{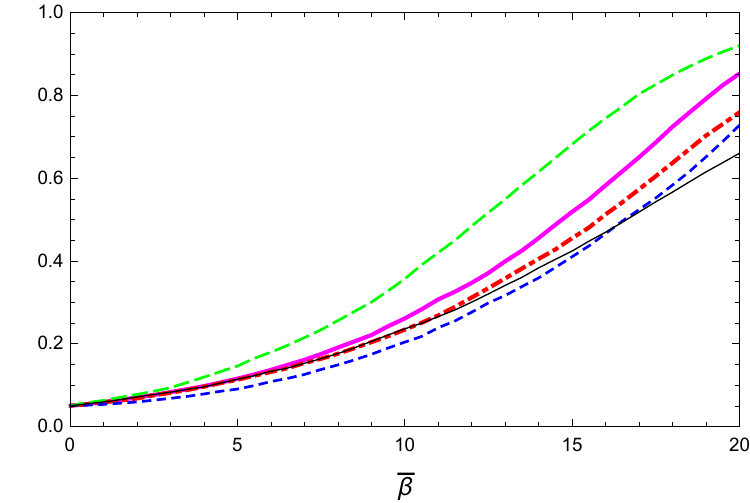}\label{fig:4:8}}
\subfigure[$\bar{\kappa}=20$, $T=50$]{\includegraphics[width=0.30\linewidth]{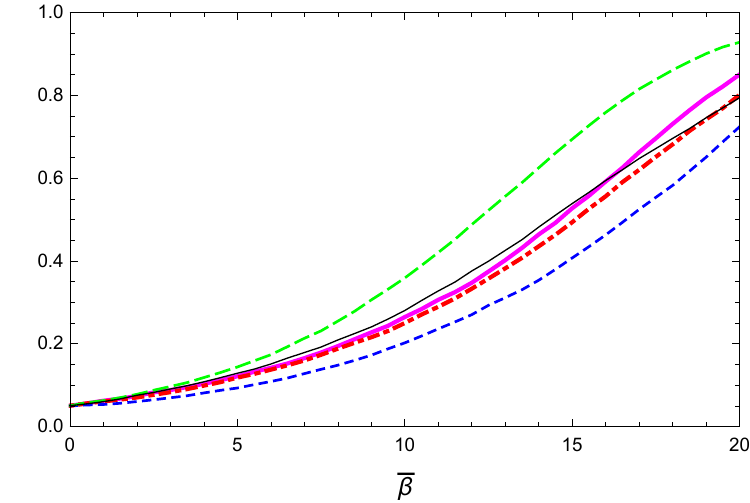}\label{fig:4:9}}
\end{center}%
\caption{Power for RS (continuous time)}
\label{fig3}
\centering
\footnotesize{OLS:$\textcolor{magenta}{\rule[0.25em]{2em}{1.6pt}\ }$,
Bonf. Q:$\textcolor{red}{\rule[0.25em]{0.6em}{1.7pt} \ \mathbf{\cdot} \ \rule[0.25em]{0.6em}{1.7pt} \ }$
RLRT:$\textcolor{blue}{\rule[0.25em]{0.4em}{1.6pt} \ \rule[0.25em]{0.4em}{1.6pt}\ }$, 
Cauchy RT:$\textcolor{green}{\rule[0.25em]{0.8em}{1.6pt} \ \rule[0.25em]{0.8em}{1.6pt}\ }$, 
NP:$\textcolor{black}{\rule[0.25em]{1.9em}{0.5pt}}$}
\end{figure}
\end{landscape}

\begin{landscape}
\begin{figure}[h]%
\begin{center}%
\subfigure[$\bar{\kappa}=0$, $T=60$]{\includegraphics[width=0.30\linewidth]{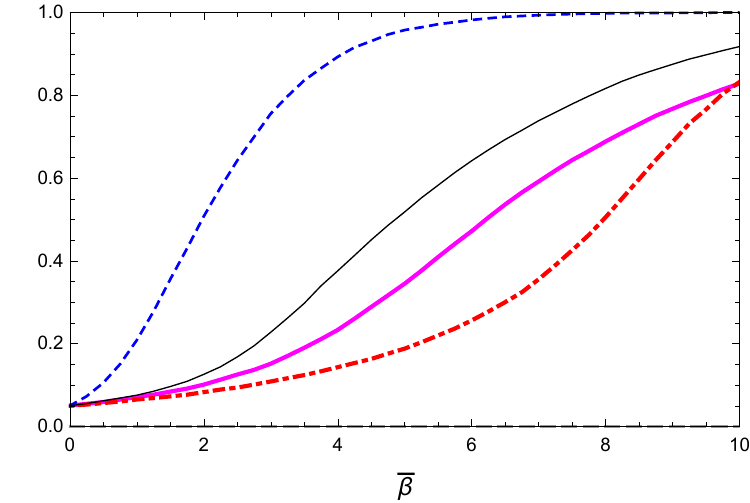}\label{fig:6:1}}
\subfigure[$\bar{\kappa}=0$, $T=240$]{\includegraphics[width=0.30\linewidth]{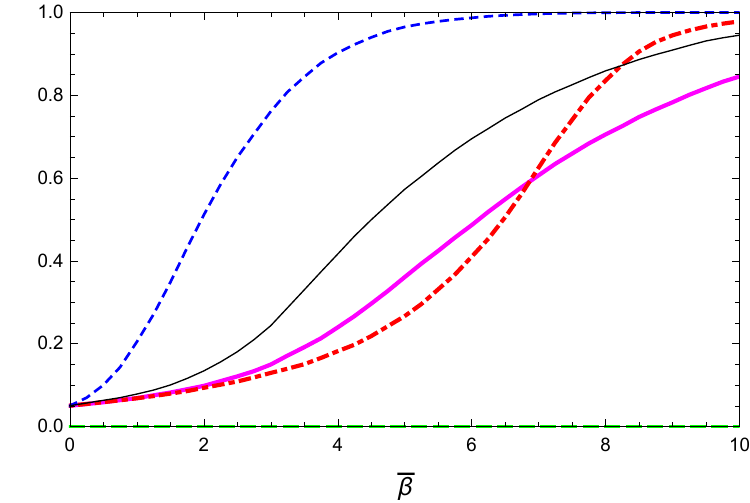}\label{fig:6:2}}
\subfigure[$\bar{\kappa}=0$, $T=600$]{\includegraphics[width=0.30\linewidth]{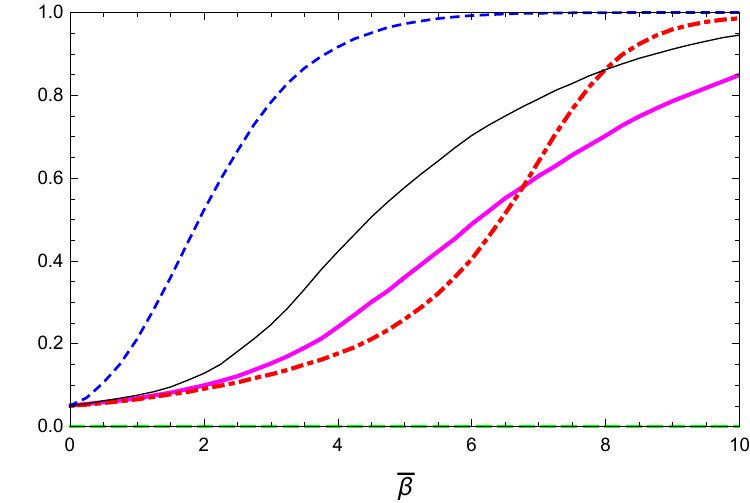}\label{fig:6:3}}\\
\subfigure[$\bar{\kappa}=5$, $T=60$]{\includegraphics[width=0.30\linewidth]{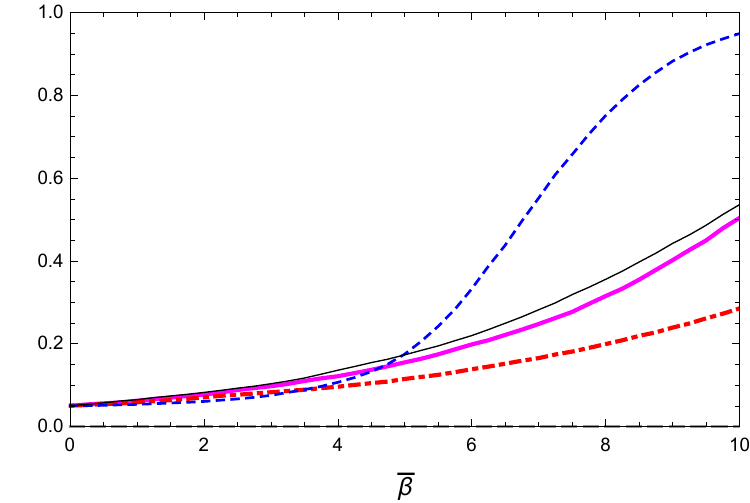}\label{fig:6:4}}
\subfigure[$\bar{\kappa}=5$, $T=240$]{\includegraphics[width=0.30\linewidth]{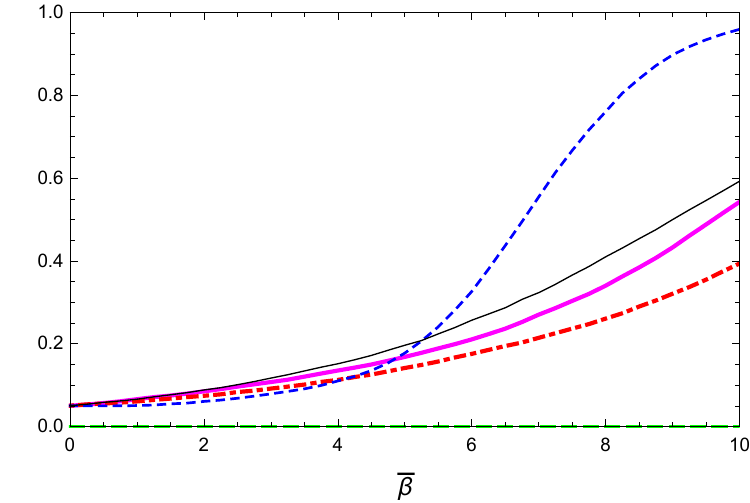}\label{fig:6:5}}
\subfigure[$\bar{\kappa}=5$, $T=600$]{\includegraphics[width=0.30\linewidth]{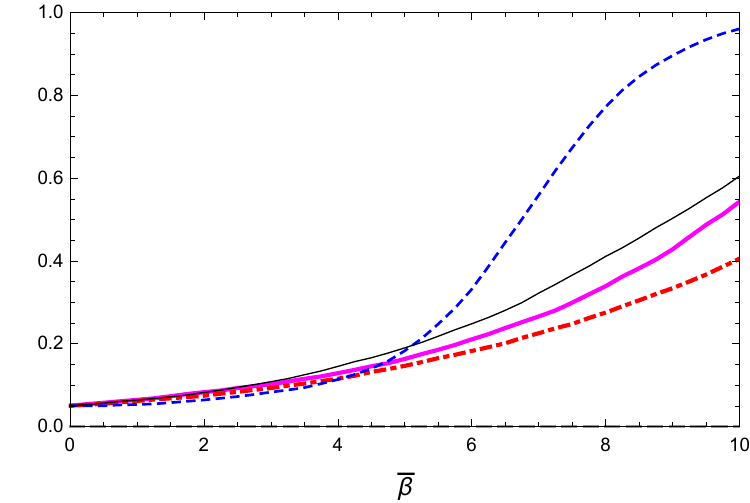}\label{fig:6:6}}\\
\subfigure[$\bar{\kappa}=20$, $T=60$]{\includegraphics[width=0.30\linewidth]{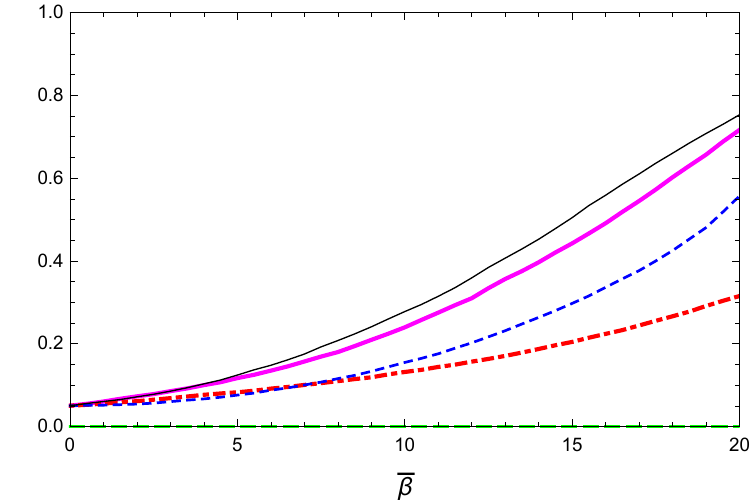}\label{fig:6:7}}
\subfigure[$\bar{\kappa}=20$, $T=240$]{\includegraphics[width=0.30\linewidth]{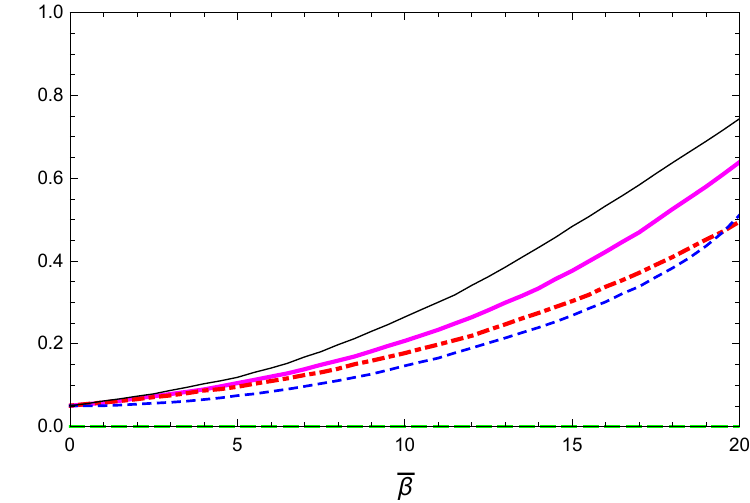}\label{fig:6:8}}
\subfigure[$\bar{\kappa}=20$, $T=600$]{\includegraphics[width=0.30\linewidth]{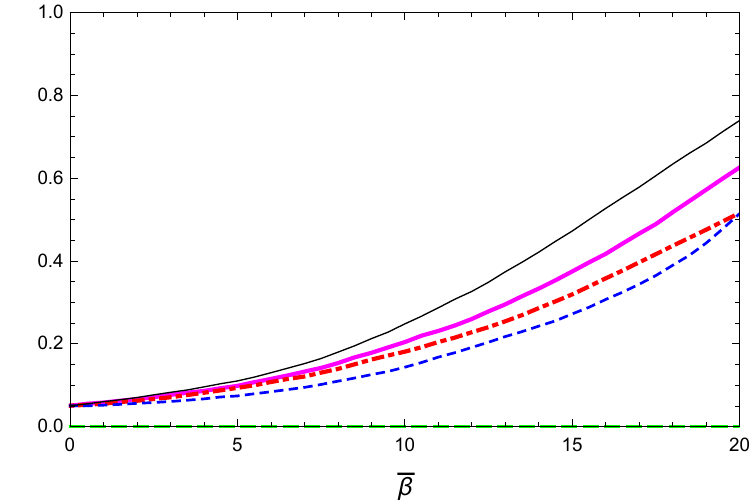}\label{fig:6:9}}
\end{center}%
\caption{Power for SB (discrete time)}
\label{fig4}
\centering
\footnotesize{OLS:$\textcolor{magenta}{\rule[0.25em]{2em}{1.6pt}\ }$,
Bonf. Q:$\textcolor{red}{\rule[0.25em]{0.6em}{1.7pt} \ \mathbf{\cdot} \ \rule[0.25em]{0.6em}{1.7pt} \ }$
RLRT:$\textcolor{blue}{\rule[0.25em]{0.4em}{1.6pt} \ \rule[0.25em]{0.4em}{1.6pt}\ }$, 
NP:$\textcolor{black}{\rule[0.25em]{1.9em}{0.5pt}}$}
\end{figure}
\end{landscape}

\begin{landscape}
\begin{figure}[h]%
\begin{center}%
\subfigure[$\bar{\kappa}=0$, $T=60$]{\includegraphics[width=0.30\linewidth]{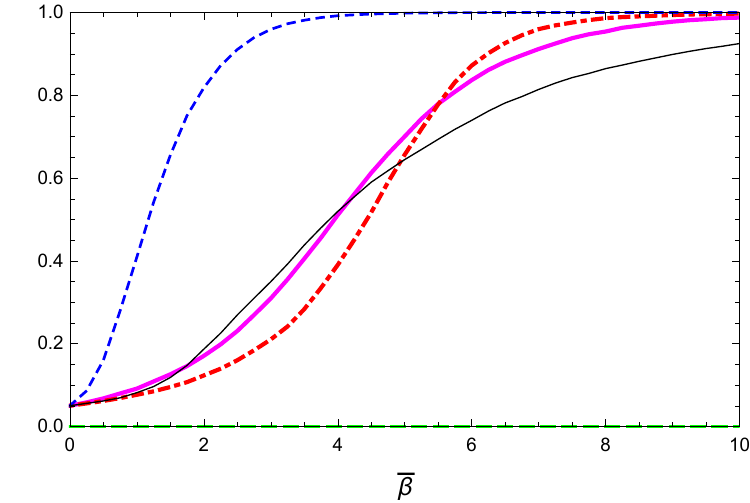}\label{fig:9:1}}
\subfigure[$\bar{\kappa}=0$, $T=240$]{\includegraphics[width=0.30\linewidth]{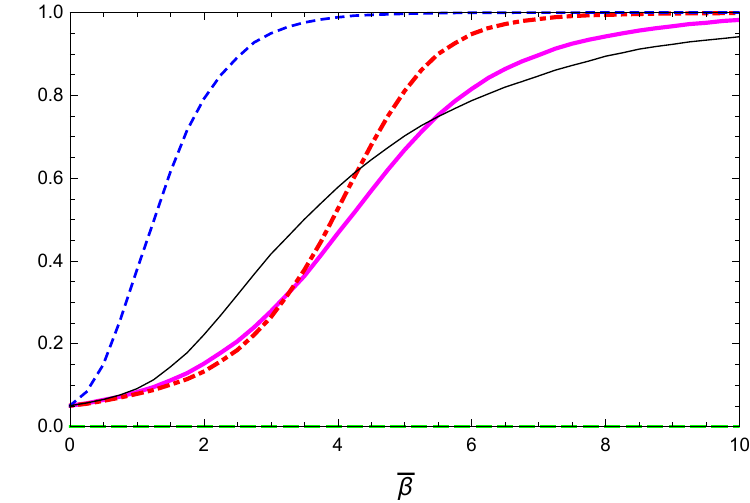}\label{fig:9:2}}
\subfigure[$\bar{\kappa}=0$, $T=600$]{\includegraphics[width=0.30\linewidth]{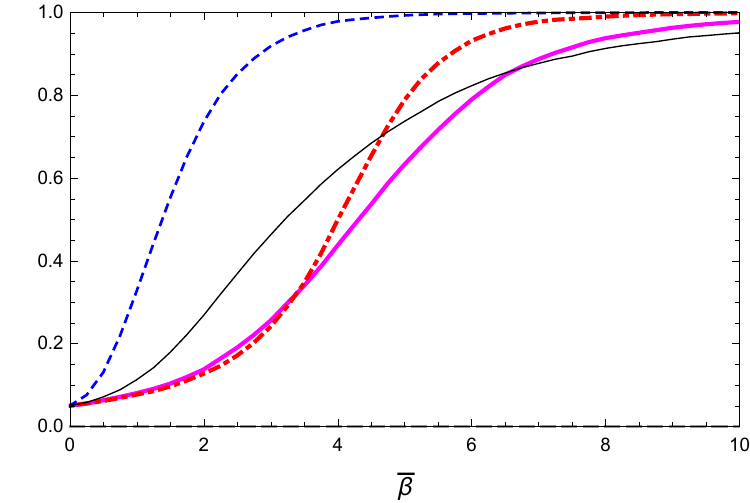}\label{fig:9:3}}\\
\subfigure[$\bar{\kappa}=5$, $T=60$]{\includegraphics[width=0.30\linewidth]{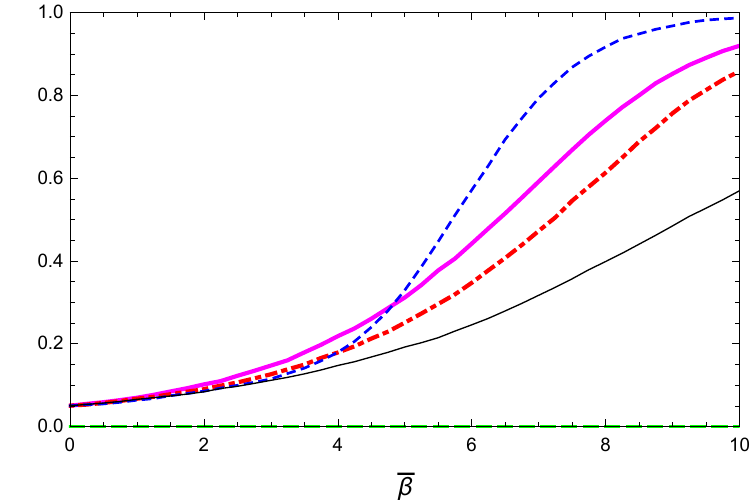}\label{fig:9:4}}
\subfigure[$\bar{\kappa}=5$, $T=240$]{\includegraphics[width=0.30\linewidth]{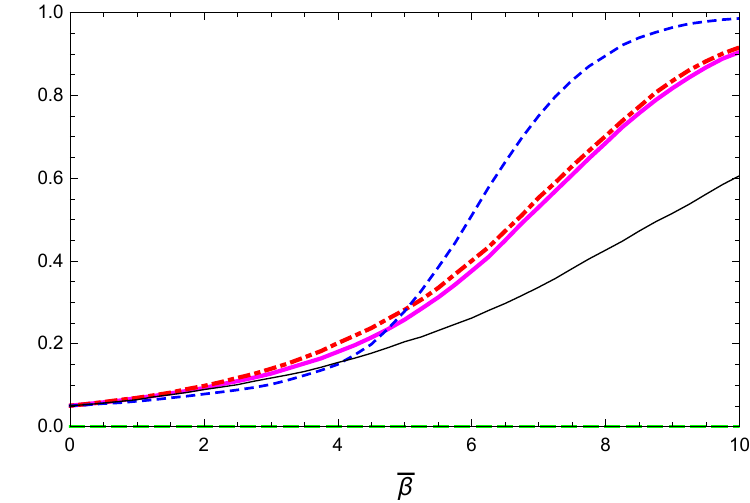}\label{fig:9:5}}
\subfigure[$\bar{\kappa}=5$, $T=600$]{\includegraphics[width=0.30\linewidth]{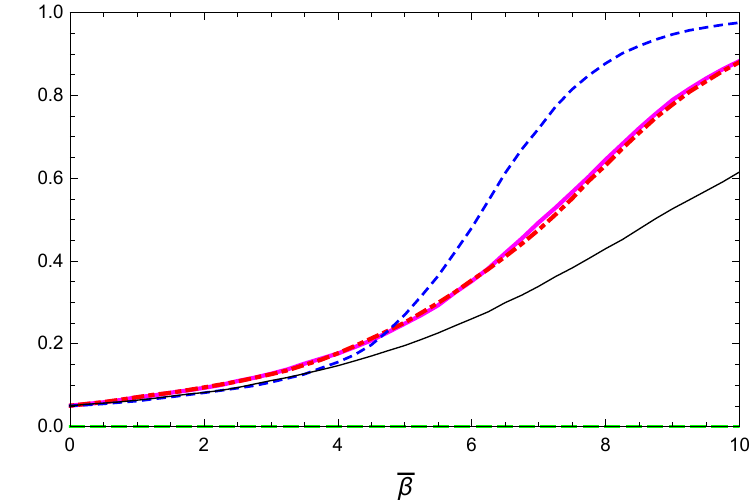}\label{fig:9:6}}\\
\subfigure[$\bar{\kappa}=20$, $T=60$]{\includegraphics[width=0.30\linewidth]{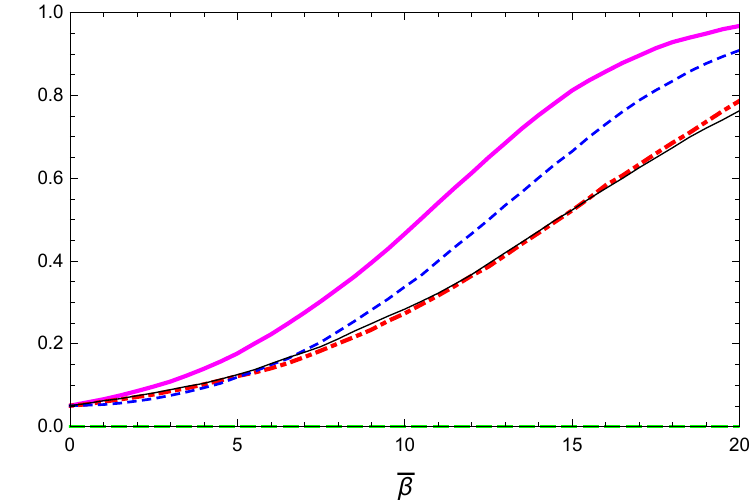}\label{fig:9:7}}
\subfigure[$\bar{\kappa}=20$, $T=240$]{\includegraphics[width=0.30\linewidth]{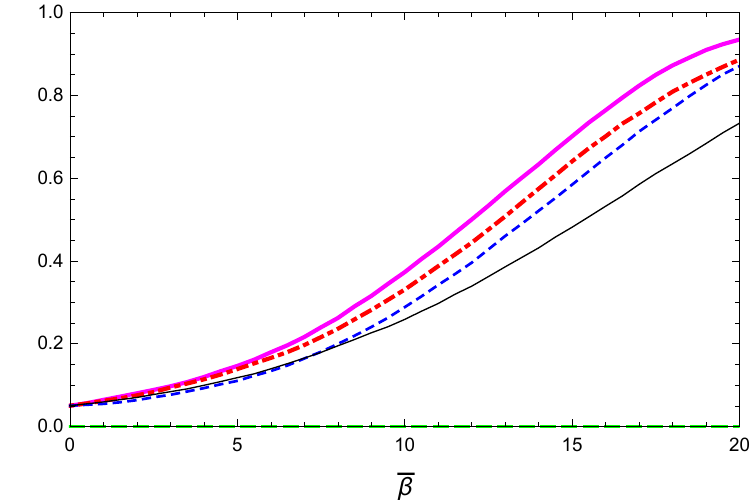}\label{fig:9:8}}
\subfigure[$\bar{\kappa}=20$, $T=600$]{\includegraphics[width=0.30\linewidth]{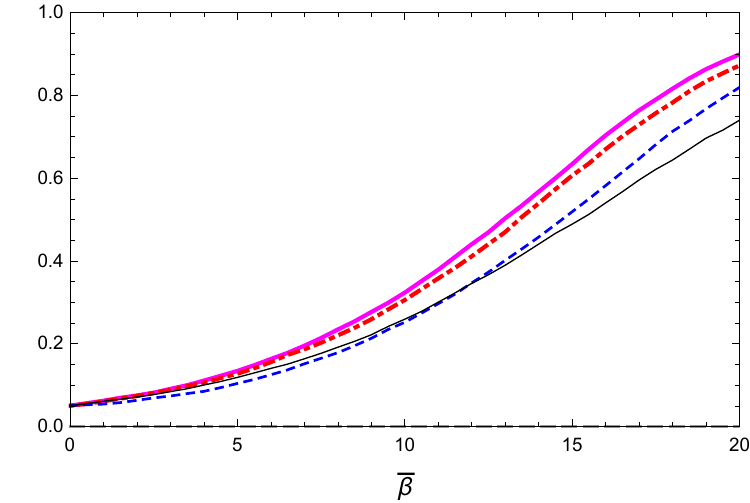}\label{fig:9:9}}
\end{center}%
\caption{Power for GARCH with $\alpha=0.9$ and $\theta=0.1$ (discrete time)}
\label{fig5}
\centering
\footnotesize{OLS:$\textcolor{magenta}{\rule[0.25em]{2em}{1.6pt}\ }$,
Bonf. Q:$\textcolor{red}{\rule[0.25em]{0.6em}{1.7pt} \ \mathbf{\cdot} \ \rule[0.25em]{0.6em}{1.7pt} \ }$
RLRT:$\textcolor{blue}{\rule[0.25em]{0.4em}{1.6pt} \ \rule[0.25em]{0.4em}{1.6pt}\ }$, 
NP:$\textcolor{black}{\rule[0.25em]{1.9em}{0.5pt}}$}
\end{figure}
\end{landscape}

\begin{landscape}
\begin{figure}[h]%
\begin{center}%
\subfigure[$\bar{\kappa}=0$, $T=60$]{\includegraphics[width=0.30\linewidth]{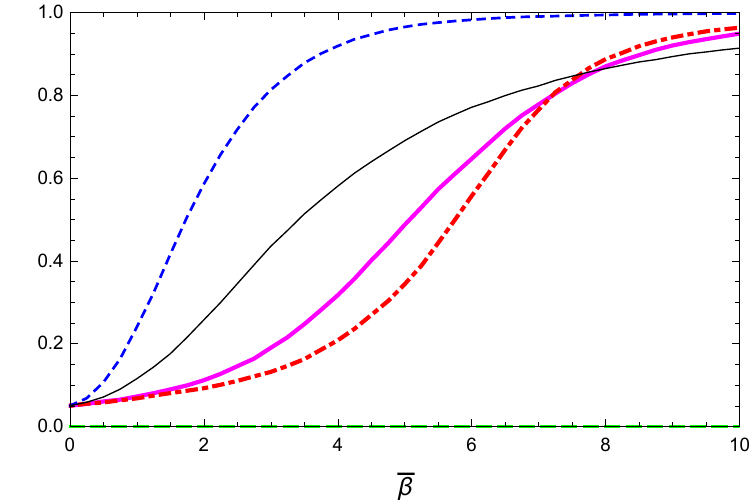}\label{fig:10:1}}
\subfigure[$\bar{\kappa}=0$, $T=240$]{\includegraphics[width=0.30\linewidth]{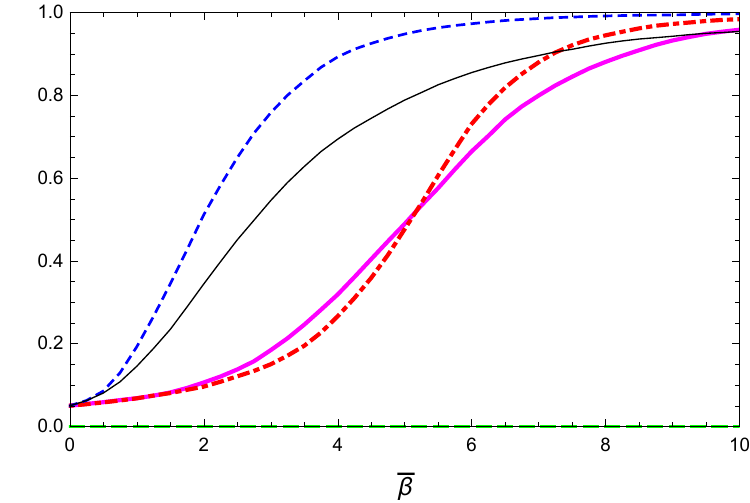}\label{fig:10:2}}
\subfigure[$\bar{\kappa}=0$, $T=600$]{\includegraphics[width=0.30\linewidth]{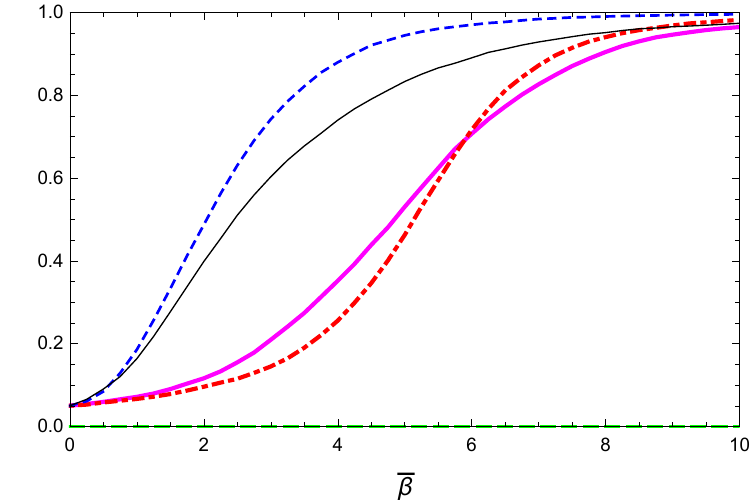}\label{fig:10:3}}\\
\subfigure[$\bar{\kappa}=5$, $T=60$]{\includegraphics[width=0.30\linewidth]{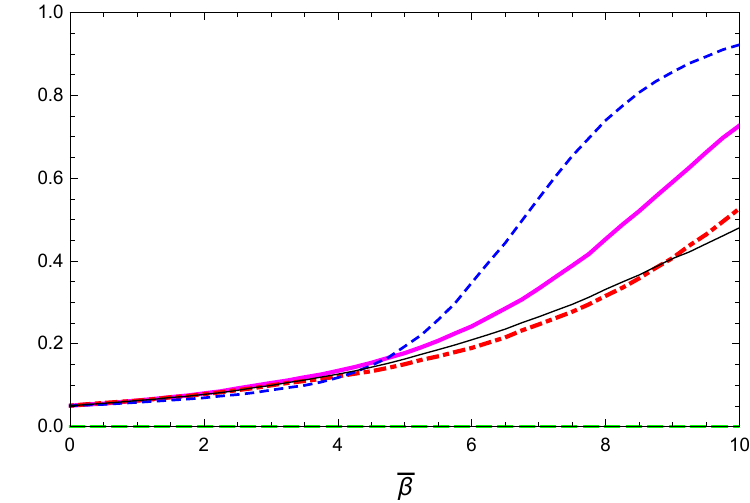}\label{fig:10:4}}
\subfigure[$\bar{\kappa}=5$, $T=240$]{\includegraphics[width=0.30\linewidth]{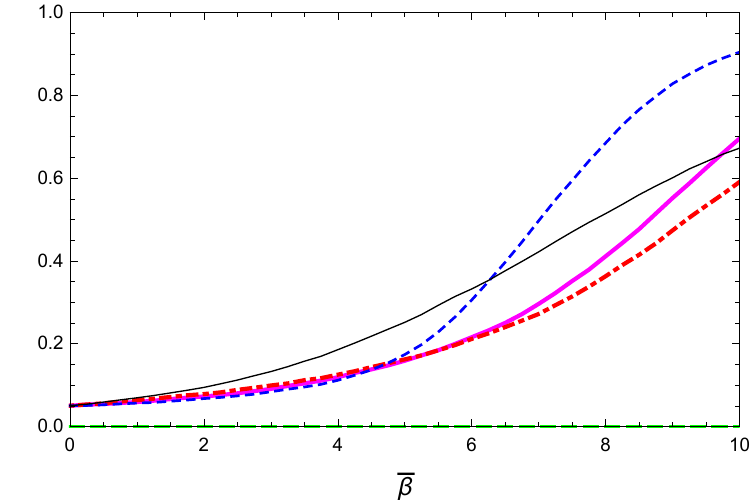}\label{fig:10:5}}
\subfigure[$\bar{\kappa}=5$, $T=600$]{\includegraphics[width=0.30\linewidth]{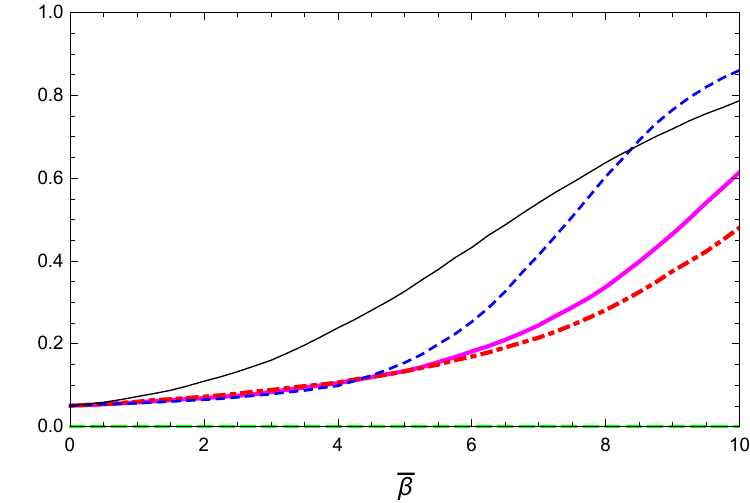}\label{fig:10:6}}\\
\subfigure[$\bar{\kappa}=20$, $T=60$]{\includegraphics[width=0.30\linewidth]{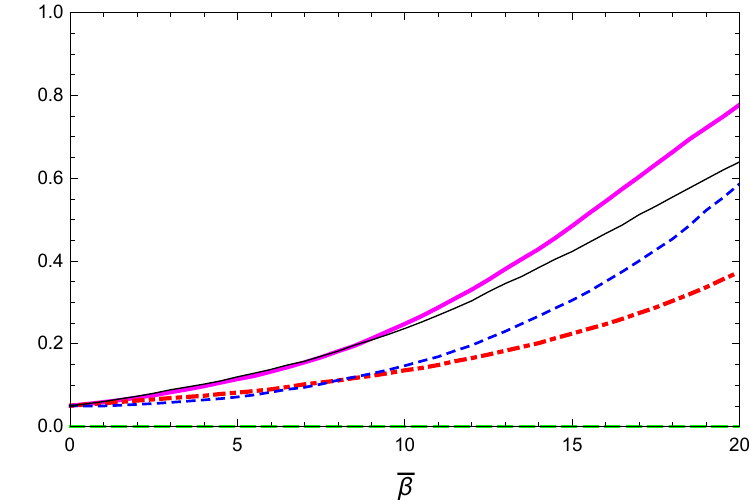}\label{fig:10:7}}
\subfigure[$\bar{\kappa}=20$, $T=240$]{\includegraphics[width=0.30\linewidth]{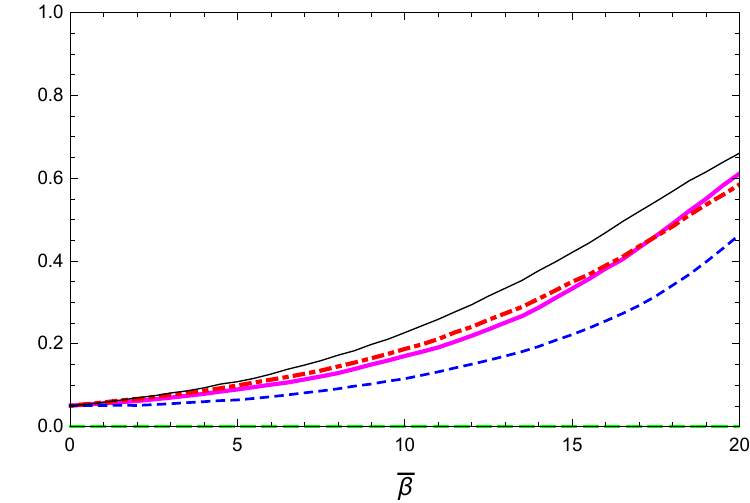}\label{fig:10:8}}
\subfigure[$\bar{\kappa}=20$, $T=600$]{\includegraphics[width=0.30\linewidth]{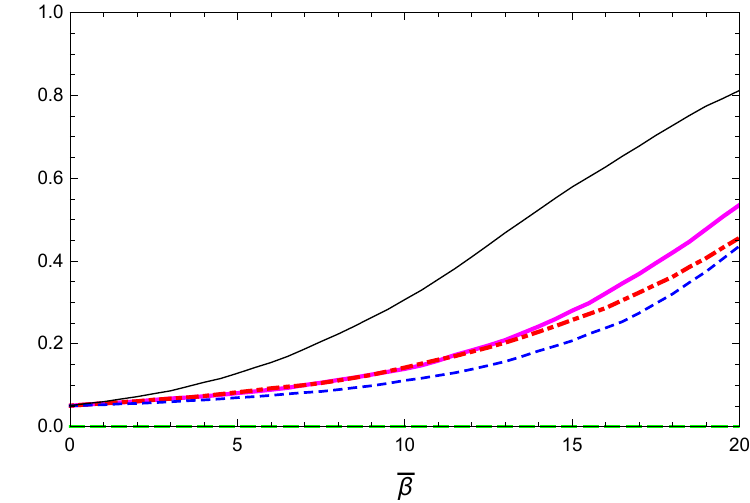}\label{fig:10:9}}
\end{center}%
\caption{Power for GARCH with $\alpha=0.1$ and $\theta=0.9$ (discrete time)}
\label{fig6}
\centering
\footnotesize{OLS:$\textcolor{magenta}{\rule[0.25em]{2em}{1.6pt}\ }$,
Bonf. Q:$\textcolor{red}{\rule[0.25em]{0.6em}{1.7pt} \ \mathbf{\cdot} \ \rule[0.25em]{0.6em}{1.7pt} \ }$
RLRT:$\textcolor{blue}{\rule[0.25em]{0.4em}{1.6pt} \ \rule[0.25em]{0.4em}{1.6pt}\ }$, 
NP:$\textcolor{black}{\rule[0.25em]{1.9em}{0.5pt}}$}
\end{figure}
\end{landscape}

\pagebreak

\newpage

\clearpage

\appendix

\setcounter{equation}{0}
\setcounter{figure}{0}
\renewcommand{\theequation}{S.\arabic{equation}}%
\renewcommand{\thelemma}{S.\arabic{lemma}}

\renewcommand{\theproposition}{S.\arabic{proposition}}

\renewcommand{\thefigure}{S.\arabic{figure}}%

\begin{center}
\vspace*{2in}{\Huge Online supplementary appendix to}

\bigskip

{\Huge \bigskip}

{\Large New robust inference for predictive regressions}

\bigskip

\author{
Rustam Ibragimov$^{a,d}$, Jihyun Kim$^b$, Anton Skrobotov$^{c,d}$ \\
{\small {$^{a}$ Imperial College Business School, Imperial College London}}\\
{\small {$^{b}$ Sungkyunkwan University and Toulouse School of Economics}}\\
{\small {$^{c}$ Russian Presidential Academy of National Economy and Public Administration}}\\
{\small {$^{d}$ Center for Econometrics and Business Analytics, St. Petersburg University}}
}
\end{center}

\bigskip

This supplement includes four appendices, Appendices A, B, C, and D. In Appendix A, we consider a class of nonlinear IV estimators, and discuss why the Cauchy estimator, which is a special case of nonlinear IV estimators, is useful in inference problems. Appendix B presents some useful lemmas, Lemmas S.1-S.8, and their proofs; Appendix C provides the proofs of the main results in the paper; and Appendix D provides some additional simulation results on finite sample performance of inference approaches dealt with.

\setcounter{section}{0}
\def\thesection{\arabic{section}}%

\renewcommand{\thesection}{A.\arabic{section}}
\setcounter{section}{1}%

\renewcommand{\thetable}{S.\arabic{table}}
\setcounter{table}{0}%

\renewcommand{\thepage}{[S.\arabic{page}]}
\setcounter{page}{1}%

\setcounter{footnote}{0}%


\section*{Appendix A: Nonlinear IV Approaches}
As mentioned in the main paper, the Cauchy estimator $\check{\beta}$ is the special case of a class of nonlinear IV estimators $\tilde{\beta}(\gamma)$ with $\gamma(\cdot)=sign(\cdot)$, where
\[
\tilde{\beta}(\gamma) =  \left(\sum_{t=1}^T  \gamma(x_{t-1})x_{t-1} \right)^{-1}\sum_{t=1}^T \gamma(x_{t-1}) y_{t}
\]
for some function $\gamma:\mathbb{R}\to\mathbb{R}$. Clearly, $\tilde{\beta}(sign)=\check{\beta}$. Moreover, if we let $\iota$ be an identity function:  $\iota(x)=x$, then $\tilde{\beta}(\iota)$ becomes the OLS estimator $\hat{\beta}$. In this section, we discuss why the choice of $\gamma(\cdot)=sign(\cdot)$, as in the Cauchy estimator, is important and useful when considering the class of nonlinear IV estimators in an inference problem. To explain the idea, we focus on the issues about the predictor $x_t$, and assume that $v_t=\sigma$ for all $t\geq1$.

\subsection*{A.1 Nonstationary Predictor}
The asymptotics for (near) unit root processes under various transformations are well known (see \cite{park2001nonlinear} and \cite{park2003strong}). For instance, let $x_t$ be a unit root process and $(x_t , \sum_{s=1}^t \varepsilon_s)$ satisfy the functional CLT with a limiting bivariate Brownian motion $(X,W)$. It then follows from \cite{park2001nonlinear} that for regularly integrable functions $f$ and $g^2$
\begin{equation}\label{pp1}
\begin{aligned}
\frac{1}{T^{1/2}}\sum_{t=1}^T f(x_{t-1})
&\to_d L_X(1,0) \int_{-\infty}^{\infty} f(x)dx,\\
\frac{1}{T^{1/4}}\sum_{t=1}^T g(x_{t-1})\varepsilon_t
&\to_d \left(L_X(1,0) \int_{-\infty}^{\infty} g^2(x)dx\right)^{1/2}Z(1),
\end{aligned}
\end{equation}
where $L_X(1,0)$ is the local time at the origin of $X$ and $Z$ is a Brownian motion independent of $X$. In particular, $\left(L_X(1,0) \int_{-\infty}^{\infty} g^2(x)dx\right)^{1/2}Z(1) =_d \mathbb{MN}\left(0,L_X(1,0) \int_{-\infty}^{\infty} g^2(x)dx\right)$, where $\mathbb{MN}$ is a mixed normal distribution.

On the other hand, if $f$ and $g^2$ are asymptotically homogeneous functions, then
\begin{equation}\label{pp2}
\begin{aligned}
\frac{1}{T} f_\nu(T^{1/2})\sum_{t=1}^T f(x_{t-1})
&\to_d \int_0^1 f_H(X(r))dr,\\
\frac{1}{T^{1/2} g_\nu(T^{1/2})}\sum_{t=1}^T g(x_{t-1})\varepsilon_t
&\to_d \int_0^1 g_H(X(r))dW(r),  
\end{aligned}
\end{equation}
where $F_\nu$ and $F_H$ are, respectively, the asymptotic order and the limit homogeneous function of asymptotically homogeneous function $F=f,g$. The reader is referred to Section 3 of \cite{park2001nonlinear} for more detailed discussions about the asymptotics \eqref{pp1} and \eqref{pp2} as well as the precise definitions of the regularly integrable and asymptotically homogeneous functions.\footnote{Similar asymptotic results for a diffusion process can be found in Kim and Park (2017).} Note also that for a near unit root process the asymptotics \eqref{pp1} and \eqref{pp2} remain valid if $X$ is replaced by the limiting Ornstein-Uhlenbeck process of $(x_t)$ (see, e.g., Section 3 of \cite{park2003strong}).

Importantly, the limit distribution of $\sum_{t=1}^T g(x_{t-1})\varepsilon_t$ in \eqref{pp2} is not Gaussian for an asymptotically homogeneous $g^2$ except in some special cases including $g_H(x) = sign(x)$. In particular, the sign function $sign(\cdot)$ is asymptotically homogeneous with $sign_\nu(\lambda) = 1$ for all $\lambda$ and $sign_H(x) = sign(x)$. Since $\int_0^r sign(X(s))dW(s)$ is a Brownian motion by L\'{e}vy's characterization of Brownian motion, we have
\[
\frac{1}{T^{1/2}}\sum_{t=1}^T sign(x_{t-1})\varepsilon_t
\to_d W(1).  
\]

Using the asymptotics \eqref{pp1} and \eqref{pp2}, one may construct a nonlinear IV estimator $\tilde{\beta}(\gamma)$ being asymptotically Gaussian for a proper choice of $\gamma$, i.e., for $\gamma$ square integrable or the sign function. For such $\gamma$, one may use the following test statistic
\[
\tilde\tau(\gamma) = \frac{\sum_{t=1}^T \gamma(x_{t-1})x_{t-1}}{\left(\sum_{t=1}^T \gamma^2(x_{t-1})\right)^{1/2}}\times \tilde{\beta}(\gamma)
\]
to test the null hypothesis of $\beta=0$.\footnote{When $\gamma(\cdot)=sign(\cdot)$, $\tilde\tau(\gamma) = T^{-1/2} \sum_{t=1}^T sign(x_{t-1})y_t$, which is a special case of $\tau(v)$ in \eqref{test-infeasible} with $v_t=1$. }

\begin{proposition}\label{proposition-niv}
Let Assumption \ref{assumption-mds} hold with $v_t = \sigma$ for all $t\geq1$, and let $(x_t)$ be a unit root process. Further assume that the convergences in \eqref{pp1} and \eqref{pp2} hold.
 
(a) Let $\gamma(\cdot)=sign(\cdot)$ or $\gamma^2$ be regularly integrable. Under $\beta=0$,
\[
\tilde\tau(\gamma)\to_d \mathbb{N}(0,\sigma^2)
\]

(b) Let $\gamma(\cdot)=sign(\cdot)$. Under $\beta\neq0$, 
\[
\frac{1}{T}\tilde\tau(\gamma) \to_d \beta \int_0^1 |B(r)|dr.
\]

(c) Let $\gamma^2(x)$ be regularly integrable, and let $x\gamma(x)$ be either asymptotically homogeneous or regularly integrable such that $\int_{-\infty}^{\infty} x\gamma(x)dx\neq0$. Under $\beta\neq0$, we have $\tilde{\tau}(\gamma)\to_p\infty$ and $\tilde{\tau}(\gamma)=o_p(T)$.
\end{proposition}

\begin{proof}
Note that 
\[
\tilde{\tau}(\gamma) 
= \frac{\sum_{t=1}^T \gamma(x_{t-1})y_t}{\left(\sum_{t=1}^T \gamma^2(x_{t-1})\right)^{1/2}}
= \beta \frac{\sum_{t=1}^T \gamma(x_{t-1})x_{t-1}}{\left(\sum_{t=1}^T \gamma^2(x_{t-1})\right)^{1/2}}
+\frac{\sum_{t=1}^T \gamma(x_{t-1})u_t}{\left(\sum_{t=1}^T \gamma^2(x_{t-1})\right)^{1/2}}.
\]
The stated results in the parts (a) and (b) then follow immediately from the convergences \eqref{pp1} and \eqref{pp2} since, in particular for $\gamma(\cdot)=sign(\cdot)$, we have
\[
\sum_{t=1}^T sign^2(x_{t-1})=T(1+o_p(1)),\quad
\frac{1}{T^{3/2}}\sum_{t=1}^T |x_{t-1}|\to_d \int_0^1 |B_r|dr.
\]

We let $\iota(x)=x$. As for Part (c), we first let $\gamma^2$ and $\iota\gamma$ be regularly integrable. Then, under $\beta\neq0$,
\[
\tilde{\tau}(\gamma) = \beta \frac{\sum_{t=1}^T \gamma(x_{t-1})x_{t-1}}{\left(\sum_{t=1}^T \gamma^2(x_{t-1})\right)^{1/2}}(1+o_p(1))
\]
and
\[
\frac{1}{T^{1/4}}\frac{\sum_{t=1}^T \gamma(x_{t-1})x_{t-1}}{\left(\sum_{t=1}^T \gamma^2(x_{t-1})\right)^{1/2}}
\to_d
\frac{L_X(1,0)\int_{-\infty}^\infty \gamma(x)xdx}{\left(L_X(1,0)\int_{-\infty}^\infty \gamma^2(x)dx\right)^{1/2}}
\]
by \eqref{pp1}, from which $\tilde{\tau}(\gamma) = o_p(T)$.

On the other hand, if $\iota\gamma$ is asymptotically homogeneous and $\gamma^2$ is regularly integrable, then
\[
\frac{1}{T^{3/4}(\iota\gamma)_\nu(T^{1/2})}\frac{\sum_{t=1}^T \gamma(x_{t-1})x_{t-1}}{\left(\sum_{t=1}^T \gamma^2(x_{t-1})\right)^{1/2}}
\to_d
\frac{\int_0^1 (\iota\gamma)_H(X_r)dr}{\left(L_X(1,0)\int_{-\infty}^\infty \gamma^2(x)dx\right)^{1/2}}
\]
by \eqref{pp1} and \eqref{pp2}. Since $\gamma^2$ is regularly integrable, $\gamma(\lambda)\lambda^{1/2}=o(1)$, and hence, $(\iota\gamma)(\lambda)=o(\lambda^{1/2})$. It then follows from the construction of the asymptotically homogeneous function that $(\iota\gamma)_\nu(T^{1/2})=o(T^{1/4})$. Therefore, if $\gamma^2$ is regularly integrable and $\iota\gamma$ is either regularly integrable or asymptotically homogeneous, then $\tilde{\tau}(\gamma) =o_p(T)$ as required. 
\end{proof}

According to Proposition \ref{proposition-niv} (a) and (b), one may easily conduct a Gaussian inference using the nonlinear IV estimator with $\gamma(\cdot)=sign(\cdot)$. Moreover, Proposition \ref{proposition-niv} (a) and (c) imply that a similar Gaussian inference can be conducted using a square integrable $\gamma$. However, Proposition \ref{proposition-niv} (b) and (c) imply that under $\beta\neq0$ the divergence rate of $\tilde{\tau}(sign)$ is faster than that of $\tilde{\tau}(\gamma)$ with a square integrable $\gamma$, which implies the test with $\tilde{\tau}(sign)$ tends to have a better power property than the test with $\tilde{\tau}(\gamma)$ in finite samples.

\subsection*{A.2 Stationary Predictor}
For a stationary $x_t$, the asymptotic distribution of the nonlinear IV based test statistic $\tilde{\tau}(\gamma)$ can be obtained easily and is given by a Gaussian distribution under $\beta=0$ when $\mathbb{E}|\gamma^2(x_{t-1})|<\infty$. However, if $x_t$ has a heavy-tailed marginal distribution and $\mathbb{E}|\gamma^2(x_{t-1})|$ is unbounded for a given $\gamma$, then the limit distribution of $\tilde{\tau}(\gamma)$ is generally non-Gaussian. Therefore, the choice of $\gamma$, as in the unit root type predictor, is important in a Gaussian inference relying on the nonlinear IV. Importantly, the Cauchy based test statistic is always asymptotically Gaussian since $\mathbb{E}|\gamma^2(x_{t-1})|=1$, and hence, the result of Proposition \ref{proposition-niv} (a) remains valid. Moreover, it is shown in Theorem \ref{theorem-vol} that the test statistic based on the Cauchy estimator diverges under the alternative hypothesis and its divergence rate is no slower than the usual $\sqrt{T}$ rate for any nontrivial stationary process. As a conclusion of Section 2.3, the Cauchy estimator can be used to construct a robust inference having a Gaussian limit with no significant loss of testing power compared with other nonlinear IV based methods.

\section*{Appendix B: Useful Lemmas}

\begin{lemma}\label{lemma-b2}
Let Assumption \ref{assumption-kernel} hold, and let $f_h(s) = f(s/h)$ for $f=K,K^2$. We have 
\[
\sup_{h\leq r\leq 1}\left|\frac{1}{hT}\sum_{t=1}^T f_h(r-t/T) - \int_0^1 f(s)ds\right| = O(1/(h^2T)).
\]
\end{lemma}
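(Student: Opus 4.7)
The plan is a direct Riemann-sum approximation argument. First I would show that for every $r\in[h,1]$ the continuous-time integral already equals the target exactly:
\[
\int_0^1 f_h(r-s)\,ds = h\int_0^1 f(u)\,du,
\]
by the change of variable $u=(r-s)/h$, using that $f$ is supported in $[0,1]$ (for $f=K$ this is Assumption 2.3(a); for $f=K^2$ it is inherited) and that $r\in[h,1]$ ensures $[r-h,r]\subseteq[0,1]$, so no support is truncated. Thus it suffices to control the Riemann approximation error between $\tfrac1T\sum_{t=1}^T f_h(r-t/T)$ and $\int_0^1 f_h(r-s)\,ds$, and then divide by $h$.

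Next I would record the Lipschitz constants. For $f=K$, Assumption 2.3(b) gives $|K(a)-K(b)|\le\bar K|a-b|$; for $f=K^2$, $|K^2(a)-K^2(b)|\le |K(a)+K(b)|\,|K(a)-K(b)|\le 2\bar K^2|a-b|$. Writing $g_{h,r}(s):=f_h(r-s)$, one gets $|g_{h,r}(s_1)-g_{h,r}(s_2)|\le C|s_1-s_2|/h$ with $C=\bar K$ or $C=2\bar K^2$ respectively; the constant is independent of $r$.

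I would then apply this pointwise bound to each sub-interval $[(t-1)/T,\,t/T]$: for $s$ in that interval,
\[
\left|\int_{(t-1)/T}^{t/T}\!\!\bigl(g_{h,r}(s)-g_{h,r}(t/T)\bigr)\,ds\right|
\le \frac{C}{h}\int_{(t-1)/T}^{t/T}\!|s-t/T|\,ds
\le \frac{C}{h\,T^2}.
\]
Summing over $t=1,\dots,T$ gives
\[
\left|\int_0^1 g_{h,r}(s)\,ds - \frac{1}{T}\sum_{t=1}^T g_{h,r}(t/T)\right| \le \frac{C}{h\,T},
\]
and this bound is uniform in $r\in[h,1]$. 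Dividing by $h$ and combining with the exact identity from the first step yields the claim
\[
\sup_{h\le r\le 1}\left|\frac{1}{hT}\sum_{t=1}^T f_h(r-t/T) - \int_0^1 f(s)\,ds\right| \le \frac{C}{h^2 T} = O\!\left(\frac{1}{h^2 T}\right).
\]

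There is no real obstacle here; the only subtlety is keeping the uniformity in $r$ clean, which requires noting that both the Lipschitz constant of $g_{h,r}$ and the identity $\int_0^1 f_h(r-s)\,ds = h\int_0^1 f$ hold uniformly for $r\in[h,1]$ (so the restriction $r\ge h$ is exactly what prevents the support of $f_h(r-\cdot)$ from being clipped by the left endpoint of $[0,1]$).
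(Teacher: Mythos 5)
Your proof is correct and follows essentially the same route as the paper's: the exact identity $\tfrac{1}{hT}\int_0^T f_h(r-s/T)\,ds=\int_0^1 f(s)\,ds$ for $r\in[h,1]$ via a change of variables, plus a Riemann-sum error bound of order $1/(hT)$ driven by the Lipschitz continuity of the kernel, then division by $h$. The only cosmetic difference is that the paper splits the discretization error into interior, exterior, and boundary pieces (using the Lipschitz bound only in the interior and mere boundedness near the support endpoints of $K$), whereas you invoke the global Lipschitz condition of Assumption 2.3(b) directly — which is legitimate as that assumption is stated for all $r,s\in\mathbb{R}$ — and you also make explicit the Lipschitz constant for $K^2$, which the paper leaves as ``similar.''
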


\begin{proof}
We only prove the result for the case $f=K$ since the argument in the case $f=K^2$ is similar. For the proof, we define a function $I_{r,h}:[0,1]\to\{0,1\}$ for $r\in[0,1]$ and $h>0$ as $I_{r,h}(s)=1\{r-h\leq s\leq r\}$. We then write
\[
\sup_{h\leq r\leq 1-h}\sum_{t=1}^T \int_{t-1}^t |K_h(r-t/T) - K_h (r-s/T)| ds
=A_T(r) + B_T(r) + C_T(r),
\]
where
\begin{align*}
A_T(r) 
&=\sum_{t=1}^T \left(\int_{t-1}^t|K_h(r-t/T)-K_h(r-s/T)|ds\right)I_{r,h-1/T}(t/T),\\
B_T(r) 
&=\sum_{t=1}^T \left(\int_{t-1}^t|K_h(r-t/T)-K_h(r-s/T)|ds\right)(1-I_{r,h+1/T}(t/T)),\\
C_T(r) 
&=\sum_{t=1}^T \left(\int_{t-1}^t|K_h(r-t/T)-K_h(r-s/T)|ds\right)(I_{r,h+1/T}(t/T)-I_{r,h-1/T}(t/T)).
\end{align*}

By Assumption \ref{assumption-kernel} (a) and (c), we have
\[
\sup_{h\leq r\leq 1}A_T(r)\leq C \sum_{t=1}^T \int_{t-1}^t \frac{|t-s|}{hT} ds 
\leq C  \frac{1}{h}.
\]
Moreover, $\int_{t-1}^t|K_h(r-t/T)-K_h(r-s/T)=0$ for all $t$ satisfying $I_{r,h+1/T}(t/T)=0$ by Assumption \ref{assumption-kernel} (a), and hence, $\sup_{h\leq r\leq 1}B_T(r)=0$. Moreover, it may be deduced from Assumption \ref{assumption-kernel} (a) that
\[
\sup_{h\leq r\leq 1} \sum_{t=1}^T |I_{r,h+1/T}(t/T)-I_{r,h-1/T}(t/T)|\leq 2,
\]
from which, jointly with Assumption \ref{assumption-kernel} (b), we have $\sup_{h\leq r\leq 1}C_T(r)=O(1)$. The stated result for $f=K$ then follows immediately since
\begin{align}\label{lemma-b2-3}
\frac{1}{hT} \int_0^T K_h(r-s/T)ds = \frac{1}{h}\int_0^1 K_h(r-s)ds = \int_0^1 K(s)ds
\end{align}
for $r\in[h,1]$, by Assumption \ref{assumption-kernel} (a) and the change of variable in integrals. 
\end{proof}

\begin{lemma}\label{lemma-b3}
Let Assumption \ref{assumption-limit} hold.  As $h\to0$,
\[
\sup_{r\in \mathcal{C}_h, 0<h'<h} |\sigma_T^2(r) - \sigma_T^2(r-h')|=o_{a.s.}(1).
\]
\end{lemma}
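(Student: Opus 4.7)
The plan is to split, via the triangle inequality,
\[
|\sigma_T^2(r) - \sigma_T^2(r - h')|
\le |\sigma_T^2(r) - \sigma^2(r)|
+ |\sigma^2(r) - \sigma^2(r - h')|
+ |\sigma^2(r - h') - \sigma_T^2(r - h')|,
\]
so the problem reduces to two independent pieces. The outer two terms are handled by uniform a.s.\ convergence: under Assumption~\ref{assumption-limit} together with the Skorohod representation convention adopted just before Assumption~\ref{assumption-kernel}, $\sigma_T \to \sigma$ uniformly almost surely on $[0,1]$, and since $\sigma_T,\sigma \in [\underline{v},\bar{v}]$ by Assumption~\ref{assumption-mds}(a), the identity $\sigma_T^2 - \sigma^2 = (\sigma_T - \sigma)(\sigma_T + \sigma)$ upgrades this to $\sup_{r \in [0,1]}|\sigma_T^2(r) - \sigma^2(r)| = o_{a.s.}(1)$, dominating both terms uniformly in $r$ and $h'$.

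The core of the argument is the middle piece: show that $\sup_{r \in \mathcal{C}_h,\,0<h'<h}|\sigma^2(r) - \sigma^2(r - h')| \to 0$ almost surely. The key observation is that, by the definition \eqref{set} of $\mathcal{C}_h$, no discontinuity point $s$ of $\sigma$ can lie in $(r - h, r]$ whenever $r \in \mathcal{C}_h$, for otherwise $r \in [s, s + h) \subset \mathcal{J}_h$. Hence $\sigma$, and a fortiori $\sigma^2$, is \emph{continuous} on the whole interval $(r - h, r]$ for every $r \in \mathcal{C}_h$.

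Fix $\omega$ in the full-measure event on which $\sigma$ is c\`adl\`ag, and fix $\epsilon > 0$. By the standard modulus-of-continuity property of c\`adl\`ag functions (e.g., Billingsley, \emph{Convergence of Probability Measures}, Section~12, where $w'_\sigma(\delta) \to 0$ as $\delta \to 0$), there exists a finite partition $0 = t_0 < t_1 < \cdots < t_n = 1$ with $\sup_{s,t \in [t_{i-1},t_i)}|\sigma(s) - \sigma(t)| < \epsilon/(4\bar{v})$ for each $i$. Set $\eta := \min_i(t_i - t_{i-1}) > 0$ and take $h < \eta$. For any $r \in \mathcal{C}_h$ and $0 < h' < h$, the points $r$ and $r - h'$ are within distance $h' < \eta$, so they either lie in a common interval $[t_{i-1},t_i)$, in which case $|\sigma(r) - \sigma(r-h')| < \epsilon/(4\bar{v})$ directly, or $r - h' \in [t_{i-1},t_i)$ and $r \in [t_i,t_{i+1})$ for some $i$. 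In the second case $t_i \in (r - h', r] \subset (r - h, r]$, so the observation of the previous paragraph forces $\sigma$ to be continuous at $t_i$, giving $\sigma(t_i) = \sigma(t_i-)$; splitting
\[
|\sigma(r) - \sigma(r - h')| \le |\sigma(r) - \sigma(t_i)| + |\sigma(t_i-) - \sigma(r - h')|
\]
and applying the oscillation bound on each of the two partition pieces yields $|\sigma(r) - \sigma(r-h')| < \epsilon/(2\bar{v})$. In either case, multiplying by $|\sigma(r) + \sigma(r - h')| \le 2\bar{v}$ gives $|\sigma^2(r) - \sigma^2(r-h')| < \epsilon$, uniformly over $r \in \mathcal{C}_h$ and $0 < h' < h$. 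Since $\epsilon$ was arbitrary, the desired a.s.\ convergence follows.

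The main obstacle that this approach resolves is that $\sigma$ may possess countably many (and possibly dense) small jumps lying \emph{outside} $\mathcal{C}_h$, so a naive uniform continuity argument on $\sigma$ fails. The fix is to use the Skorohod-type partition property of c\`adl\`ag functions in place of uniform continuity, combined with the crucial fact that $\mathcal{C}_h$ has been constructed precisely so that any partition point $t_i$ straddled by $r - h'$ and $r$ is automatically a continuity point of $\sigma$.
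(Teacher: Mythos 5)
Your proof is correct, and at the top level it follows the same route as the paper: the identical triangle-inequality decomposition into $|\sigma_T^2(r)-\sigma^2(r)|$, $|\sigma_T^2(r-h')-\sigma^2(r-h')|$ and $|\sigma^2(r)-\sigma^2(r-h')|$, with the first two killed by the Skorohod-representation convention $\sigma_T\to_{a.s.}\sigma$ uniformly on $[0,1]$. Where you genuinely differ is the middle term. The paper asserts that a c\`{a}dl\`{a}g function is uniformly right-continuous on compacts, deduces that $\sigma_-(r)=\sigma(r-)$ is uniformly left-continuous, and from this claims $\sup_{r\in[h,1],\,0<h'<h}|\sigma^2(r-)-\sigma^2(r-h')|=o_{a.s.}(1)$, only afterwards restricting to $\mathcal{C}_h$ via $\sigma(r-)=\sigma(r)$ there. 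Taken literally, that intermediate supremum over all of $[h,1]$ does not vanish whenever $\sigma$ has even one jump at some $s$ (take $r=s+\delta$ and $r-h'=s-\delta$; the difference tends to $|\sigma^2(s)-\sigma^2(s-)|>0$), and the lemma does not assume Assumption \ref{assumption-vol}, so the restriction to $\mathcal{C}_h$ must enter \emph{before} the supremum is taken. Your argument does exactly that: you combine the Skorohod partition property of c\`{a}dl\`{a}g paths with the observation that, by the definition \eqref{set}, $r\in\mathcal{C}_h$ rules out any discontinuity in $(r-h,r]$, so the one partition point possibly straddled by $r-h'$ and $r$ is automatically a continuity point. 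This is precisely the step the paper's shortcut glosses over; your version is longer but is the rigorous form of what the paper intends, at the price of invoking the partition/modulus property $w'_\sigma(\delta)\to0$ explicitly. (A trivial boundary remark: handle $r=1$ by taking the last partition piece closed, or treat it separately.)
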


\begin{proof}[Proof]
We have
\begin{align*}
|\sigma_T^2(r) - \sigma_T^2(r-h')|
\leq |\sigma_T^2(r) - \sigma^2(r)|  + |\sigma_T^2(r-h') - \sigma^2(r-h')| + |\sigma^2(r) - \sigma^2(r-h')|.
\end{align*} 
Note that, under our conventions, $\sup_{0\leq r\leq 1}|\sigma_T^2(r) - \sigma^2(r)|=o_{a.s.}(1)$ since $\sigma_T\to_{d}\sigma$ in Assumption \ref{assumption-limit} holds almost surely on $\mathbf{D}_{\mathbb{R}^+}[0,1]$ endowed with the uniform topology. It follows that the first two terms are of $o_{a.s.}(1)$.

As for the last term, we note that a c\`{a}dl\`{a}g function is uniformly right-continuous on finite closed intervals (see, e.g., Applebaum (2009), pp. 140). It follows that an associated c\`{a}dl\`{a}g function $\sigma_{-}(r)= \sigma(r-)$ is uniformly left continuous, and hence,
\[
\sup_{r\in [h,1], 0<h'<h}|\sigma^2(r-) - \sigma^2(r-h')| =o_{a.s.}(1).
\]
However, $\sigma^2(r-) = \sigma^2(r)$ for all $r\in \mathcal{C}_h$, which completes the proof. 
\end{proof}

\begin{lemma}\label{lemma-b4}
Let Assumptions \ref{assumption-limit} and \ref{assumption-kernel} hold. If $h\to0$ and $h^2T\to\infty$, then for any fixed real number $c\geq0$
\begin{align*}
\sup_{r\in \mathcal{C}_h} |\hat{\sigma}_1^2(r-c/T)-\sigma_T^2(r)|
=o_p(1).
\end{align*}
\end{lemma}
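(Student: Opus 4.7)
The plan is to write the difference as a weighted average of the deviations $\sigma_T^2(t/T)-\sigma_T^2(r)$ over the kernel window, then control the main part using Lemma A.3 and the normalization of the denominator using Lemma A.2. Since $E(u_t^2\mid \mathcal{F}_{t-1})=v_t^2=\sigma_T^2(t/T)$ under Assumption 2.1, we have
\[
\hat\sigma_1^2(r-c/T)-\sigma_T^2(r)=\frac{\sum_{t=1}^T\bigl[\sigma_T^2(t/T)-\sigma_T^2(r)\bigr]K_h(r-c/T-t/T)}{\sum_{t=1}^T K_h(r-c/T-t/T)}.
\]
By Lemma A.2 with $f=K$ and $\int_0^1 K=1$, the denominator equals $hT(1+O(1/(h^2T)))=hT(1+o(1))$ uniformly (over $r\in\mathcal{C}_h$, noting $c/T\to 0$ so $r-c/T\geq h$ eventually), using the hypothesis $h^2T\to\infty$.

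For the numerator, the support condition on $K$ forces $t/T\in[r-c/T-h,\,r-c/T]$. For $T$ large enough that $c/T<h$, I split this window into a main part $I_1=\{t:t/T\in(r-h,\,r-c/T]\}$ and an edge part $I_2=\{t:t/T\in[r-c/T-h,\,r-h]\}$. For $t\in I_1$ I write $h'=r-t/T\in[c/T,h)\subset(0,h)$, and Lemma A.3 gives
\[
\sup_{r\in\mathcal{C}_h,\,t\in I_1}\bigl|\sigma_T^2(t/T)-\sigma_T^2(r)\bigr|=o_{a.s.}(1)
\]
uniformly, so the $I_1$-contribution to the numerator is bounded in absolute value by $o_{a.s.}(1)\cdot\sum_t K_h(r-c/T-t/T)=o_{a.s.}(hT)$ uniformly. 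For $t\in I_2$, the number of such $t$ is at most $\lceil c\rceil+1=O(1)$, each term is bounded by $2\bar v^2\bar K$ using Assumptions 2.1(a) and 2.3(b), so the $I_2$-contribution is $O(1)=o(hT)$ since $hT\to\infty$.

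Dividing the numerator bound $o_{a.s.}(hT)+O(1)=o_{a.s.}(hT)$ by the denominator of exact order $hT$ yields $o_p(1)$ uniformly in $r\in\mathcal{C}_h$, which is the claim. The main delicate point is the edge part $I_2$: the definition of $\mathcal{C}_h$ only guarantees continuity of $\sigma$ on $(r-h,r]$, and the shift by $c/T$ lets $t/T$ spill just below $r-h$, into a region that could straddle a discontinuity of $\sigma$. The reason this does no harm is purely combinatorial, namely that such $t$ are $O(1)$ in number, so their bounded contribution is absorbed by the diverging normalization $hT$.
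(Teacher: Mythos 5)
Your proposal is correct and follows essentially the same route as the paper's proof: the same weighted-average representation, the denominator normalized via the kernel Riemann-sum lemma, the main part of the window controlled by the uniform left-continuity of $\sigma_T^2$ on $\mathcal{C}_h$, and the $O(1)$ boundary terms created by the shift $c/T$ absorbed by the diverging normalization $hT$. The only cosmetic difference is that the paper expresses the edge part as a difference of two indicator functions rather than an explicit interval of length $c/T$, and your lemma references are shifted by one relative to the paper's numbering.
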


\begin{proof}
We have
\[
\hat{\sigma}_1^2(r-c/T)-\sigma_T^2(r)
=
\frac{\sum_{t=1}^T \left(\sigma_T^2(t/T)-\sigma_T^2(r)\right) K_h(r-c/T-t/T)}{\sum_{t=1}^T K_h(r-c/T-t/T)}
\]
and $\sup_{h\leq r\leq 1}hT/(\sum_{t=1}^T K_h(r-c/T-t/T))=O_p(1)$ by Lemma \ref{lemma-b2} and  Assumption \ref{assumption-kernel} (a). 

To complete the proof, we write
\begin{align*}
\frac{1}{hT}\sum_{t=1}^T \left(\sigma_T^2(t/T)-\sigma_T^2(r)\right) K_h(r-c/T-t/T)
=A_T(r) + B_T(r),
\end{align*}
where 
\begin{align*}
A_T(r)
&=\frac{1}{hT}\sum_{t=1}^{T}  
\left(\sigma_T^2(t/T) 1\{t/T\in(r-h,r]\} - \sigma_T^2 (r)\right)  K_h(r-c/T - t/T)ds,\\
B_T(r)
&=\frac{1}{hT}\sum_{t=1}^{T}  
\sigma_T^2(t/T) \left(1\{(t+c)/T\in[r-h, r]\} - 1\{t/T\in(r-h,r]\}\right)K_h(r-c/T - t/T)ds.
\end{align*}

We can deduce from Lemmas \ref{lemma-b2} and \ref{lemma-b3} that
\begin{align*}
\sup_{r\in \mathcal{C}_h}|A_T(r)|
=\sup_{r\in \mathcal{C}_h, 0<h'<h} |\sigma_T^2(r) - \sigma_T^2(r-h')|\left(\frac{1}{hT}\sum_{t=1}^T K_h(r - t/T)\right)
=o_p(1).
\end{align*}

We note that $\sigma_T$ and $K$ are bounded due to Assumptions \ref{assumption-mds} (a) and \ref{assumption-kernel} (b). Also, we have for a fixed $c\geq0$ and large $T$
\begin{align*}
&\left| 1\left\{\frac{t+c}{T}\in[r-h, r]\right\} - 1\left\{\frac{t}{T}\in(r-h,r]\right\}\right|\\
&\leq 
1\left\{\frac{t}{T}\in \left[r-\frac{c}{T}-h, r-h\right] \right\} + 1\left\{\frac{t}{T}\in\left[r-\frac{c}{T}, r\right]\right\},
\end{align*}
and hence,
\[
\sup_{h\leq r\leq 1}\sum_{t=1}^T\left| 1\{(t+c)/T\in[r-h, r]\} - 1\{t/T\in(r-h,r]\}\right|
\leq 2c+2.
\]
It follows that
\begin{align*}
\sup_{h\leq r\leq 1}|B_T(r)| 
&\leq \frac{1}{hT} \sup_{h\leq r\leq 1}|\sigma_T^2(r) K(r)| 
\sup_{h\leq r\leq 1}\sum_{t=1}^T\left| 1\{(t+c)/T\in[r-h, r]\} - 1\{t/T\in(r-h,r]\}\right|\\
&=O_p(1/(hT)).
\end{align*}
This completes the proof.
\end{proof}

\begin{lemma}\label{lemma-b5}
Let Assumptions \ref{assumption-mds}, \ref{assumption-kernel} and \ref{assumption-error} hold.  If $h\to0$, $h^2T\to\infty$ and $h^\kappa T = O(1)$  for some $\kappa>2$, then
\begin{align*}
\sup_{h\leq r\leq 1} |\hat{\sigma}_2^2(r)|
&=O_p\left(\left( \log (hT)/(hT) \right)^{1/2}T^{2q} \right).
\end{align*}
\end{lemma}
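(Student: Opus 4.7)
The plan is to treat the numerator and denominator of $\hat{\sigma}_2^2(r)$ separately. By Lemma \ref{lemma-b2} applied to $f = K$, the denominator satisfies $\sum_{t=1}^T K_h(r-t/T) = hT\,(1+o(1))$ uniformly for $r \in [h,1]$ under $h^2 T \to \infty$, so it suffices to control
\[
M_T(r) := \sum_{t=1}^T (u_t^2 - v_t^2)\,K_h(r-t/T)
\]
uniformly in $r$, using $E(u_t^2 | \mathcal{F}_{t-1}) = v_t^2$ from Assumption \ref{assumption-mds}. The target becomes $\sup_{r \in [h,1]} |M_T(r)| = O_p\bigl(T^{2q}(hT\log(hT))^{1/2}\bigr)$.

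For fixed $r$, $(M_T(r))_T$ is an $(\mathcal{F}_t)$-martingale with increments $\xi_t(r) := (u_t^2 - v_t^2) K_h(r-t/T)$. Boundedness of $v_t$ and of $E(\varepsilon_t^4|\mathcal{F}_{t-1})$ from Assumption \ref{assumption-mds} give $E(\xi_t(r)^2 | \mathcal{F}_{t-1}) \leq C\,K_h^2(r-t/T)$, so Lemma \ref{lemma-b2} applied to $K^2$ yields $\langle M_T(r)\rangle \leq C_1\, hT$ uniformly in $r$. For the optional quadratic variation, $[M_T(r)] = \sum_t \xi_t(r)^2 \leq C(1 + \max_{t \leq T}\varepsilon_t^4)\sum_t K_h^2(r-t/T)$, which on the event $\Omega_T := \{\max_{t \leq T}|\varepsilon_t| \leq C_0 T^q\}$ of probability approaching one by Assumption \ref{assumption-error} is at most $C_2\,T^{4q}\,hT$. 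Applying the two-sided martingale exponential inequality of \cite{bercu2008exponential} with $x = \lambda T^{2q}(hT \log(hT))^{1/2}$ and $y = C_2\,T^{4q}\,hT$ gives
\[
P\bigl(|M_T(r)| \geq x,\; [M_T(r)] + \langle M_T(r)\rangle \leq y\bigr) \leq 2\,(hT)^{-\lambda^2/(2C_2)}.
\]

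To pass from pointwise to uniform control, I would discretize. Assumption \ref{assumption-kernel}(b) (Lipschitz $K$) implies
\[
|M_T(r) - M_T(r')| \leq \frac{\bar{K}}{h}|r-r'|\sum_{t=1}^T (u_t^2 + v_t^2) = O_p\bigl(T|r-r'|/h\bigr),
\]
since $\sum_t u_t^2 = O_p(T)$ by $E\sum_t \varepsilon_t^2 = T$ and the boundedness of $v_t$. Choosing a grid on $[h,1]$ with spacing $\Delta = c\,h^{3/2} T^{2q-1/2}(\log(hT))^{1/2}$ makes the interpolation error match the target rate, with grid size $N_T = O\bigl(h^{-3/2}T^{1/2-2q}(\log(hT))^{-1/2}\bigr)$. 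Under $h^\kappa T = O(1)$ with $\kappa > 2$, together with $h^2 T \to \infty$, one has $\log(1/h) = O(\log T)$ and $\log(hT) \geq (1-1/\kappa)\log T - C$, so $\log N_T = O(\log(hT))$. A union bound over the grid with $\lambda$ large enough that $N_T\,(hT)^{-\lambda^2/(2C_2)}$ is arbitrarily small, combined with $P(\Omega_T) \to 1$, yields the stated uniform rate.

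The main obstacle is balancing three scales: the predictable quadratic variation sets the natural deviation scale $(hT\log(hT))^{1/2}$; the optional quadratic variation, controllable only through $\max_t |\varepsilon_t|$, inflates it by the prefactor $T^{2q}$ from Assumption \ref{assumption-error}; and the discretization introduces a polynomial-in-$T$ grid whose logarithm must be dominated by $\log(hT)$. The assumption $h^\kappa T = O(1)$ for some $\kappa > 2$ is precisely what prevents $\log(1/h)$ from overwhelming $\log(hT)$ when the union bound is applied.
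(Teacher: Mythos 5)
Your proposal is correct and follows essentially the same route as the paper's proof: reduce to the numerator via the uniform lower bound on $\sum_t K_h(r-t/T)$, bound the predictable and optional quadratic variations by $O_p(hT)$ and $O_p(T^{4q}hT)$ respectively (the latter via Assumption \ref{assumption-error}), apply the two-sided Bercu--Touati exponential inequality at each point of a polynomially fine grid, and absorb the discretization error using the Lipschitz property of $K$; the paper merely handles the grid error by Cauchy--Schwarz with spacing $h^{\bar\kappa}$ rather than your $L^1$ bound with rate-matched spacing. One small slip: your inequality $\log(hT)\geq(1-1/\kappa)\log T - C$ does not follow from $h^\kappa T=O(1)$ (which bounds $h$ from above, not below), but the needed conclusion $\log N_T=O(\log(hT))$ follows anyway from $h^2T\to\infty$, which gives $\log(hT)\geq\tfrac12\log T$ eventually.
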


\begin{proof}
Since $\sup_{h\leq r\leq 1}hT/(\sum_{t=1}^T K_h(r-t/T))=O_p(1)$, it suffices to show that
\[
\sup_{h\leq r\leq 1}\left|\sum_{t=1}^T v_t^2 (\varepsilon_t^2 -1) K_h(r-t/T)\right|
=O_p((hT\log (hT))^{1/2} T^{2q}).
\]

To complete the proof, we split the interval $[h,1]$ into $\bar{k}$ intervals of the form $I_k = 1\{ r | r_k \leq r\leq r_{k+1}\}$, where $r_k = h + k h^{\bar{\kappa}}$ for $k=0,\cdots,[(1-h)/h^{\bar{\kappa}}]$ for some $\bar{\kappa} \geq (\kappa + 2)/2$. Then we can write
\begin{align*}
\sup_{h\leq r\leq 1}\left|\sum_{t=1}^T v_t^2 (\varepsilon_t^2 -1) K_h(r-t/T)\right|
&\leq \max_{0\leq k\leq \bar{k}}|S_T(r_k)| + R_T,
\end{align*}
where $S_t(r_k) = \sum_{s=1}^t v_s^2 (\varepsilon_s^2 -1) K_h(r_k-s/T)$ for $t=1,\cdots, T$, and
\begin{align*}
R_T = \max_{0\leq k\leq \bar{k}} \sup_{r\in I_k} \sum_{t=1}^T |v_t^2  (\varepsilon_t^2-1)| \left|K_h(r-t/T) - K_h(r_k-t/T)\right|.
\end{align*}

For $R_T$, we note that for each $k$, $r_{k+1} = r_k + h^{\bar{\kappa}}$ and
\[
\left|K_h(r-t/T) - K_h(r_k-t/T)\right| = \left|K_h(r-t/T) - K_h(r_k-t/T)\right| 1\{r_k - h\leq t/T\leq r_k + h^{\bar{\kappa}}\}
\]
for $r\in I_k$. It follows that
\begin{align*}
R_T
&= \max_{0\leq k\leq \bar{k}} \sup_{r\in I_k} \sum_{t=1}^T |v_t^2  (\varepsilon_t^2-1)| \left|K_h(r-t/T) - K_h(r_k-t/T)\right| 1\{r_k - h\leq t/T\leq r_k + h^{\bar{\kappa}} \}\\
&\leq C h^{\bar{\kappa}-1} \sum_{t=1}^T |v_t^2 (\varepsilon_t^2-1)|1\{r_k - h\leq t/T\leq r_k + h^{\bar{\kappa}} \}\\ 
&\leq C h^{\bar{\kappa}-1} \left(\max_{0\leq k\leq \bar{k}}\sum_{t=1}^T v_t^41\{r_k - h\leq t/T\leq r_k + h^{\bar{\kappa}} \}\right)^{1/2} \left(\sum_{t=1}^T (\varepsilon_t^2-1)^2\right)^{1/2}\\
&\leq O_p(h^{\bar{\kappa}-1} T^{1/2} (hT)^{1/2}),
\end{align*}
where the second line follows from Assumption \ref{assumption-kernel}, the third line holds due to the Cauchy-Schwarz inequality, and the last line follows from the fact that for any $\bar{\kappa}\geq1$
\[
\max_{0\leq k\leq \bar{k}}\sum_{t=1}^T v_t^41\{r_k - h\leq t/T\leq r_k + h^{\bar{\kappa}} \}
\leq 2\sup_{h\leq r\leq 1}\sum_{t=1}^T v_t^4 1\{r - h\leq t/T\leq r\}
=O_p(hT).
\]
Consequently, we have $R_T = O_p((hT)^{1/2})$ since $\bar{\kappa}\geq2$ and $h^{2\bar{\kappa} - 2}T\leq h^{\kappa}T = O(1)$. 

As for $S_T(r_k)$, we note that for each $k=1,\cdots,\bar{k}$, $(S_t(r_k),\mathcal{F}_t)$ is a square integrable martingale since $(\varepsilon_t^2 -1,\mathcal{F}_t)$ is an MDS  and $v_t^2$ is $\mathcal{F}_t$-adapted such that $\varepsilon_t^2 -1$ has a finite second moment and $\sup_{r\in[h,1]}\sum_{t=1}^T v_t^2 1\{r-h\leq t/T\leq r\} =O_p(hT)$ by Assumptions \ref{assumption-mds}. The predictive quadratic variation $\langle S(r_k)\rangle$ and the total quadratic variation $[S(r_k)]$ of $S(r_k)$ are respectively given by
\begin{align*}
\langle S(r_k)\rangle_T 
&= \sum_{t=1}^T v_t^4E[(\varepsilon_t^2 - 1)^2|\mathcal{F}_{t-1}] K_h^2(r_k - t/T),\\
[S(r_k)]_T 
&=\sum_{t=1}^T v_t^4 (\varepsilon_t^2 - 1)^2 K_h^2(r_k - t/T).
\end{align*}
Clearly, $\max_{0\leq k\leq \bar{k}}\langle S(r_k)\rangle_T = O_p(hT)$ and $\max_{0\leq k\leq \bar{k}} [S(r_k)]_T =O_p(hT T^{4q})$. It follows that $\max_{0\leq k\leq \bar{k}}\left|\langle S(r_k)\rangle_T  + [S(r_k)]_T\right| = O_p(hTT^{4q})$ which implies that for any $\delta$, we can find a constant $M>0$ such that
\[
P\left(\max_{0\leq k\leq \bar{k}}\left| \langle S(r_k) \rangle_T + [S(r_k)]_T\right| < M(hTT^{4q})\right)\leq \delta.
\]
Consequently, we have
\begin{align*}
&P\left(\max_{0\leq k\leq \bar{k}}|S_T(r_k)|>M(hTT^{4q}\log(hT))^{1/2}\right)\\
&\leq P\left(\max_{0\leq k\leq \bar{k}}|S_T(r_k)|>M(hTT^{4q}\log(hT))^{1/2}, \max_{0\leq k\leq \bar{k}}\left|\langle S(r_k) \rangle_T + [S(r_k)]_T\right| < M(hTT^{4q})\right) + \delta\\
&\leq \sum_{k=0}^{\bar{k}}P\left(|S_T(r_k)|>M(hTT^{4q}\log(hT))^{1/2}, \langle S(r_k) \rangle_T + [S(r_k)]_T < M(hTT^{4q})\right) + \delta\\
&\leq \frac{2}{h^{\bar{\kappa}}}\exp\left(- M\log(hT)/2\right) + \delta,
\end{align*}
where the last line follows from the two-sided martingale exponential inequality (see, e.g., Theorem 2.1 of Bercu and Touati (2008)). Moreover, $h^{-\bar{\kappa}}\exp\left(- M\log(hT)/2\right)
=(h^{\bar{\kappa}} (hT)^{M/2})^{-1} \to 0$ for all $M\geq 2\bar{\kappa}$ as $h^2T\to\infty$. This completes the proof.
\end{proof}

\begin{lemma}\label{lemma-b6}
Let Assumptions \ref{assumption-mds}, \ref{assumption-kernel} and \ref{assumption-ols} hold. If $h\to0$ and $h^2T\to\infty$, then
\begin{align*}
\sup_{h\leq r\leq 1} |\hat{\sigma}_3^2(r)|
=O_p\left(T^{2p}/(hT)\right).
\end{align*}
\end{lemma}

\begin{proof}[Proof]
By Lemma \ref{lemma-ols}, we have $(\hat{\beta}-\beta)^2 = O_p\left(T^{2p} \left(\sum_{t=1}^T x_{t-1}^2\right)^{-1}\right)$. Moreover, we may  deduce from Lemma \ref{lemma-b2} with Assumptions  \ref{assumption-kernel}  that for some $0<M<\infty$
\[
\sup_{h\leq r\leq 1}\left|\frac{\sum_{t=1}^T x_{t-1}^2 K_h(r-t/T)}{\sum_{t=1}^T K_h(r-t/T)}\right| 
\leq  \frac{M}{hT}\left(\sum_{t=1}^T x_{t-1}^2\right)(1+o_p(1)),
\]
from which we have the stated result.
\end{proof}

\begin{lemma}\label{lemma-b7}
Let Assumptions \ref{assumption-mds}, \ref{assumption-kernel} and \ref{assumption-ols} hold. If $h\to0$ and $h^2T\to\infty$, then
\begin{align*}
\sup_{h\leq r\leq 1} |\hat{\sigma}_4^2(r)|
=O_p\left(T^{2p}/(hT)\right).
\end{align*}
\end{lemma}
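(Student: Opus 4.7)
The strategy is to mirror the short proof of Lemma~\ref{lemma-b6}. Write
\[
|\hat\sigma_4^2(r)| \leq 2|\hat\beta - \beta| \cdot \frac{\big|\sum_{t=1}^T x_{t-1}u_t K_h(r-t/T)\big|}{\sum_{t=1}^T K_h(r-t/T)},
\]
so that the proof reduces to controlling the three factors. By Lemma~\ref{lemma-ols} I have $|\hat\beta-\beta| = O_p(T^p(\sum_t x_{t-1}^2)^{-1/2})$. By Lemma~\ref{lemma-b2} applied with $f=K$ and Assumption~\ref{assumption-kernel}(a), $\sum_{t=1}^T K_h(r-t/T) \geq c\,hT$ uniformly on $[h,1]$ with probability tending to one, for some $c>0$. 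Hence the entire claim follows once I establish
\[
\sup_{r\in[h,1]} \Big|\sum_{t=1}^T x_{t-1}u_t K_h(r-t/T)\Big| = O_p\!\left(T^p \Big(\sum_{t=1}^T x_{t-1}^2\Big)^{1/2}\right),
\]
since then $\sup_r|\hat\sigma_4^2(r)| \leq O_p(T^p(\sum x^2)^{-1/2})\cdot O_p(T^p(\sum x^2)^{1/2})/(chT) = O_p(T^{2p}/(hT))$.

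For this supremum bound I would fix $r$ first and set $c_t(r) := K_h(r-t/T)/\bar K$. This sequence is deterministic and lies in $[0,1]$ by Assumption~\ref{assumption-kernel}, so Assumption~\ref{assumption-ols} delivers the required rate \emph{pointwise} in $r$. To promote it to a uniform statement, I would adopt the discretization used in the proof of Lemma~\ref{lemma-b5}: cover $[h,1]$ by a grid $r_k = h + k h^{\bar\kappa}$ consisting of $\sim h^{-\bar\kappa}$ points and split
\begin{align*}
\sup_{r\in[h,1]}\Big|\sum_t x_{t-1}u_t K_h(r-t/T)\Big|
&\leq \max_k \Big|\sum_t x_{t-1}u_t K_h(r_k-t/T)\Big| \\
&\quad + \max_k \sup_{r\in I_k}\Big|\sum_t x_{t-1}u_t\bigl(K_h(r-t/T)-K_h(r_k-t/T)\bigr)\Big|.
\end{align*}
The Lipschitz bound $|K_h(r-t/T)-K_h(r_k-t/T)| \leq (\bar K/h)\,h^{\bar\kappa}$, together with the support property of $K$ (only $O(hT)$ indices contribute at a given $r$) and $E u_t^2 \leq \bar v^2$ from Assumption~\ref{assumption-mds}, renders the residual term $o_p(T^p(\sum x^2)^{1/2})$ once $\bar\kappa$ is chosen large enough, exactly as the $R_T$ term in Lemma~\ref{lemma-b5} was handled.

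The main obstacle is the maximum-over-grid-points piece, because Assumption~\ref{assumption-ols} is phrased pointwise in $(c_t)$ while the grid has $\sim h^{-\bar\kappa}$ points. The cleanest route is to exploit the martingale structure of $M_t(r_k) := \sum_{s=1}^t x_{s-1}u_s K_h(r_k-s/T)$ in $t$---which is a martingale transform since $(u_t)$ is an MDS by Assumption~\ref{assumption-mds} and $x_{s-1}K_h(r_k-s/T)$ is $\mathcal{F}_{s-1}$-measurable---and apply the two-sided exponential inequality of Bercu and Touati at each $r_k$ before union-bounding over the grid, in direct analogy with the treatment of $\hat\sigma_2^2$ in the proof of Lemma~\ref{lemma-b5}. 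The resulting logarithmic factor is negligible compared with $T^p$ since $p\in[0,1/8)$ and $h\to 0$ with $h^2T\to\infty$. Combining the three factors then yields $\sup_{r\in[h,1]}|\hat\sigma_4^2(r)| = O_p(T^{2p}/(hT))$, as required.
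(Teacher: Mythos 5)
Your decomposition into the three factors, the use of Lemma \ref{lemma-ols} for $|\hat\beta-\beta|$, and the lower bound $\sum_t K_h(r-t/T)\geq c\,hT$ from Lemma \ref{lemma-b2} all coincide with the paper's proof. The divergence is in how the numerator bound $\sup_{h\le r\le1}|\sum_t x_{t-1}u_tK_h(r-t/T)| = O_p(T^p(\sum_t x_{t-1}^2)^{1/2})$ is obtained: the paper simply takes $c_t=K_h(r-t/T)/\bar K$ and invokes Assumption \ref{assumption-ols} directly for the supremum, whereas you --- rightly noting that the assumption is stated for a fixed deterministic sequence, so it is a priori only pointwise in $r$ --- try to upgrade it via discretization plus the Bercu--Touati inequality. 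That upgrade, as sketched, does not close under the lemma's hypotheses. First, the exponential inequality controls $\max_k|M_T(r_k)|$ only in terms of $\langle M(r_k)\rangle_T+[M(r_k)]_T$, and the total quadratic variation $[M(r_k)]_T=\sum_t x_{t-1}^2u_t^2K_h^2(r_k-t/T)$ involves $\max_t u_t^2$; Assumption \ref{assumption-error} is not among the hypotheses of this lemma, and Assumption \ref{assumption-mds}(c) only gives $\max_t|u_t|=O_p(T^{1/4})$, so the union bound delivers a rate of order $T^{1/4}(\log(1/h))^{1/2}(\sum_t x_{t-1}^2)^{1/2}$, which exceeds the target since $p<1/8$. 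Second, your Lipschitz residual is of order $h^{\bar\kappa-1}T^{1/2}(\sum_t x_{t-1}^2)^{1/2}$ (using $\sum_t u_t^2=O_p(T)$ via Cauchy--Schwarz), and making this $o(T^p(\sum_t x_{t-1}^2)^{1/2})$ requires $h$ to decay polynomially in $T$, as in the extra condition $h^\kappa T=O(1)$ imposed in Lemma \ref{lemma-b5}; the conditions $h\to0$ and $h^2T\to\infty$ stated here do not guarantee that. If instead you read Assumption \ref{assumption-ols} as the paper does --- as a high-level condition delivering the $O_p$ bound uniformly over admissible weight sequences, hence over $r$ --- then the proof is exactly the three-factor argument you opened with, and the chaining machinery is unnecessary.
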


\begin{proof}[Proof]
As in the proof of  Lemma \ref{lemma-b6}, we have $\sup_{h\leq r\leq 1}|\sum_{t=1}^T K_h(r-t/T)|^{-1} = O_p((hT)^{-1})$ and $\hat{\beta}-\beta = O_p\left(T^p \left(\sum_{t=1}^T x_{t-1}^2\right)^{-1/2}\right)$. Moreover, we have
\[
\sup_{h\leq r\leq 1}\left|\sum_{t=1}^T x_{t-1} u_t K_h(r-t/T)\right|
=O_p\left(T^p \left(\sum_{t=1}^T x_{t-1}^2\right)^{1/2}\right)
\]
by Assumption \ref{assumption-ols}. This completes the proof.

\end{proof}

\begin{lemma}\label{lemma-b8}
Let Assumptions \ref{assumption-mds}, \ref{assumption-kernel} and \ref{assumption-ols} hold. If $h\to0$, $h^2T\to\infty$ and $h^\kappa T=O(1)$ for some $\kappa>2$, then
\[
\sup_{h\leq r\le 1}\frac{1}{\hat{\sigma}^2(r)},\quad \sup_{h\leq r\le 1}\frac{1}{\hat{\sigma}_1^2(r) + \hat{\sigma}_2^2(r)}= O_p(1).
\]
\end{lemma}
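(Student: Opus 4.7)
The plan is to exploit the fact that $\hat{\sigma}_1^2$ is \emph{deterministically} bounded below by a strict positive constant, and that the remaining three components in the decomposition \eqref{decomposition} are uniformly negligible under the present bandwidth conditions, so that $\hat{\sigma}^2$ (and $\hat{\sigma}_1^2+\hat{\sigma}_2^2$) cannot stray far below $\underline{v}^2$.

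First I would observe that, since $K_h\geq 0$ and $v_t^2\in[\underline{v}^2,\bar{v}^2]$ by Assumption \ref{assumption-mds}(a),
\[
\hat{\sigma}_1^2(r) \;=\; \frac{\sum_{t=1}^T v_t^2 K_h(r-t/T)}{\sum_{t=1}^T K_h(r-t/T)}
\]
is a convex combination of values in $[\underline{v}^2,\bar{v}^2]$ whenever the denominator is positive. Lemma \ref{lemma-b2} together with Assumption \ref{assumption-kernel}(a) guarantees $\sum_{t=1}^T K_h(r-t/T)\asymp hT$ uniformly on $[h,1]$, so $\inf_{r\in[h,1]}\hat{\sigma}_1^2(r)\geq \underline{v}^2>0$ deterministically, for every $T$ large enough.

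Second, I would combine the uniform rates already in hand. Lemmas \ref{lemma-b6} and \ref{lemma-b7} give $\sup_{h\leq r\leq 1}(|\hat{\sigma}_3^2(r)|+|\hat{\sigma}_4^2(r)|)=O_p(T^{2p}/(hT))$, which is $o_p(1)$ because $p<1/8$ and $h^2T\to\infty$ forces $hT\gg T^{1/2}\gg T^{2p}$. Lemma \ref{lemma-b5} — whose hypotheses on $h$ ($h\to 0$, $h^2T\to\infty$, $h^\kappa T=O(1)$ for some $\kappa>2$) are exactly those posited here — gives $\sup_{h\leq r\leq 1}|\hat{\sigma}_2^2(r)|=O_p(T^{2q}(\log(hT)/(hT))^{1/2})=o_p(1)$. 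Using $\hat{\sigma}_3^2\geq 0$ trivially, this yields, on an event of probability tending to one,
\[
\hat{\sigma}^2(r)\;\geq\;\hat{\sigma}_1^2(r)-|\hat{\sigma}_2^2(r)|-|\hat{\sigma}_4^2(r)|\;\geq\;\underline{v}^2-o_p(1)\;\geq\;\tfrac12\underline{v}^2
\]
uniformly in $r\in[h,1]$. Hence $\sup_{r\in[h,1]}1/\hat{\sigma}^2(r)\leq 2/\underline{v}^2$ with probability tending to one, i.e., $O_p(1)$. The analogous inequality $\hat{\sigma}_1^2(r)+\hat{\sigma}_2^2(r)\geq \underline{v}^2-|\hat{\sigma}_2^2(r)|\geq \tfrac12\underline{v}^2$ — which only invokes Lemma \ref{lemma-b5} — gives the second conclusion in the same way.

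I do not anticipate any real obstacle: once the uniform rates of Lemmas \ref{lemma-b5}-\ref{lemma-b7} are in hand, the argument reduces to the elementary observation that perturbing a uniform deterministic positive lower bound by something of $o_p(1)$ still leaves the quantity bounded away from zero. The only bookkeeping is verifying that the bandwidth conditions $h\to 0$, $h^2T\to\infty$, $h^\kappa T=O(1)$ simultaneously deliver $T^{2p}/(hT)=o(1)$ and $T^{2q}(\log(hT)/(hT))^{1/2}=o(1)$, both of which follow from $p,q\in[0,1/8)$.
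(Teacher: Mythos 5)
Your proposal is correct and follows essentially the same route as the paper: both arguments rest on the observation that $\hat{\sigma}_1^2(r)$ is a $K_h$-weighted average of $v_t^2\in[\underline{v}^2,\bar{v}^2]$ and hence bounded below by $\underline{v}^2$, while Lemmas \ref{lemma-b5}--\ref{lemma-b7} make the remaining components uniformly $o_p(1)$ under the stated bandwidth conditions. Your write-up is in fact slightly more explicit than the paper's (which compresses this into the statement that $1/\hat{\sigma}(r)$ equals a ratio bounded via $\sigma_T^2\geq\underline{v}^2$ plus $o_p(1)$), but there is no substantive difference.
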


\begin{proof}
It follows from Lemmas \ref{lemma-b5}-\ref{lemma-b7} that
\[
\frac{1}{\hat{\sigma}(r)} = \frac{\sum_{t=1}^T K_h(r-t/T)}{\sum_{t=1}^T \sigma_T^2(t/T)K_h(r-t/T)} +o_p(1)
\]
uniformly in $r\in[h,1]$. However, $\sigma_T^2(r)\geq \underline{v}>0$ by Assumption \ref{assumption-mds}, and hence, $\sup_{h\leq r\le 1} 1/\hat{\sigma}^2(r) \geq \underline{v} + o_p(1) = O_p(1)$. Similarly, we can show that $\sup_{h\leq r\le 1} 1/(\hat{\sigma}_1^2(r) + \hat{\sigma}_2^2(r)) = O_p(1)$.
\end{proof}

\begin{lemma}\label{lemma-limit}
Under Assumptions \ref{assumption-mds} and \ref{assumption-limit}, 
\[
\left( W_T, \sigma_T, \int \sigma_T(r) dW_T(r)\right)
\to_d
\left( W, \sigma, \int \sigma(r) dW(r)\right)
\]
in $\mathbf{D}_{\mathbb{R}\times \mathbb{R}^+\times  \mathbb{R}}[0,1]$.
\end{lemma}

\begin{proof}[Proof]
The lemma follows from Theorem 4.6 of \cite{KP} (see also Theorem 2.1 of \cite{hansen-1992}.
\end{proof}

\section*{Appendix C: Proofs of the Main Results}

\begin{proof}[Proof of Lemmas \ref{lemma-cauchy}]
The stated result follows immediately from Lemma \ref{lemma-limit}. 
\end{proof}

\begin{proof}[Proof of Lemma \ref{lemma-ols}]
The stated result follows immediately from Assumptions \ref{assumption-mds} and \ref{assumption-ols}.
\end{proof}

\begin{proof}[Proof of Proposition \ref{proposition-vol}]
Assumption \ref{assumption-volest} implies that $h^2T\to\infty$ and $h^\kappa T=O(1)$ for some $\kappa>2$. The stated results then follow immediately from Lemmas \ref{lemma-b4}-\ref{lemma-b7}.
\end{proof}

\begin{proof}[Proof of Theorem \ref{theorem-vol}]
We define $\tilde{\sigma}$ by $\tilde{\sigma}(r) = \hat{\sigma}_1(r-1/T)+\hat{\sigma}_2(r-1/T)$ for $r\in[h,1]$ and $\tilde{\sigma}(r)=\sigma_T(r)$ for $r\in[0,h)$. We also define $\bar{\sigma}$ as 
\[
\bar{\sigma}(r) = \tilde{\sigma}(r)1\{\sigma(s)=\sigma(s-), s\in (r-h,r]\} + \sigma_T(r)1\{\sigma(s)\neq \sigma(s-), s\in (r-h,r]\}
\]
for $r\in[h,1]$, and $\bar{\sigma}(r) = \sigma_T(r)$ for $r\in[0,h)$.

Clearly, $\bar{\sigma}(r)$ is a $\mathcal{F}_{Tr}$-adapted c\`{a}dl\`{a}g process such that $\bar\sigma\to_d \sigma$ since 
\[
\sup_{0\leq r\leq 1}|\bar\sigma^2(r) - \sigma_T^2(r)|
\leq \sup_{r\in \mathcal{C}_h}|\bar\sigma^2(r) - \sigma_T^2(r)|
\leq \sup_{r\in \mathcal{C}_h}|\hat\sigma_1^2(r-1/T) - \sigma_T^2(r)|
+ \sup_{h\leq r\leq 1}|\hat{\sigma}_2(r)|
\]
which is of $o_{p}(1)$ by Lemmas \ref{lemma-b4} and \ref{lemma-b5}.  It then follows from Lemma \ref{lemma-limit} that under $\beta=0$
\[
\tau(\bar{\sigma}) = \frac{1}{\sqrt{T}}\sum_{t=1}^T \frac{sign(x_{t-1}) y_t}{\bar{\sigma}(t/T)}
=\frac{1}{\sqrt{T}}\sum_{t=1}^T sign(x_{t-1})\varepsilon_t\frac{\sigma_T(t/T)}{\bar{\sigma}(t/T)}
\to_d \mathbb{N}(0,1).
\]

To complete the proof, we show that
\begin{align*}
\tau(\hat{\sigma}) - \tau(\tilde{\sigma}), \,\, \tau(\tilde{\sigma}) - \tau(\bar{\sigma}) = o_p(1).
\end{align*}
We write
\begin{align*}
\tau(\hat{\sigma}) - \tau(\tilde{\sigma})
=A_T-B_T+C_T,
\end{align*}
where
\begin{align*}
A_T
&=
\frac{1}{\sqrt{T}}\sum_{t=1}^{hT} sign(x_{t-1})\varepsilon_t\frac{\sigma_T(t/T)}{\hat{\sigma}((t-1)/T)},\quad
B_T
=
\frac{1}{\sqrt{T}}\sum_{t=1}^{hT} sign(x_{t-1})\varepsilon_t\frac{\sigma_T(t/T)}{\tilde{\sigma}(t/T)},\\
C_T
&=\frac{1}{\sqrt{T}}\sum_{t=hT+1}^T sign(x_{t-1})\varepsilon_t\left(\frac{\sigma_T(t/T)}{\hat{\sigma}((t-1)/T)} - \frac{\sigma_T(t/T)}{\tilde{\sigma}(t/T)}\right).
\end{align*}

To show $A_T, B_T=o_p(1)$, we note that
\begin{align*}
\frac{1}{\sqrt{hT}}\sum_{t=1}^{hT} sign(x_{t-1})\varepsilon_t \sigma_T(t/T),\quad
\frac{1}{\sqrt{hT}}\sum_{t=1}^{hT} sign(x_{t-1})\varepsilon_t 
=O_p(1)
\end{align*}
by Lemma \ref{lemma-limit}. It follows that
\begin{align*}
A_T
&=
\sqrt{h}\frac{1}{\hat{\sigma}(h)}
\left(\frac{1}{\sqrt{hT}}\sum_{t=1}^{hT} sign(x_{t-1})\varepsilon_t\sigma_T(t/T)\right)
=O_p(\sqrt{h})
\end{align*}
since $1/\hat{\sigma}(h) = O_p(1)$ by Lemma \ref{lemma-b8}, and
\begin{align*}
B_T
&=
\sqrt{h}
\left(\frac{1}{\sqrt{hT}}\sum_{t=1}^{hT} sign(x_{t-1})\varepsilon_t\right)
=O_p(\sqrt{h}).
\end{align*}
For $C_T$, we have
\begin{align*}
|C_T|
&\leq \frac{1}{\sqrt{T}}\sum_{t=hT+1}^T |\sigma_T(t/T)\varepsilon_t| \left|\frac{\tilde{\sigma}^2(t/T) - \hat\sigma^2((t-1)/T)}{\hat{\sigma}((t-1)/T)\tilde{\sigma}(t/T)(\tilde{\sigma}(t/T) + \sigma_T(t/T))}\right|\\
&\leq O_p(\sqrt{T})\times \sup_{h\leq r\leq 1}|\tilde{\sigma}^2(r) - \hat{\sigma}^2(r-1/T)| 
\end{align*}
since $\sum_{t=hT+1}^T |\sigma_T(t/T)\varepsilon_t| = O_p(T)$ and $1/\hat{\sigma}(r), 1/\tilde{\sigma}(r) = O_p(1)$ uniformly in $h\leq r\leq 1$ by Lemma \ref{lemma-b8}. However, it follows from Lemmas \ref{lemma-b6} and \ref{lemma-b7} that
\[
\sup_{h\leq r\leq 1}|\tilde{\sigma}^2(r) - \hat{\sigma}^2(r-1/T)| 
\leq \sup_{h\leq r\leq 1}|\hat{\sigma}_3^2(r)| +  \sup_{h\leq r\leq 1}|\hat{\sigma}_4^2(r)| = O_p(T^{2p}/(hT)),
\]
from which we have 
\[
C_T = O_p(T^{2p}/(hT^{1/2})) = o_p(1)
\]
due to Assumption \ref{assumption-volest} (a). Thus,  $\tau(\hat{\sigma}) - \tau(\tilde{\sigma}) = o_p(1)$.

We write $\tau(\tilde{\sigma}) - \tau(\bar{\sigma}) = T^{-1/2}D_T$, where
\begin{align*}
D_s
&=\sum_{t=1}^s \varepsilon_t \left(sign(x_{t-1})\left(\frac{\sigma_T(t/T)}{\tilde{\sigma}(t/T)} - \frac{\sigma_T(t/T)}{\bar{\sigma}(t/T)}\right)1\{\sigma(r)\neq\sigma(r-), r\in (t/T-h, t/T]\}\right)\\
&\equiv \sum_{t=1}^s \varepsilon_t z_t.
\end{align*}
We note that $(D_t,\mathcal{F}_t)$ is a square integrable martingale due, in particular, to Assumption \ref{assumption-mds} and our constructions of $\tilde{\sigma}$ and $\bar{\sigma}$. Moreover, the predictive quadratic variation $\langle D\rangle$ and the total quadratic variation $[D]$ of $D$ satisfy
\[
\langle D\rangle_T = \sum_{t=1}^T z_t^2 = O_p(hT),\quad
[D]_T = \sum_{t=1}^T z_t^2\varepsilon_t^2 = O_p(hTT^{2q})
\]
by Lemma \ref{lemma-b8} and Assumption \ref{assumption-error}. It then follows from the two-sided martingale exponential inequality as in the proof of Lemma \ref{lemma-b5} that $D_T = O_p(T^q\sqrt{hT})$, and hence, $\tau(\tilde{\sigma}) - \tau(\bar{\sigma}) = O_p(T^q \sqrt{h})=o_p(1)$ by Assumption \ref{assumption-volest} (b). 

Now we let $\beta\neq0$. We have
\begin{align*}
\left|\tau(\hat{\sigma})\right|
&=\left|\frac{\beta}{\sqrt{T}}\sum_{t=1}^{T-1}\frac{|x_t|}{\hat\sigma((t-1)/T)} + \frac{\beta}{\sqrt{T}}\sum_{t=1}^{T-1}sign(x_{t-1})\frac{\sigma_T(t)}{\hat\sigma((t-1)/T)}\right|\\
&\geq\left|\frac{\beta}{\sqrt{T}}\sum_{t=1}^{T-1}\frac{|x_t|}{\hat\sigma((t-1)/T)}\right| - \left|\frac{\beta}{\sqrt{T}}\sum_{t=1}^{T-1}sign(x_{t-1})\frac{\sigma_T(t)}{\hat\sigma((t-1)/T)}\right|\\
&=\frac{|\beta|}{\sqrt{T}}\sum_{t=1}^{T-1}\frac{|x_t|}{\hat\sigma((t-1)/T)} + O_p(1)\\
&\geq \frac{1}{\sqrt{T}}\frac{|\beta|}{\underline{v}}\sum_{t=1}^{T-1}|x_t| + O_p(1),
\end{align*}
where the second line follows from the reverse triangle inequality, the third line holds due to Theorem \ref{theorem-vol} (a), and the last line can be deduced from the proof of Lemma \ref{lemma-b8}. The stated result then follows immediately from the condition $\sum_{t=1}^{T-1}|x_t|/\sqrt{T}\to\infty$ in the theorem.
\end{proof}

\begin{proof}[Proof of Corollary \ref{cor-vol}]
We note that Lemma \ref{lemma-ols} holds under Assumption \ref{two factor} as long as Assumption \ref{assumption-ols} holds. Also, we can show the stated results in Proposition \ref{proposition-vol} (b)-(d) under Assumption \ref{two factor} similar to the proofs of Proposition \ref{proposition-vol} (b)-(d). 

To show that Proposition \ref{proposition-vol} (a) holds under Assumption \ref{two factor}, we write for any fixed $c\geq0$
\begin{align*}
\hat{\sigma}_1^2(r-c/T) - \sigma_T^2(r)
= \frac{\sum_{t=1}^T (\sigma_T(t/T)  w_t^2 - \sigma_T(r) )K_h(r-c/T-t/T)}{\sum_{t=1}^T K_h(r-c/T-t/T)}
=A_T(r) + B_T(r),
\end{align*}
where
\begin{align*}
A_T(r)
&=\frac{\sum_{t=1}^T (\sigma_T(t/T)  - \sigma_T(r) )w_t^2 K_h(r-c/T-t/T)}{\sum_{t=1}^T K_h(r-c/T-t/T)},\\
B_T(r)
&=\sigma_T(r) \frac{\sum_{t=1}^T (w_t^2 -1) K_h(r-c/T-t/T)}{\sum_{t=1}^T K_h(r-c/T-t/T)}.
\end{align*}
Similar to the proof of Lemma \ref{lemma-b4}, we can show that $\sup_{r\in \mathcal{C}_h}|A_T(r)| = o_p(1)$. Moreover, by applying an exponential inequality for a strong mixing process (see, e.g., \cite{vogt2012nonparametric}, Theorem 4.1 with $d=0$), we may show under Assumption \ref{two factor} that
\[
\sup_{h\leq r\leq 1}|B_T(r)| = O_p\left((\log T / (hT))^{1/2}\right) = o_p(1).
\]
This shows that Proposition \ref{proposition-vol} (a) holds under Assumption \ref{two factor}. 

As for the validity of Theorem \ref{theorem-vol} under Assumption \ref{two factor}, we consider $\bar{\sigma}$ as defined in the proof of Theorem \ref{theorem-vol}.  It is easy to see that $\bar{\sigma}(r)$ is a $\mathcal{F}_{Tr}$-adapted c\`{a}dl\`{a}g process such that $\bar{\sigma}\to_d \sigma$. Therefore, the validity of Theorem \ref{theorem-vol} under Assumption \ref{two factor} can be shown similarly to the proof of Theorem \ref{theorem-vol}. 
\end{proof}

\section*{Appendix D: Additional Figures}

In this section, we present the finite sample power properties for Models CNST (continuous time)  and Models CNST and ARCH (discrete time). The simulation settings are the same as those in Section 5.2 of the main paper.

Figure \ref{figs1} presents the results on finite sample power properties of the tests for the constant volatility case in continuous time. One can see that the power curves of the Cauchy RT test and our test are very close to each other. The other tests have higher size-adjusted power for the local-to-unit root regressor cases and comparable size-adjusted power for purely non-stationary regressors.

Figures \ref{figs2}-\ref{figs4} present the numerical results on power properties under discrete time settings for all the tests considered except Cauchy RT which is inapplicable in  discrete time settings. Results in the figures are provided for the cases of constant volatility (Figure \ref{figs2}); the ARCH cases with $\alpha=0.5773$ (Figure \ref{figs3}) and $\alpha=0.7325$ (Figure \ref{figs4}).

\begin{landscape}
\begin{figure}[h]%
\begin{center}%
\subfigure[$\bar{\kappa}=0$, $T=5$]{\includegraphics[width=0.30\linewidth]{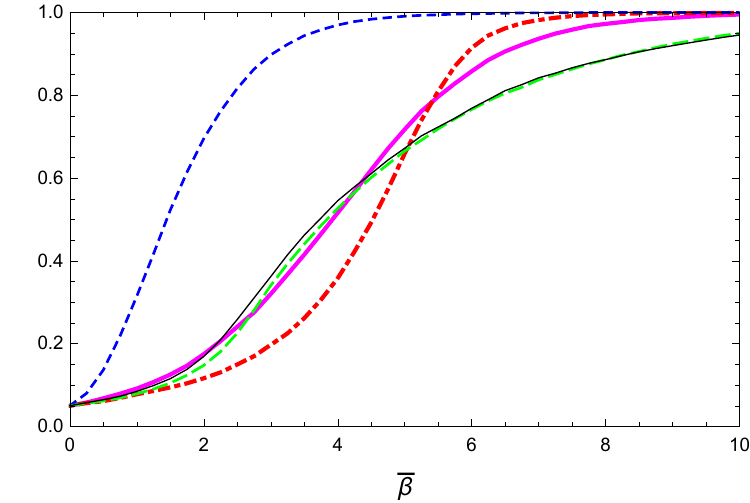}\label{fig:1:1}}
\subfigure[$\bar{\kappa}=0$, $T=20$]{\includegraphics[width=0.30\linewidth]{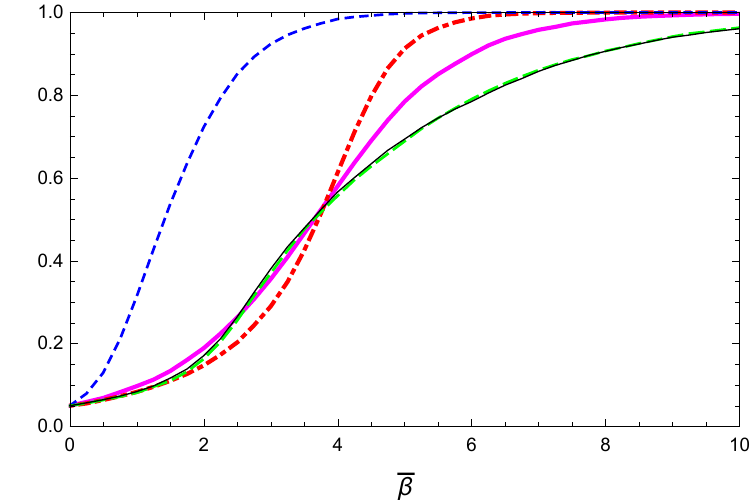}\label{fig:1:2}}
\subfigure[$\bar{\kappa}=0$, $T=50$]{\includegraphics[width=0.30\linewidth]{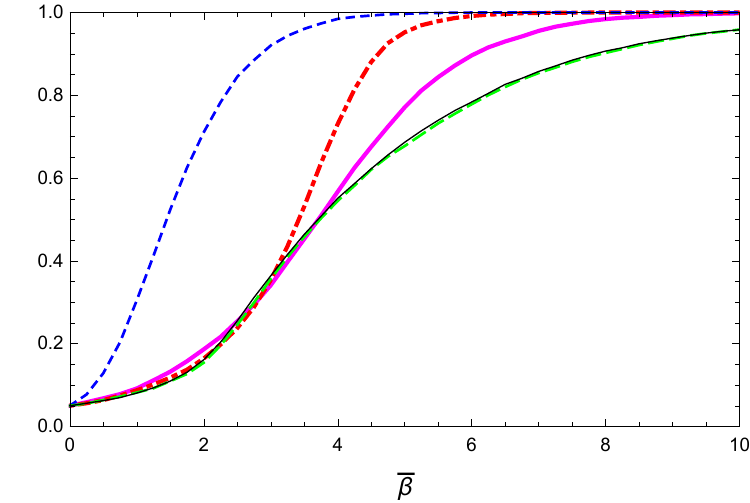}\label{fig:1:3}}\\
\subfigure[$\bar{\kappa}=5$, $T=5$]{\includegraphics[width=0.30\linewidth]{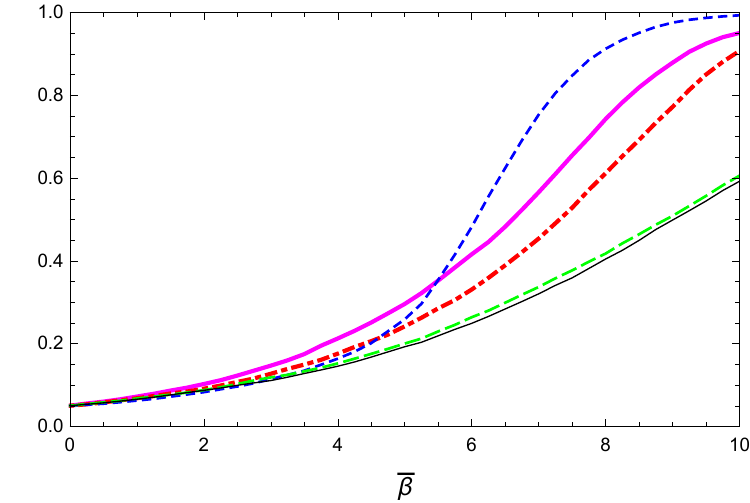}\label{fig:1:4}}
\subfigure[$\bar{\kappa}=5$, $T=20$]{\includegraphics[width=0.30\linewidth]{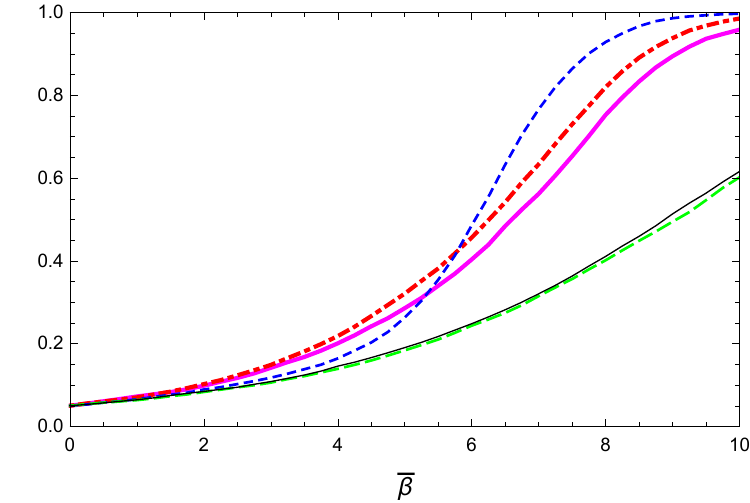}\label{fig:1:5}}
\subfigure[$\bar{\kappa}=5$, $T=50$]{\includegraphics[width=0.30\linewidth]{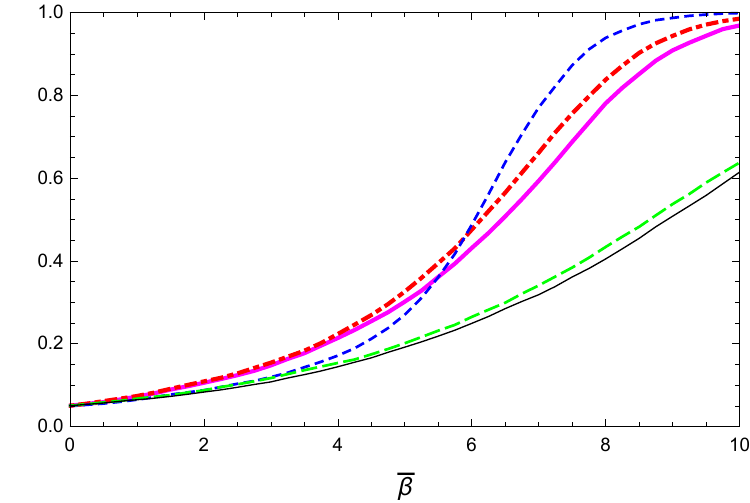}\label{fig:1:6}}\\
\subfigure[$\bar{\kappa}=20$, $T=5$]{\includegraphics[width=0.30\linewidth]{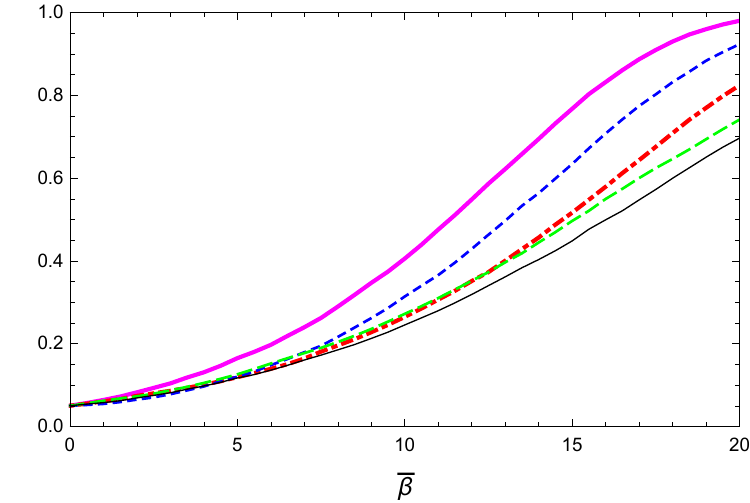}\label{fig:1:7}}
\subfigure[$\bar{\kappa}=20$, $T=20$]{\includegraphics[width=0.30\linewidth]{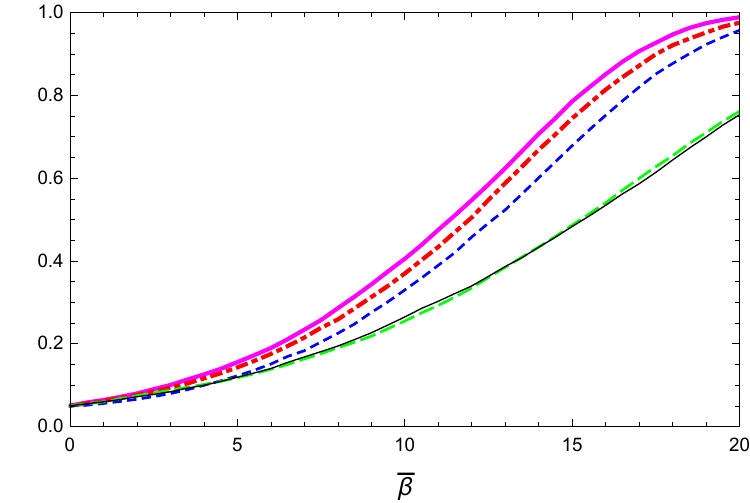}\label{fig:1:8}}
\subfigure[$\bar{\kappa}=20$, $T=50$]{\includegraphics[width=0.30\linewidth]{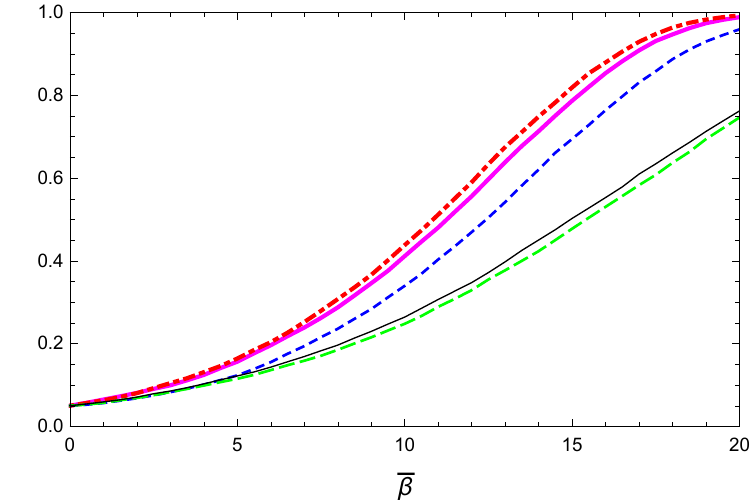}\label{fig:1:9}}
\end{center}%
\caption{Power for CNST (continuous time)}
\label{figs1}
\centering
\footnotesize{
OLS:$\textcolor{magenta}{\rule[0.25em]{2em}{1.6pt}\ }$,
Bonf. Q:$\textcolor{red}{\rule[0.25em]{0.6em}{1.7pt} \ \mathbf{\cdot} \ \rule[0.25em]{0.6em}{1.7pt} \ }$
RLRT:$\textcolor{blue}{\rule[0.25em]{0.4em}{1.6pt} \ \rule[0.25em]{0.4em}{1.6pt}\ }$, 
Cauchy RT:$\textcolor{green}{\rule[0.25em]{0.8em}{1.6pt} \ \rule[0.25em]{0.8em}{1.6pt}\ }$, 
NP:$\textcolor{black}{\rule[0.25em]{1.9em}{0.5pt}}$}
\end{figure}
\end{landscape}

\begin{landscape}
\begin{figure}[h]%
\begin{center}%
\subfigure[$\bar{\kappa}=0$, $T=60$]{\includegraphics[width=0.30\linewidth]{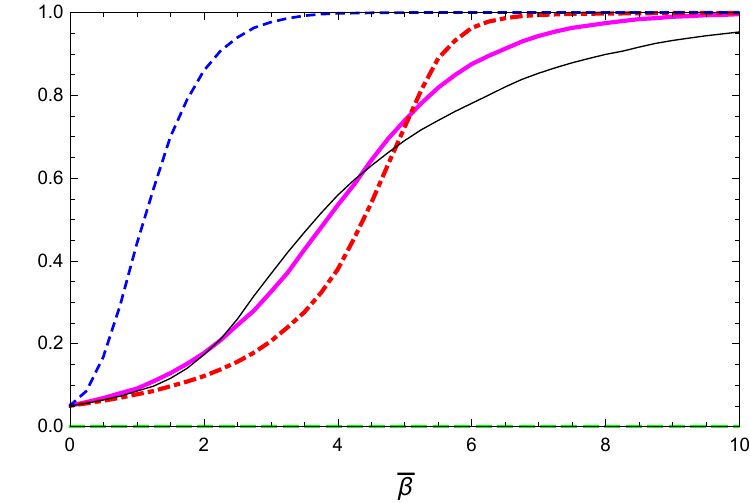}\label{fig:5:1}}
\subfigure[$\bar{\kappa}=0$, $T=240$]{\includegraphics[width=0.30\linewidth]{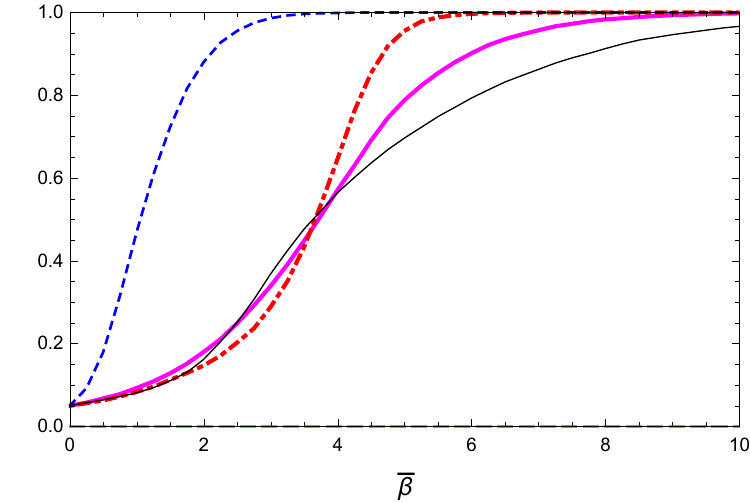}\label{fig:5:2}}
\subfigure[$\bar{\kappa}=0$, $T=600$]{\includegraphics[width=0.30\linewidth]{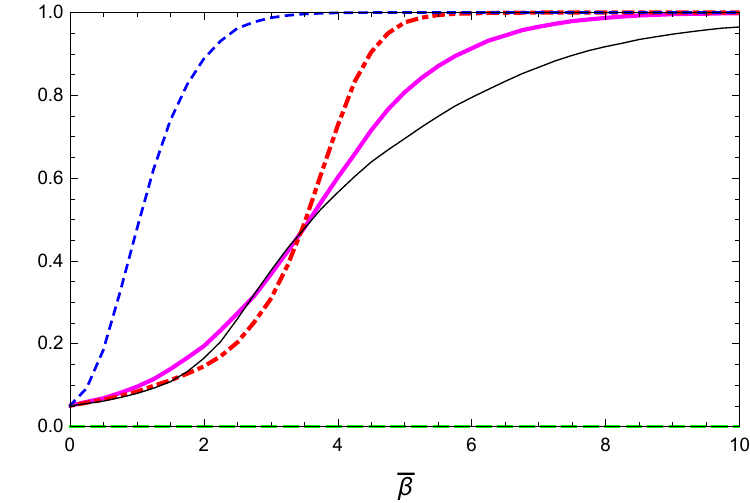}\label{fig:5:3}}\\
\subfigure[$\bar{\kappa}=5$, $T=60$]{\includegraphics[width=0.30\linewidth]{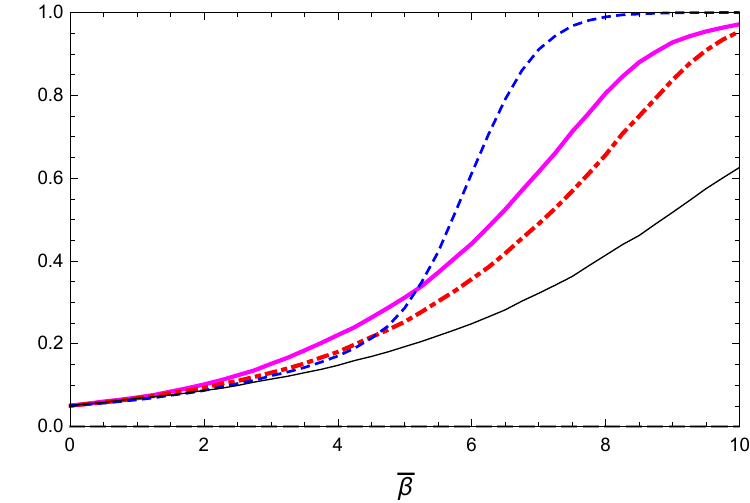}\label{fig:5:4}}
\subfigure[$\bar{\kappa}=5$, $T=240$]{\includegraphics[width=0.30\linewidth]{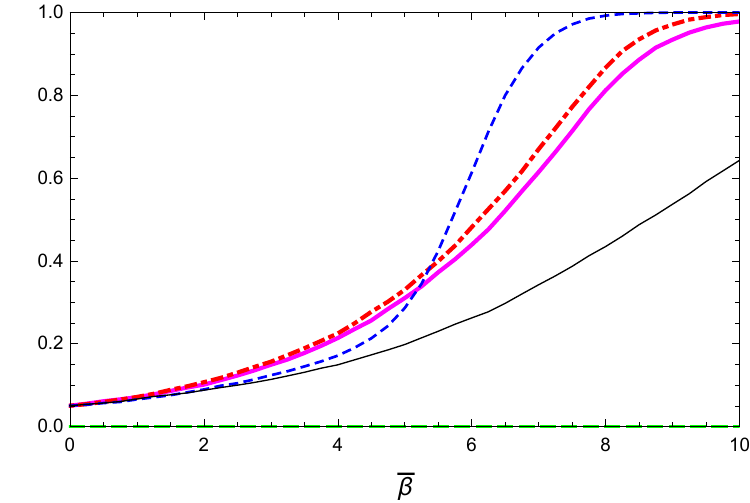}\label{fig:5:5}}
\subfigure[$\bar{\kappa}=5$, $T=600$]{\includegraphics[width=0.30\linewidth]{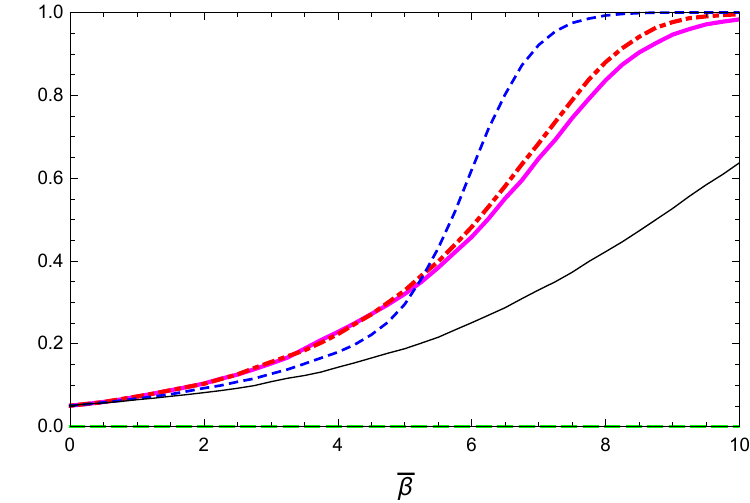}\label{fig:5:6}}\\
\subfigure[$\bar{\kappa}=20$, $T=60$]{\includegraphics[width=0.30\linewidth]{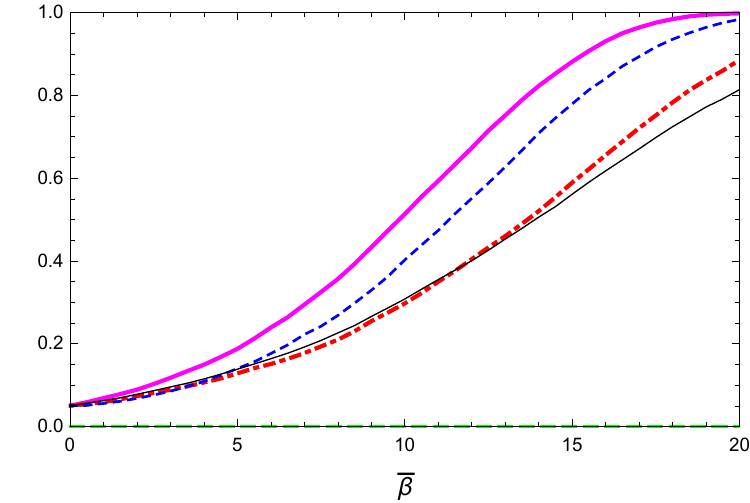}\label{fig:5:7}}
\subfigure[$\bar{\kappa}=20$, $T=240$]{\includegraphics[width=0.30\linewidth]{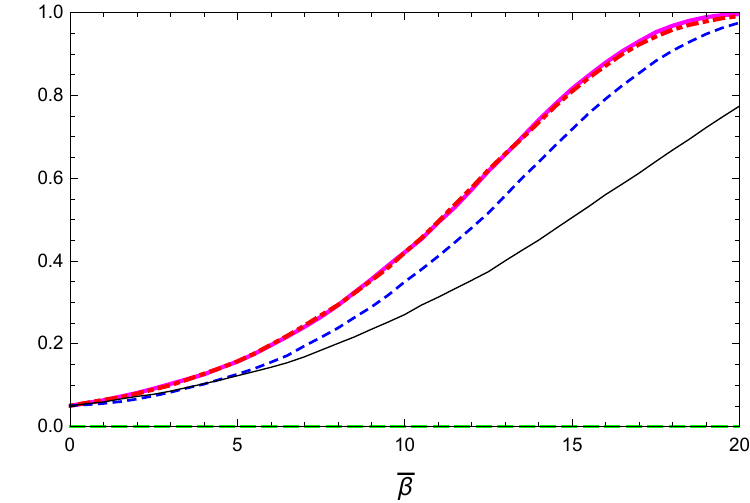}\label{fig:5:8}}
\subfigure[$\bar{\kappa}=20$, $T=600$]{\includegraphics[width=0.30\linewidth]{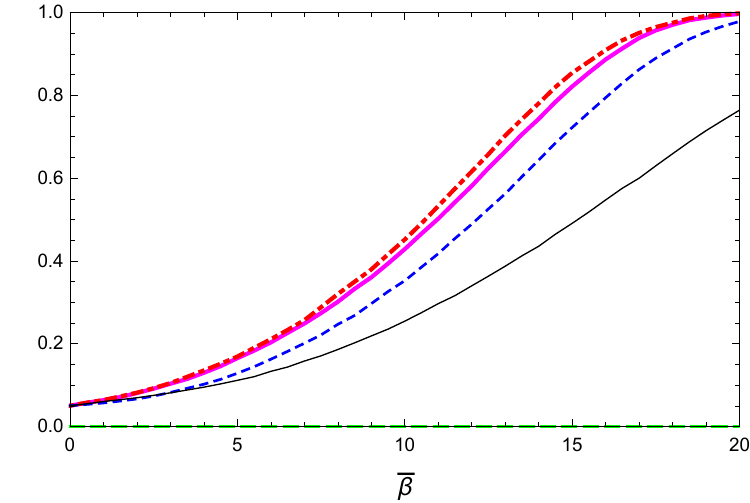}\label{fig:5:9}}
\end{center}%
\caption{Power for CNST (discrete time)}
\label{figs2}
\centering
\footnotesize{OLS:$\textcolor{magenta}{\rule[0.25em]{2em}{1.6pt}\ }$,
Bonf. Q:$\textcolor{red}{\rule[0.25em]{0.6em}{1.7pt} \ \mathbf{\cdot} \ \rule[0.25em]{0.6em}{1.7pt} \ }$
RLRT:$\textcolor{blue}{\rule[0.25em]{0.4em}{1.6pt} \ \rule[0.25em]{0.4em}{1.6pt}\ }$, 
NP:$\textcolor{black}{\rule[0.25em]{1.9em}{0.5pt}}$}
\end{figure}
\end{landscape}

\begin{landscape}
\begin{figure}[h]%
\begin{center}%
\subfigure[$\bar{\kappa}=0$, $T=60$]{\includegraphics[width=0.30\linewidth]{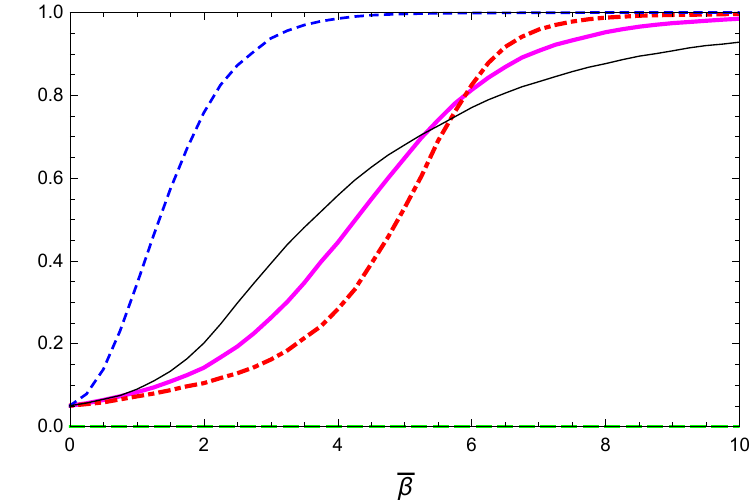}\label{fig:7:1}}
\subfigure[$\bar{\kappa}=0$, $T=240$]{\includegraphics[width=0.30\linewidth]{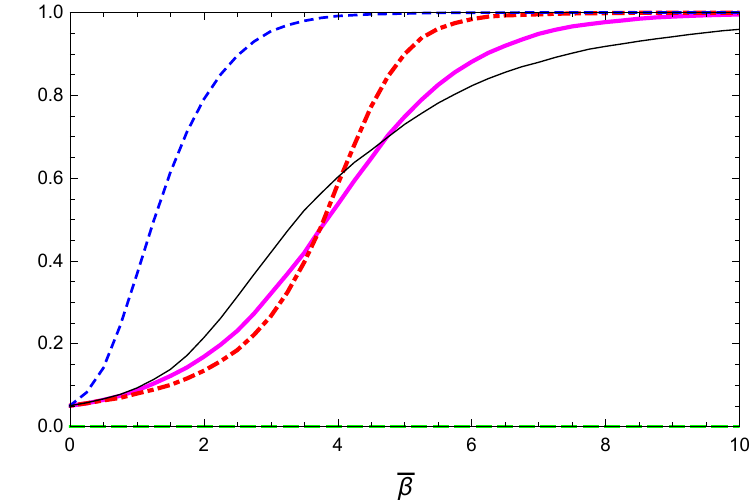}\label{fig:7:2}}
\subfigure[$\bar{\kappa}=0$, $T=600$]{\includegraphics[width=0.30\linewidth]{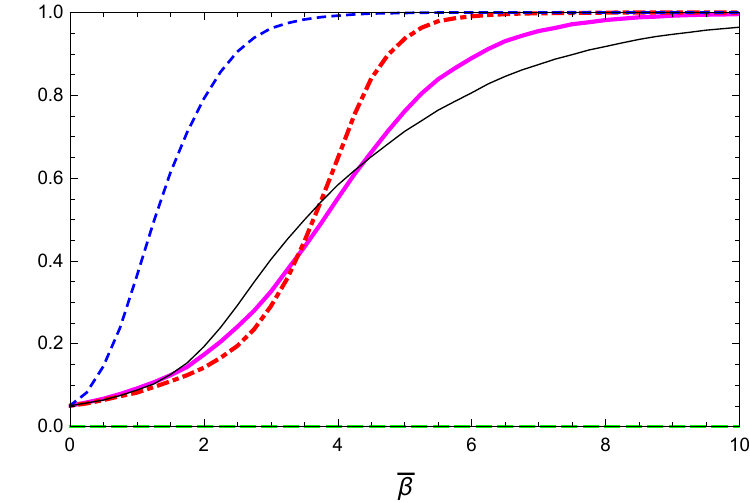}\label{fig:7:3}}\\
\subfigure[$\bar{\kappa}=5$, $T=60$]{\includegraphics[width=0.30\linewidth]{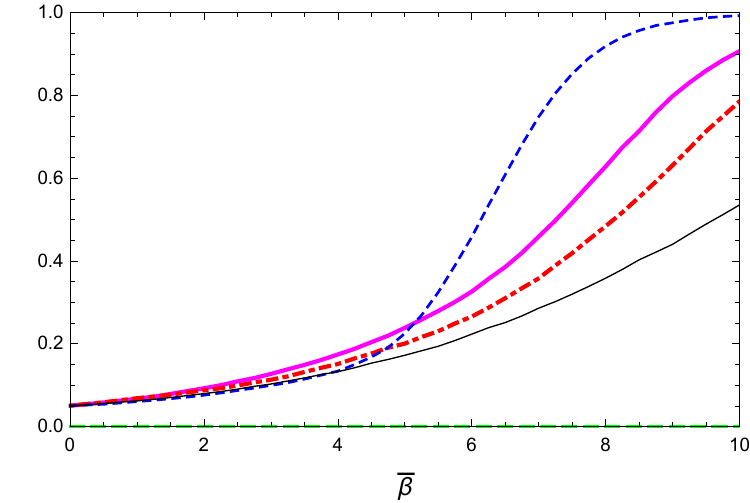}\label{fig:7:4}}
\subfigure[$\bar{\kappa}=5$, $T=240$]{\includegraphics[width=0.30\linewidth]{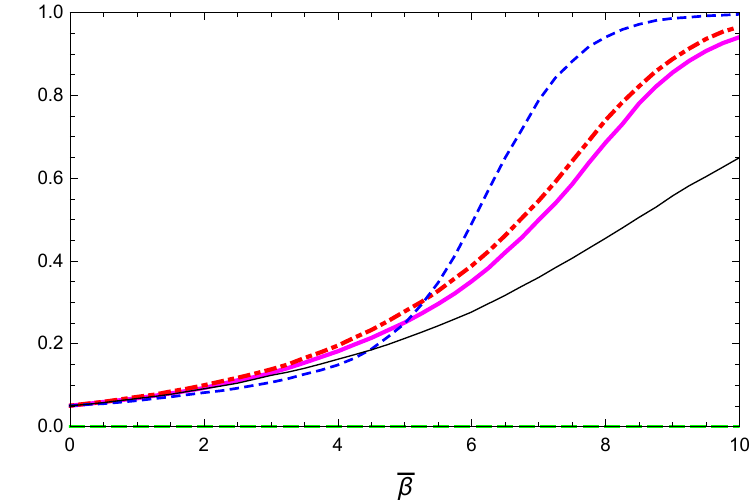}\label{fig:7:5}}
\subfigure[$\bar{\kappa}=5$, $T=600$]{\includegraphics[width=0.30\linewidth]{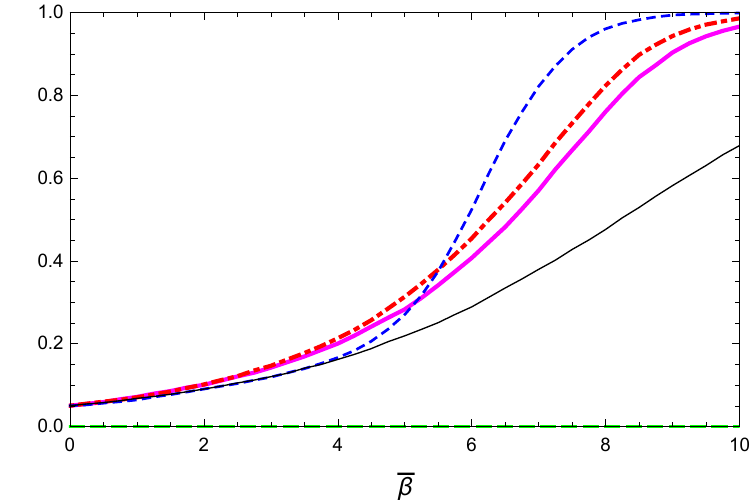}\label{fig:7:6}}\\
\subfigure[$\bar{\kappa}=20$, $T=60$]{\includegraphics[width=0.30\linewidth]{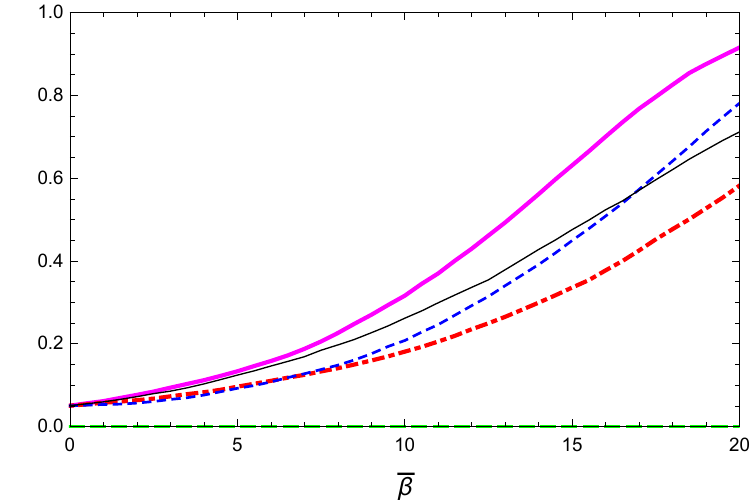}\label{fig:7:7}}
\subfigure[$\bar{\kappa}=20$, $T=240$]{\includegraphics[width=0.30\linewidth]{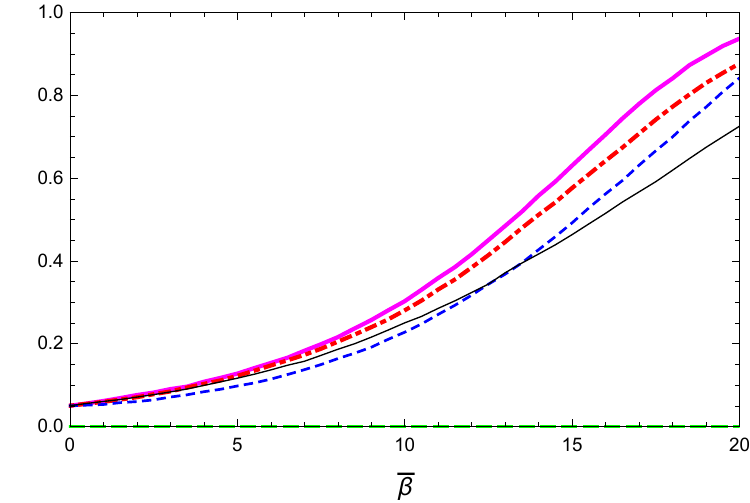}\label{fig:7:8}}
\subfigure[$\bar{\kappa}=20$, $T=600$]{\includegraphics[width=0.30\linewidth]{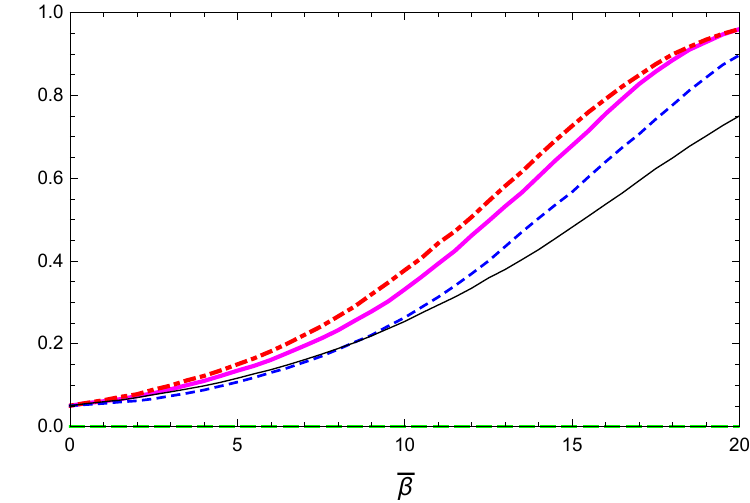}\label{fig:7:9}}
\end{center}%
\caption{Power for ARCH with $\alpha=0.5773$ (discrete time)}
\label{figs3}
\centering
\footnotesize{OLS:$\textcolor{magenta}{\rule[0.25em]{2em}{1.6pt}\ }$,
Bonf. Q:$\textcolor{red}{\rule[0.25em]{0.6em}{1.7pt} \ \mathbf{\cdot} \ \rule[0.25em]{0.6em}{1.7pt} \ }$
RLRT:$\textcolor{blue}{\rule[0.25em]{0.4em}{1.6pt} \ \rule[0.25em]{0.4em}{1.6pt}\ }$, 
NP:$\textcolor{black}{\rule[0.25em]{1.9em}{0.5pt}}$}
\end{figure}
\end{landscape}

\begin{landscape}
\begin{figure}[h]%
\begin{center}%
\subfigure[$\bar{\kappa}=0$, $T=60$]{\includegraphics[width=0.30\linewidth]{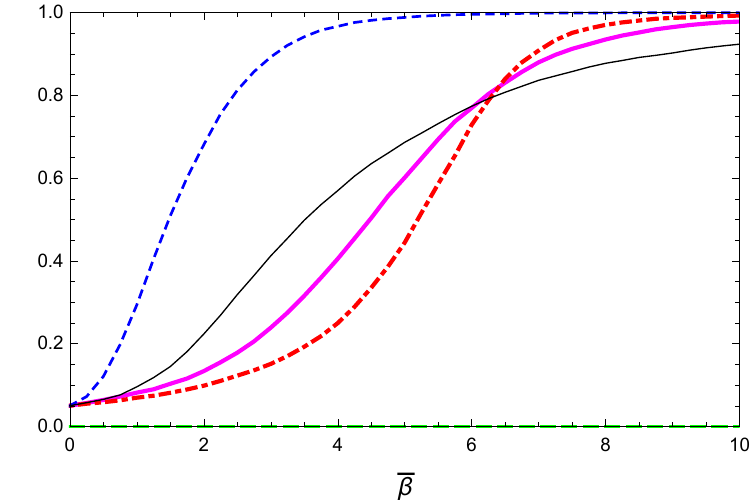}\label{fig:8:1}}
\subfigure[$\bar{\kappa}=0$, $T=240$]{\includegraphics[width=0.30\linewidth]{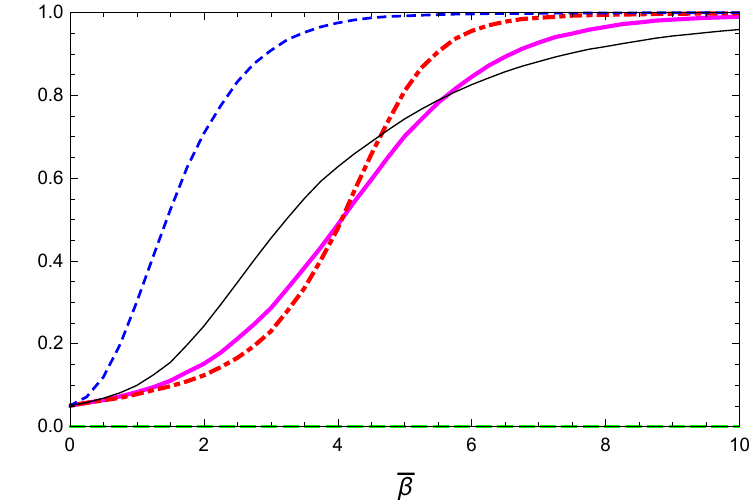}\label{fig:8:2}}
\subfigure[$\bar{\kappa}=0$, $T=600$]{\includegraphics[width=0.30\linewidth]{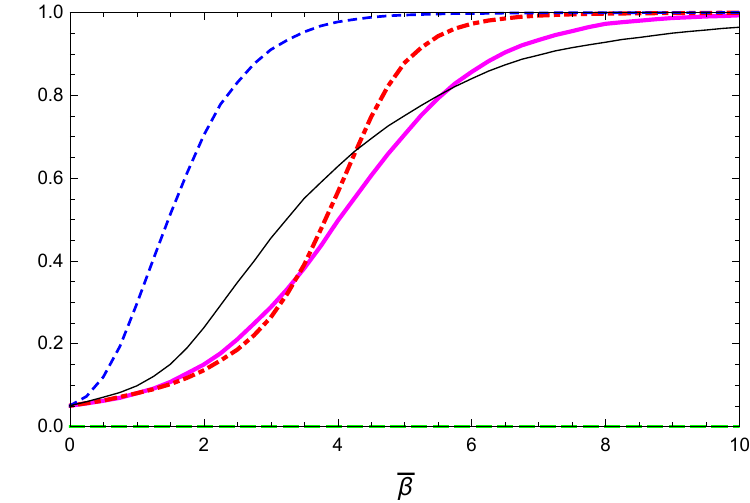}\label{fig:8:3}}\\
\subfigure[$\bar{\kappa}=5$, $T=60$]{\includegraphics[width=0.30\linewidth]{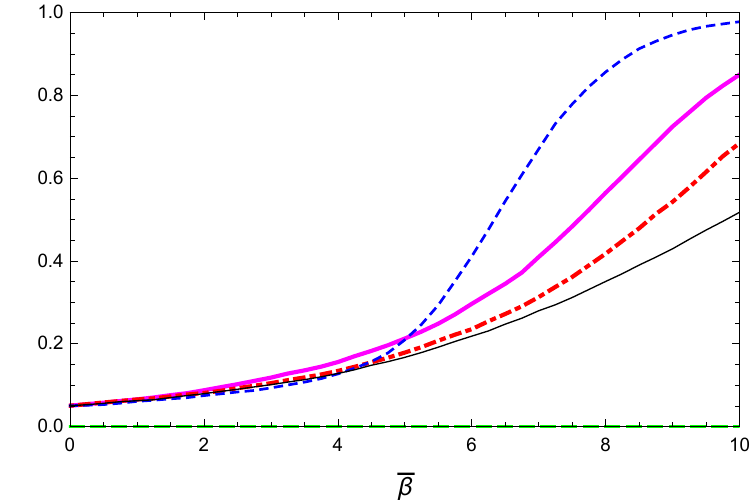}\label{fig:8:4}}
\subfigure[$\bar{\kappa}=5$, $T=240$]{\includegraphics[width=0.30\linewidth]{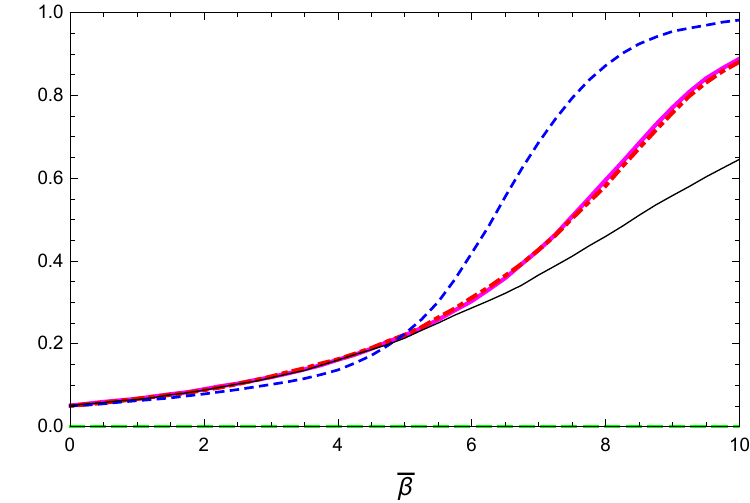}\label{fig:8:5}}
\subfigure[$\bar{\kappa}=5$, $T=600$]{\includegraphics[width=0.30\linewidth]{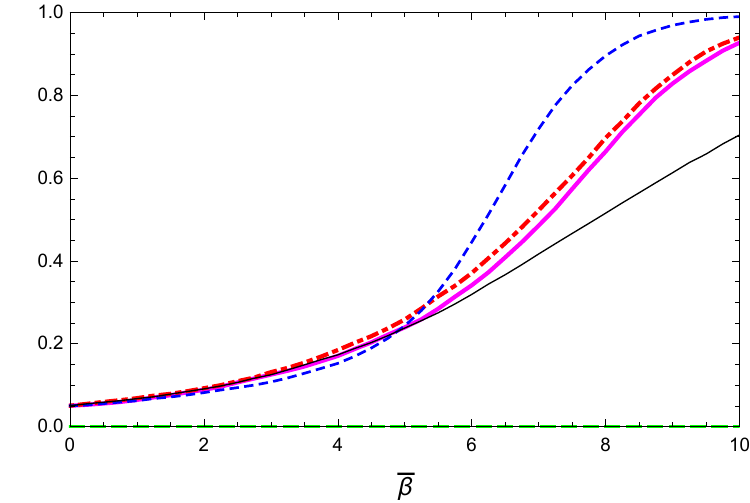}\label{fig:8:6}}\\
\subfigure[$\bar{\kappa}=20$, $T=60$]{\includegraphics[width=0.30\linewidth]{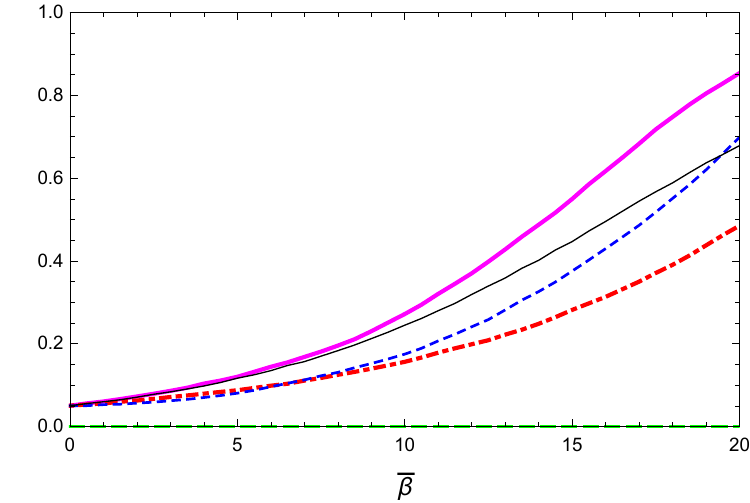}\label{fig:8:7}}
\subfigure[$\bar{\kappa}=20$, $T=240$]{\includegraphics[width=0.30\linewidth]{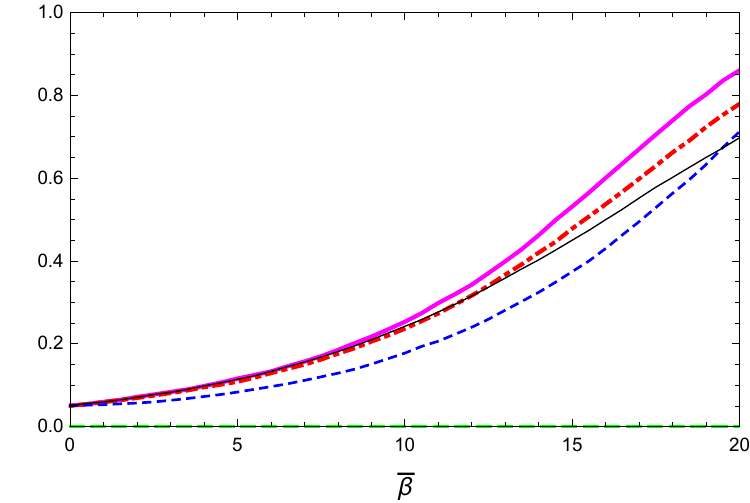}\label{fig:8:8}}
\subfigure[$\bar{\kappa}=20$, $T=600$]{\includegraphics[width=0.30\linewidth]{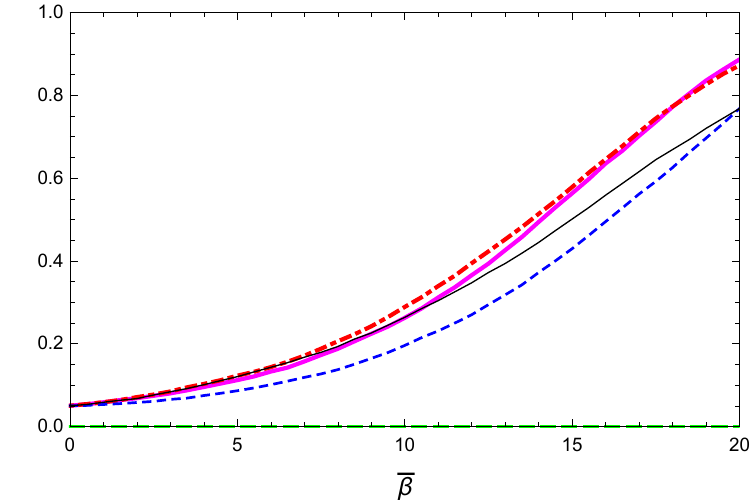}\label{fig:8:9}}
\end{center}%
\caption{Power for ARCH with $\alpha=0.7325$ (discrete time)}
\label{figs4}
\centering
\footnotesize{OLS:$\textcolor{magenta}{\rule[0.25em]{2em}{1.6pt}\ }$,
Bonf. Q:$\textcolor{red}{\rule[0.25em]{0.6em}{1.7pt} \ \mathbf{\cdot} \ \rule[0.25em]{0.6em}{1.7pt} \ }$
RLRT:$\textcolor{blue}{\rule[0.25em]{0.4em}{1.6pt} \ \rule[0.25em]{0.4em}{1.6pt}\ }$, 
NP:$\textcolor{black}{\rule[0.25em]{1.9em}{0.5pt}}$}
\end{figure}
\end{landscape}

\section*{References}

\begin{spacing}{1.5}
\noindent Applebaum, D. (2009), `L\'{e}vy Processes and Stochastic Calculus'. Cambridge University Press.\smallskip

\noindent Bercu, B., and Touati, A. (2008), `Exponential inequalities for self-normalized martingales with applications', \emph{Annals of Applied Probability} \textbf{18}, 1848--1869.\smallskip

\noindent Kurtz, T. G. and Protter, P. (1991), `Weak limit theorems for stochastic integrals and stochastic differential equations', \emph{Annals of Probability} \textbf{19}, 1035--1070.\smallskip

\noindent Hansen, B. E. (1992), `Convergence to stochastic integrals for dependent heterogeneous processes', \emph{Econometric Theory} \textbf{8}, 489--500.\smallskip

\noindent Vogt, M. (2012), `Nonparametric regression for locally stationary time series', \emph{Annals of Statistics} \textbf{40}, 2601--2633.
\end{spacing}

\end{document}